\def\eqref#1{equation~\ref{#1}}
\def\1{\bm{1}}
\DeclareMathAlphabet{\mathsfit}{\encodingdefault}{\sfdefault}{m}{sl}
\SetMathAlphabet{\mathsfit}{bold}{\encodingdefault}{\sfdefault}{bx}{n}
\newtheorem{theorem}{Theorem}
\newtheorem{corollary}{Corollary}[theorem]
\newtheorem{lemma}[theorem]{Lemma}
\newtheorem{definition}{Definition}
\newtheorem{remark}{Remark}
\newcommand\method{\textsc{PMark}\xspace}
\definecolor{CadetBlue}{HTML}{6D4E7E}
\definecolor{myred}{HTML}{DB432C}
\definecolor{myblue}{HTML}{427AB2}
\definecolor{mygreen}{HTML}{438870}
\definecolor{mypurple}{HTML}{B7617D}
\definecolor{myyellow}{HTML}{E4B112}
\definecolor{mygray}{HTML}{818181}
\definecolor{semgreen}{HTML}{56BA77}
\title{\method: Towards Robust and Distortion-free Semantic-level Watermarking with Channel Constraints}
\author{
Jiahao Huo$^{1}$,\;
Shuliang Liu$^{1,2}$,\;
Bin Wang$^{3}$,\;
Junyan Zhang$^{1,4}$,\\
\textbf{Yibo Yan}$^{1,2}$,\;
\textbf{Aiwei Liu}$^{5}$,\;
\textbf{Xuming Hu}$^{1,2*}$,\;
\textbf{Mingxun Zhou}$^{2}$\thanks{Co-corresponding authors: Mingxun Zhou and Xuming Hu.}\\
$^{1}$The Hong Kong University of Science and Technology (Guangzhou)\\
$^{2}$The Hong Kong University of Science and Technology\quad
$^{3}$Peking University\\
$^{4}$National University of Singapore\quad
$^{5}$Tsinghua University\\
\texttt{jiahaohuotj@gmail.com  } \texttt{mingxunz@ust.hk}
}
\def\expandafter\normalsize\expandafter{%
    \normalsize%
    \setlength\abovedisplayskip{2pt}%
    \setlength\belowdisplayskip{2pt}%
    \setlength\abovedisplayshortskip{-8pt}%
    \setlength\belowdisplayshortskip{2pt}%
}
\begin{document}
\maketitle
\vspace{-6mm}
\begin{abstract}
Semantic-level watermarking (SWM) for large language models (LLMs) enhances watermarking robustness against text modifications and paraphrasing attacks by treating the sentence as the fundamental unit. However, existing methods still lack strong theoretical guarantees of robustness, and reject-sampling-based generation often introduces significant distribution distortions compared with unwatermarked outputs. In this work, we introduce a new theoretical framework on SWM through the concept of proxy functions (PFs) -- functions that map sentences to scalar values. Building on this framework, we propose \method, a simple yet powerful SWM method that estimates the PF median for the next sentence dynamically through sampling while enforcing multiple PF constraints (which we call channels) to strengthen watermark evidence. Equipped with solid theoretical guarantees, \method achieves the desired distortion-free property and improves the robustness against paraphrasing-style attacks. We also provide an empirically optimized version that further removes the requirement for dynamical median estimation for better sampling efficiency. Experimental results show that \method consistently outperforms existing SWM baselines in both text quality and robustness, offering a more effective paradigm for detecting machine-generated text. Our code will be released at \href{https://github.com/PMark-repo/PMark}{this URL}.
\end{abstract}
\vspace{-2mm}
\section{Introduction}
The rapid advancements of generative AI (GenAI) techniques~\citep{gpt4,gemma,sd,deepseekv3,sora} have transformed content creation across diverse fields~\citep{ai4edu,ai4sci,ai4swe}, raising significant concerns regarding the traceability of AI-generated text and copyright protection~\citep{Zhao2024SoK,llm-gen2025survey}. Watermarking~\citep{kgw,exp}, which embeds distinctive patterns into generated content, has emerged as a critical solution to these challenges. 

Token-level watermarking schemes for text generation have been widely studied. The popular Green-Red scheme~\citep{kgw,unigram} biases token sampling toward a ``green'' subset, enabling watermark detection via statistical tests on the frequency of green tokens. However, Green-Red watermarking schemes are inherently \textit{biased}~\citep{unbiased}, meaning that they deviate from the original sampling distribution of LLMs and may degrade text quality~\citep{unbiased}. Distortion-free methods such as Gumbel Watermarking~\citep{exp} and PRC-based schemes~\citep{prf} have also been explored, typically associated with cryptographic techniques such as digital signatures~\citep{rivest1978digital}. Recently, some studies~\citep{tsur2025optimized,tsur2025heavywater} also discuss the best trade-off between detectability and text quality by maximizing the likelihood of watermark detection while minimizing the distortion of generated text.

Unfortunately, token-level watermarks can be easily removed: an attacker can simply ask an unwatermarked model to rephrase the generated text while preserving most of its semantic information~\citep{semstamp}. To improve robustness against such attacks, semantic-level watermarking (SWM) approaches like SemStamp~\citep{semstamp,ksemstamp} treat a semantically complete sentence as the fundamental watermarking unit. These methods employ rejection sampling to ensure that generated sentences fall within a valid semantic region of the embedding space, analogous to the green list in Green-Red watermarking. Nonetheless, this split-and-reject paradigm inherits the distortion drawback of Green-Red watermarking and also introduces additional weaknesses, such as sampling failures when all candidate sentences fall in invalid regions~\citep{Zhao2024SoK}.

In this work, we propose a new semantic-level watermarking scheme named \method with the distortion-free property. To achieve this goal, we first establish a solid framework for analyzing SWM schemes based on a core concept called the \textit{proxy function} (PF). Within this framework, \method defines the PF of a sentence as the cosine similarity between its embedding and a pre-defined random vector. The key idea behind \method is that, given knowledge of certain statistical properties of the PF distribution, we can quantify the probability mass of the valid region and perform a sampling process with a strong theoretical guarantee of being distortion-free. We further incorporate multiple channel constraints to enhance the density of watermarking evidence, as illustrated in Figure~\ref{fig:main}. Leveraging the fact that random vectors in high-dimensional spaces are almost always orthogonal~\citep{ruppert2004elements}, we show that even when simply estimating the median PF as zero, \method still achieves high text quality with an efficient sampling process. 

Empirically, our evaluation demonstrates that while \method preserves high text quality (PPL 4.37--4.71), it achieves significantly better robustness against paraphrasing attacks using GPT Paraphraser~\citep{openai2022chatgpt}, with improvements of up to 14.8\% and 44.6\% over the prior best semantic-level and token-level watermarking schemes, respectively. In terms of sampling efficiency, the online version of \method requires only $20\%$ of the resources (measured in token consumption) compared to prior SOTA semantic-level watermarking schemes, while the offline version exhibits even lower consumption, paving the way for real-world deployment of \method in practice.

\textbf{Key contributions of this paper include:}
\begin{itemize}[noitemsep, topsep=0pt]
    \item[\ding{182}] We propose a novel theoretical framework that unifies existing SWMs via the introduction of the proxy function, providing solid analytical foundations for performance evaluation.
    \item[\ding{183}] We identify that sparse watermarking evidence in SWMs negatively impacts adversarial robustness, and address this problem by introducing multiple channel constraints. 
    \item[\ding{184}] We introduce an online version of \method that achieves high robustness and is theoretically distortion free under mild prerequisites; we also present an offline version that reduces computational cost while maintaining low distortion.
    \item[\ding{185}] We conduct comprehensive experiments across a variety of text watermarking tasks across multiple datasets and backbones, validating the effectiveness and efficiency of our method.
\end{itemize}
\begin{figure}[t]
\centering
\includegraphics[width=0.98\textwidth]{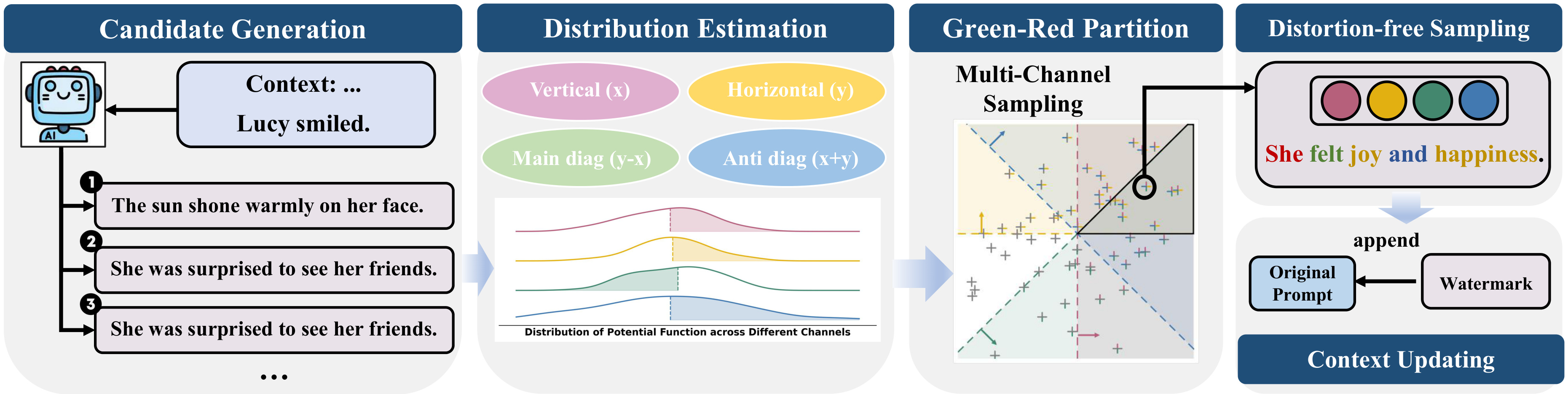}
\caption{Illustration of \method pipeline in 2D space, with robustness enhanced by multi-channel constraints. Note that we use orthogonal pivots and distortion-free partition in practice.}\label{fig:main}
\vspace{-6mm}
\end{figure}
\vspace{-2mm}
\section{Preliminaries}\label{sec:pre}
In this section, we delve into the problem of Semantic-level Watermarking (SWM), which treats semantically complete sentences as the fundamental unit. We begin by formalizing the problem and defining key concepts, inspired by \citet{unbiased}. See Appendix~\ref{app:prf} for detailed proofs.
\paragraph{Zero- and Multi-bit Watermarks.} We write $\mathcal{M}(\pi)\rightarrow y$ to denote the process of sampling a response $y$ from the model given an input prompt $\pi$. A watermark generation algorithm is denoted as $\text{Generation}_{k}^{\mathcal{M}}(\pi)\rightarrow y$, where $k$ is the watermark generation key~\citep{Zhao2024SoK}. A detection algorithm determines whether a given text $y$ is watermarked, denoted as $\text{Detect}(y)\rightarrow \{\text{True}, \text{False}\}$. In this work, we focus on \emph{zero-bit} watermarks, which embed a single detectable signal (e.g., True or False) into text. In contrast, the concurrent \textbf{SAEMark}~\citep{yu2025saemark} investigates \emph{multi-bit} watermarking, which embeds a message $\text{m}\in\{0,1\}^m$ into text~\citep{lau2024waterfall}. A detailed discussion of concurrent studies and related work can be found in Appendix~\ref{app:rw}.
\paragraph{Problem Setup.} Let $\Sigma$ denote the vocabulary set of LLM. We define $\Sigma^*$ as the set of all semantically complete sentences, including the null sentence of length zero. For any $s \in \Sigma^*$, let $\overline{s}$ denote its token sequence, $\overline{s} = [x_1^s, \dots, x_{|s|}^s] = \text{tokenize}(s)$.\\
At each generation step, the probability of producing the next token $x_{n+1} \in \Sigma$ given the current context $x_1, \dots, x_n$ is denoted by $P_M(x_{n+1} \mid x_1, \dots, x_n)$. The joint probability of generating a sequence of $m$ tokens $x_{n+1}, \dots, x_{n+m}$ is expressed as:
\begin{equation}
P_{M}(x_{n+1},\dots,x_{n+m}\mid x_1,\dots,x_n)=\prod_{i=1}^m P_{M}(x_{n+i}\mid x_1,\dots,x_n,x_{n+1},\dots,x_{n+i-1}).
\end{equation}
In the context of SWM, we focus on the probability distribution of the next sentence $s_{n+1} \in \Sigma^*$ given the preceding context $s_1, \dots, s_n$, denoted as:
\begin{equation}
P_{M}(s_{n+1} \mid s_1,\dots,s_n) = \prod_{i=1}^{|s|} P_{M}(x_i^s \mid \overline{s_1}, \dots, \overline{s_n}, x_1^s, \dots, x_{i-1}^s).
\end{equation}
Here $P_{M}(s):\Sigma^*\rightarrow [0,1]$ is a probability mass function over the countable set $\Sigma^*$.\\
For watermarking, we introduce a private key $k$ drawn from a key space $K$, where $k \in K$ is randomly sampled from a prior distribution $P_K(k)$. The watermarked output of the LLM follows a conditional distribution $P_{M}^w(s_{n+1} \mid s_1, \dots, s_n; k)$, which depends on the preceding context and $k$.

\paragraph{Sampling-based Semantic-level Watermarking.}\label{para:sampling} Directly modeling $P_{M}(s_{n+1} \mid s_1, \dots, s_n)$ is generally intractable since $\Sigma^*$ is infinite. A common approach adopted by SWM methods is to sample a set of i.i.d. candidate sentences $x_1, \dots, x_N \sim P_{M}(s_{n+1} \mid s_1, \dots, s_n)$, and then select a candidate $Y$ that carries watermark evidence, $Y_k = \text{SWM}_k(\{x_1, \dots, x_N\})$. The distribution of the watermarked output is then defined as the distribution of $Y$:
\begin{equation}
Y_k \sim P_{M}^w(s_{n+1} \mid s_1, \dots, s_n; k).
\end{equation}

\begin{definition}[\textbf{Single-sentence Distortion-Free}]
Given a context $\bm{\pi} = [s_1, \dots, s_n]$, we say that the watermarked LLM distribution is \emph{single-sentence distortion-free} relative to the original LLM $P_M$ if, for all $s_{n+1} \in \Sigma^*$,
\begin{equation}\label{eq:k_exp}
    P_{M}(s_{n+1} \mid \bm{\pi}) = \sum_{k \in K} P_K(k) \cdot P_{M}^w(s_{n+1} \mid \bm{\pi}; k).
\end{equation}
\end{definition}
This condition means that the marginal distribution of $s_{n+1}$, when averaged over all possible watermark keys $k$, is identical to its unwatermarked distribution under the original model. 

\begin{definition}[Equivalence via Watermark Code Space]
In most cases, $P_K(k)$ is designed to induce uniform random sampling over a watermark code space $E$, where $e_K \sim \text{Uniform}(E)$. Under this construction, the right side of \eqref{eq:k_exp} can be equivalently rewritten as:
\begin{equation}\label{eq:code_exp}
\sum_{k \in K} P_K(k) \cdot P_{M}^w(s_{n+1} \mid \bm{\pi}; k) = \sum_{e \in E} P(e_K = e) \cdot P_{M}^w(s_{n+1} \mid \bm{\pi}; e).
\end{equation}
\end{definition}

\begin{definition}[Probability Measure on $\Sigma^*$]
Let $(\Sigma^*, \mathcal{G}, \mu_M)$ be a probability space, where $\mathcal{G}$ is a $\sigma$-algebra over $\Sigma^*$, and $\mu_M$ is the probability measure induced by the model $P_M : \Sigma^* \rightarrow [0,1]$. For any measurable subset $A \subseteq \Sigma^*$, the measure $\mu_M(A)$ is defined as
\[
\mu_M(A) = \sum_{s \in A} P_M(s).
\]
\end{definition}

\section{Revisiting Semantic-level Watermarking}\label{sec:frame}
\subsection{Proxy Function}\label{sec:pf}
Section~\ref{para:sampling} discusses the sampling-based paradigm of SWM, which addresses the challenge of operating over the infinite set $\Sigma^*$. However, to enable rule-based selection among a candidate sentence set $x_1, \dots, x_N \in \Sigma^*$, SWM methods require a concrete mapping from the sentence space to the real numbers, since direct computation in the infinite textual space is generally intractable.

\paragraph{Proxy Function.} To this end, we define a function $\mathcal{F}: \Sigma^* \rightarrow \mathbb{R}$, referred to as the \emph{proxy function}, which assigns a real-valued score to each semantically complete sentence. Let $U$ denote the range of the proxy function $\mathcal{F}$. We define a partition of the range $U$ as $\mathcal{P}(U)$ (see Definition~\ref{def:partition}). Each element $u \in U$ belongs to exactly one subset $A_i \in \mathcal{P}(U)$, denoted as $A_i = \mathcal{P}_u(U)$.

We also define the inverse mapping of $\mathcal{F}$, denoted by $\mathcal{F}^{-1}: U \rightarrow 2^{\Sigma^*}$, where $2^{\Sigma^*}$ is the power set of $\Sigma^*$. Each value $u \in U$ thus maps to the set of sentences in $\Sigma^*$ that share the same proxy value under $\mathcal{F}$, denoted as $\mathcal{F}^{-1}(u)$. Furthermore, the watermark code space is denoted as $E \in 2^{\mathcal{P}(U)}$, whose element $e$ is selected by the watermarking scheme according to a prior distribution $e_K\sim P_K(E)$ induced by the key $P_K(k)$.

\subsection{Analysis of the Baseline}\label{sec:dis}
\paragraph{Proxy Function of SemStamp.} We first focus on a simplified generation process of SemStamp, where the original margin constraint is removed for clarity. In SemStamp~\citep{semstamp}, the proxy function is defined as:
\begin{equation}
\mathcal{F}(s) = \text{LSH}(\mathcal{T}(s)) = [\text{LSH}_1(\mathcal{T}(s)) \| \cdots \| \text{LSH}_h(\mathcal{T}(s))],
\end{equation}
where $\text{LSH}_i(v) = \text{sign}(t_i \cdot v)$~\citep{lsh1,lsh2}, $\{t_i \mid i=1, \dots, h\} \subset \mathbb{R}^d$ is a set of randomly initialized vectors, $\mathcal{T}$ is a text encoder with $\mathcal{T}(s): \Sigma^* \rightarrow \mathbb{R}^d$, and $[\cdot]$ denotes the transformation from binary representation to its corresponding decimal value. Given that the range of $\mathcal{F}$ is $U = \{1, \dots, 2^h\}$, the partition used in SemStamp is defined as $\mathcal{P}(U) = \{\{1\}, \dots, \{2^h\}\}$.

\paragraph{Sampling-then-selecting Process.} We now reformulate the rejection-sampling process of SemStamp as a sampling-then-selecting process, without loss of generality:
\begin{itemize}
    \item Given a context $\bm{\pi}$, SemStamp samples $N$ candidate sentences from the natural distribution: $W=\{x_1,\dots,x_N\},  x_i \sim P(s \mid \bm{\pi}), \quad i = 1, \dots, N$.
    \item For a fixed green ratio $\gamma$ such that $\gamma |\mathcal{P}(U)| \in \mathbb{N}_{+}$, the watermark code space is defined as $E = \{e \in 2^{\mathcal{P}(U)} \mid |e| = \gamma |\mathcal{P}(U)| \}$. The watermark code $e_K$ is sampled uniformly at random from $e_K\sim \text{Un}(E)$, implying that each element of $\mathcal{P}(U)$ is selected with probability $\gamma$. 
    The \emph{green region} of $\Sigma^*$ is then defined as
    $G = \{s \in \Sigma^* \mid \mathcal{F}(s) \in \bigcup e\}$, while the set of valid candidates is $V = \{x_i \mid \mathcal{F}(x_i) \in \bigcup e\}$. We assume that $V$ is always non-empty.
    \item A watermarked sentence $Y$ is then uniformly selected from $V$, where the distribution of the watermarked output is $Y\sim P_M^w(s\mid \bm{\pi})$.
\end{itemize}

\paragraph{Distortionary Distribution of Watermarked Output.}
Fix a context $\boldsymbol{\pi}$ and let $\mathcal F:\Sigma^*\to U$ be the proxy function with a finite range $U$ of size $M\triangleq|U|$. For $\forall u\in U$, define its natural mass:
\begin{equation}\label{eq:mass}
q(u)\triangleq\mu_M\big(\mathcal F^{-1}(u)\big)=\sum_{t\in\mathcal F^{-1}(u)} P_M(t\mid\boldsymbol{\pi}),
\qquad \sum_{u\in U}q(u)=1.
\end{equation}
Fix a green ratio $\gamma\in(0,1)$ with {$g=\gamma M\in\mathbb N_+$}. The code space consists of all size-$m$ subsets $S\subseteq U$, each chosen uniformly at random.  
For any $S\subseteq U$, define the green mass $q(S)\triangleq\sum_{v\in S}q(v)$.
\begin{lemma}[Probability Scaling of Green Region]\label{lem:green_scale}
For any fixed green set $S\subseteq U$ and any $N\ge 1$, the distribution of the output sentence $Y$ is independent of $N$ and equals the natural distribution conditioned on $\mathcal F\in S$. Specifically, for any $s\in\Sigma^*$ with $u=\mathcal F(s)$,
\begin{equation}\label{eq:scale}
P(Y=s\mid \boldsymbol{\pi},S)=
\begin{cases}
\displaystyle \dfrac{P_M(s\mid\boldsymbol{\pi})}{q(S)}, & u\in S,\\
0,& u\notin S.
\end{cases}
\end{equation}
\end{lemma}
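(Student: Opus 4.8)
The plan is to condition on the realized pattern of green candidates and exploit the exchangeability of the i.i.d.\ samples $x_1,\dots,x_N\sim P_M(\cdot\mid\boldsymbol{\pi})$. First I would dispose of the trivial branch: if $u=\mathcal{F}(s)\notin S$, then $s$ can never land in the valid set $V=\{x_i:\mathcal{F}(x_i)\in S\}$, so it is never selected as $Y$; hence $P(Y=s\mid\boldsymbol{\pi},S)=0$, matching the second line of \eqref{eq:scale}. All the work is in the case $u\in S$.

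For that case, write $p\triangleq q(S)=\mu_M(G)$ for the probability that a single candidate is green, where $G=\mathcal{F}^{-1}(S)=\{s:\mathcal{F}(s)\in S\}$ and $q(S)=\sum_{v\in S}\mu_M(\mathcal{F}^{-1}(v))=\mu_M(G)$. The key observation is that, conditioned on a candidate being green, its law is the natural distribution restricted to $G$: $P_M(x_i=s\mid x_i\in G,\boldsymbol{\pi})=P_M(s\mid\boldsymbol{\pi})/p$ for every $s\in G$. I would then condition on the index set $I\subseteq\{1,\dots,N\}$ of green candidates. Given $I$ with $|I|=j\ge 1$, the retained candidates $(x_i)_{i\in I}$ are i.i.d.\ draws from this green-conditioned law, and the scheme picks $Y$ uniformly from $V=\{x_i:i\in I\}$.

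The core computation is the elementary fact that a uniform pick among i.i.d.\ draws reproduces their common law: if $Z_1,\dots,Z_j$ are i.i.d.\ with law $\nu$ and $J$ is uniform on $\{1,\dots,j\}$ independent of them, then $P(Z_J=s)=\sum_{l}\frac{1}{j}\nu(s)=\nu(s)$. Applying this with $\nu(\cdot)=P_M(\cdot\mid\boldsymbol{\pi})/p$ gives $P(Y=s\mid I,\boldsymbol{\pi},S)=P_M(s\mid\boldsymbol{\pi})/p$ for every green configuration $I$ with $j\ge 1$. Since this value depends on neither $I$, $j$, nor $N$, averaging over all configurations with $|I|\ge 1$ — which is precisely the conditioning on the assumed event $V\neq\emptyset$ — leaves it unchanged, yielding $P(Y=s\mid\boldsymbol{\pi},S)=P_M(s\mid\boldsymbol{\pi})/q(S)$ and exhibiting independence from $N$.

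I expect the only delicate point to be the conditioning step: one must argue that fixing which candidates are green does not distort the within-$G$ law of the retained candidates, so that the $(x_i)_{i\in I}$ remain i.i.d.\ $\sim P_M(\cdot\mid\boldsymbol{\pi})/p$. This follows because the $x_i$ are independent and the membership event ``$x_i\in G$'' depends on $x_i$ alone, but it is worth spelling out. As an alternative route that sidesteps explicit conditioning on $I$, I would compute $P(Y=s,V\neq\emptyset)=N\,P_M(s\mid\boldsymbol{\pi})\,\E[1/(1+B)]$ with $B\sim\mathrm{Bin}(N-1,p)$ by symmetry over the selected index, invoke the identity $\E[1/(1+B)]=(1-(1-p)^{N})/(Np)$, and divide by $P(V\neq\emptyset)=1-(1-p)^{N}$; the $N$-dependent factor cancels, again recovering $P_M(s\mid\boldsymbol{\pi})/q(S)$ and confirming that the cancellation is exactly what makes the output law independent of $N$.
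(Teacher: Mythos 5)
Your proposal is correct, but it takes a genuinely different route from the paper. The paper's proof reinterprets the scheme as sequential rejection sampling -- draw one candidate at a time, reject while it is not green -- and sums the geometric series $\sum_{k\ge 0}\bigl(1-q(S)\bigr)^k P_M(s\mid\boldsymbol{\pi}) = P_M(s\mid\boldsymbol{\pi})/q(S)$, assuming $q(S)>0$. That is a two-line computation, but it quietly substitutes an unbounded rejection loop for the fixed-$N$ ``sample $N$, then select uniformly from $V$'' process of Section 3.2, so the claimed independence from $N$ holds only because $N$ never enters the calculation. Your argument, by contrast, works directly with the finite-$N$ process: you condition on the green index set $I$, observe that the retained candidates remain i.i.d.\ from the green-conditioned law (valid precisely because each membership event ``$x_i\in G$'' involves only $x_i$, the delicate point you correctly flag), invoke the fact that a uniform pick among i.i.d.\ draws reproduces their common law, and note that the resulting value $P_M(s\mid\boldsymbol{\pi})/q(S)$ is constant over all configurations with $|I|\ge 1$, so averaging over the non-emptiness event changes nothing. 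Your alternative computation via $\E[1/(1+B)]=\bigl(1-(1-p)^N\bigr)/(Np)$ for $B\sim\mathrm{Bin}(N-1,p)$, divided by $P(V\neq\emptyset)=1-(1-p)^N$, makes the cancellation of the $N$-dependence explicit and also makes precise what ``we assume $V$ is always non-empty'' must mean (conditioning on that event). In short: the paper's proof buys brevity at the cost of proving the lemma for a slightly different process; yours proves the lemma as actually stated, with the $N$-independence established rather than assumed away.
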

Averaging over the uniformly random choice of green set $S$ yields a closed-form expression:
\begin{theorem}[Closed-form of Watermarked PMF]\label{thm:swm_pmf}
For any $s\in\Sigma^*$ with $u=\mathcal F(s)\in U$,
\begin{equation}\label{eq:sem_pmf}
P_M^w(s\mid\boldsymbol{\pi})
= P_M(s\mid\boldsymbol{\pi})\cdot
\frac{1}{\binom{M}{{g}}}
\sum_{\substack{S\subseteq U\\|S|=m,u\in S}}
\frac{1}{q(S)}.
\end{equation}
where $q(u)$ is defined in~\eqref{eq:mass}, and $M,g,u,U,\Sigma^*,\mathcal F, \mathcal F^{-1}$ are known quantities.
\end{theorem}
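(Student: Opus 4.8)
The plan is to obtain the watermarked PMF as the average, over the random green set, of the fixed-green-set distribution already characterized in Lemma~\ref{lem:green_scale}; concretely, I would apply the law of total probability over the randomness in $S$. By construction of the sampling-then-selecting process, the watermark code $e_K$ picks a size-$m$ subset $S\subseteq U$ uniformly at random, so each admissible $S$ carries probability $1/\binom{M}{m}$. Conditioning on $S$ and marginalizing therefore gives
\begin{equation}
P_M^w(s\mid\boldsymbol{\pi})=\sum_{\substack{S\subseteq U\\|S|=m}}\frac{1}{\binom{M}{m}}\,P(Y=s\mid\boldsymbol{\pi},S),
\end{equation}
which is exactly the key-averaging form on the right-hand side of the distortion-free identity \eqref{eq:code_exp}, now specialized to the uniform measure on size-$m$ subsets.

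Next I would substitute the closed form from Lemma~\ref{lem:green_scale} into each summand. Since $P(Y=s\mid\boldsymbol{\pi},S)=0$ whenever $u=\mathcal F(s)\notin S$, every subset not containing $u$ contributes zero, so the combinatorial sum collapses to those $S$ with $u\in S$. On this restricted index set the summand equals $P_M(s\mid\boldsymbol{\pi})/q(S)$, and because $P_M(s\mid\boldsymbol{\pi})$ does not depend on $S$ it factors out of the sum, yielding precisely \eqref{eq:sem_pmf}. The independence from $N$ is inherited directly from the same property in Lemma~\ref{lem:green_scale}.

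The only point requiring genuine care is the well-definedness of the terms $1/q(S)$. Here I would observe that whenever $u\in S$ we have $q(S)\ge q(u)\ge P_M(s\mid\boldsymbol{\pi})$ by the definition of the natural mass in \eqref{eq:mass}; hence if $P_M(s\mid\boldsymbol{\pi})>0$ then $q(S)>0$ for every $S$ appearing in the restricted sum and all reciprocals are finite, whereas if $P_M(s\mid\boldsymbol{\pi})=0$ both sides of \eqref{eq:sem_pmf} vanish and the identity holds trivially. The assumed non-emptiness of $V$ ensures the conditional distribution in Lemma~\ref{lem:green_scale} is itself well-posed.

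I do not expect a substantive obstacle: given Lemma~\ref{lem:green_scale}, the theorem is a direct averaging identity, and the only things to get right are the bookkeeping of the indicator $\mathbf{1}[u\in S]$ that restricts the combinatorial sum and the finiteness check above. If anything, the subtlety worth flagging explicitly is the conceptual step of identifying the watermarked marginal with the uniform $S$-average, i.e. matching the abstract key prior $P_K$ to the concrete uniform distribution over size-$m$ subsets, so that the total-probability expansion lines up with the distortion-free definition.
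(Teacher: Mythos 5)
Your proposal is correct and follows essentially the same route as the paper: expand $P_M^w(s\mid\boldsymbol{\pi})$ as the uniform average over size-$m$ green sets via \eqref{eq:code_exp}, substitute the fixed-$S$ conditional distribution from Lemma~\ref{lem:green_scale} so that subsets not containing $u$ vanish, and factor $P_M(s\mid\boldsymbol{\pi})$ out of the remaining sum. Your added finiteness check ($q(S)\ge q(u)\ge P_M(s\mid\boldsymbol{\pi})$ when $u\in S$, with the identity holding trivially if $P_M(s\mid\boldsymbol{\pi})=0$) is a small refinement of the paper's blanket assumption $q(S)>0$, not a different argument.
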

\begin{corollary}\label{cry:dis_free}
$P_M^w(s\mid\boldsymbol{\pi})$ is distortion-free if and only if $q(u)=\tfrac{1}{M}$ for all $u\in U$.
\end{corollary}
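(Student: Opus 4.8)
The plan is to reduce the biconditional to a statement purely about the mass vector $q=(q(u))_{u\in U}$ and then analyze the resulting algebraic system. By definition, $P_M^w(\cdot\mid\boldsymbol{\pi})$ is distortion-free exactly when $P_M^w(s\mid\boldsymbol{\pi})=P_M(s\mid\boldsymbol{\pi})$ for every $s\in\Sigma^*$. Substituting the closed form of Theorem~\ref{thm:swm_pmf} and cancelling the common factor $P_M(s\mid\boldsymbol{\pi})$, this holds for every $s$ with $P_M(s\mid\boldsymbol{\pi})>0$ iff the scalar
\[
c(u)\triangleq\frac{1}{\binom{M}{m}}\sum_{\substack{S\subseteq U\\|S|=m,\,u\in S}}\frac{1}{q(S)}
\]
equals $1$ for every $u$ in the support of $q$. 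So the corollary is equivalent to: the system $\{c(u)=1\}_{u\in U}$, together with $q\ge 0$ and $\sum_u q(u)=1$, is solved uniquely by $q\equiv 1/M$. I will treat the full-support case $q(u)>0$ as the main line and dispose of zero masses separately.

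For the ``if'' direction I would simply substitute. If $q(u)=1/M$ for all $u$, then every size-$m$ set has $q(S)=m/M=\gamma$, and the number of size-$m$ sets containing a fixed $u$ is $\binom{M-1}{m-1}$; hence $c(u)=\binom{M-1}{m-1}/(\gamma\binom{M}{m})$, which equals $1$ by the identity $\binom{M-1}{m-1}/\binom{M}{m}=m/M=\gamma$. Thus distortion-freeness follows by direct computation.

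The only-if direction is the crux. Assuming full support, I would sum the $M$ equations $c(u)=1$ over $u\in U$ and interchange the order of summation: each size-$m$ set $S$ is counted once for every $u\in S$, i.e. $m$ times, giving $\sum_{|S|=m}1/q(S)=\tfrac{M}{m}\binom{M}{m}=\binom{M}{m}/\gamma$. Reading this as an expectation over a uniformly random size-$m$ set $S$, we obtain $\E_S[1/q(S)]=1/\gamma$. A separate double count (each $q(u)$ appears in $\binom{M-1}{m-1}$ sets) gives $\sum_{|S|=m}q(S)=\binom{M-1}{m-1}$, i.e. $\E_S[q(S)]=\gamma$. Since $x\mapsto 1/x$ is strictly convex on $(0,\infty)$, Jensen's inequality yields $\E_S[1/q(S)]\ge 1/\E_S[q(S)]=1/\gamma$, with equality iff $q(S)$ is constant across all size-$m$ sets. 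As equality holds, every $q(S)$ equals $\gamma$. A final one-element swap closes the argument: for any $u\ne v$ choose a size-$m$ set $S$ with $u\in S$, $v\notin S$ (possible since $1\le m\le M-1$ because $\gamma\in(0,1)$), and compare $q(S)$ with $q\big((S\setminus\{u\})\cup\{v\}\big)$; both equal $\gamma$, so $q(u)=q(v)$. Hence $q$ is constant, and $\sum_u q(u)=1$ forces $q\equiv 1/M$.

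I expect the equality case of Jensen/AM--HM to be the main obstacle, since the whole argument hinges on recognizing the summation-and-interchange identity that pins $\E_S[1/q(S)]$ exactly to its Jensen lower bound. The remaining loose end is the possibility of zero masses: if some $q(u_0)=0$ I would argue directly that $c(u)=1$ cannot hold on the support (a small example with $M=2$, $m=1$ already exhibits $c(u)<1$ there), so distortion-freeness itself forces full support and the above analysis applies; alternatively one may simply posit full support as a mild regularity condition, which is natural since $P_M$ places positive probability on every reachable sentence.
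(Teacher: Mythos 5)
Your main argument (the full\hbox{-}support case) is correct, and it takes a genuinely different route from the paper's. The paper proves necessity by a \emph{pairwise} comparison: it writes $A(u)-A(v)$ as a sum over $T\subseteq U\setminus\{u,v\}$, $|T|=m-1$, of terms $\frac{1}{q(u)+q(T)}-\frac{1}{q(v)+q(T)}$ and invokes strict monotonicity of $x\mapsto 1/(x+a)$ to conclude that $A(u)=A(v)$ forces $q(u)=q(v)$. Your route is \emph{global}: you use only the single aggregate identity $\sum_{u\in U}c(u)=M$, which after interchanging summation pins $\mathbb{E}_S[1/q(S)]$ to $1/\gamma=1/\mathbb{E}_S[q(S)]$, and then the equality case of Jensen (strict convexity of $1/x$, each size-$m$ set carrying positive probability $1/\binom{M}{m}$) forces $q(S)\equiv\gamma$; the one-element swap (valid since $\gamma\in(0,1)$ gives $1\le m\le M-1$) then yields constancy of $q$. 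Both arguments are sound. The paper's buys elementarity (only monotonicity, no convexity) and the extra fact that $A$ reverses the ordering of $q$; yours buys the stronger conclusion that uniformity is already forced by the single averaged constraint $\sum_u c(u)=M$ rather than by all $M$ equations separately. Your ``if'' direction coincides with the paper's.

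The genuine gap is in your disposal of zero masses. The claimed direct argument --- that some $q(u_0)=0$ makes $c(u)=1$ impossible on the support --- is false in general. Take $M=3$, $m=2$, $q=(\tfrac12,\tfrac12,0)$: every size-$2$ set has positive mass (so rejection sampling and the formula of Theorem~\ref{thm:swm_pmf} are well defined), and $c(u_1)=\tfrac13\left(1+2\right)=1=c(u_2)$ while $c(u_3)=\tfrac43$; since $q(u_3)=0$, every sentence mapping to $u_3$ has zero model probability, so $P_M^w(s\mid\boldsymbol{\pi})=P_M(s\mid\boldsymbol{\pi})$ for \emph{all} $s$, i.e.\ the scheme is distortion-free with a non-uniform $q$. (Your $M=2$, $m=1$ example does not generalize, and in fact it violates the standing assumption $q(S)>0$ under which the watermarked distribution is defined.) So the ``only if'' direction of the corollary is itself false without a full-support hypothesis, and your fallback --- positing $q(u)>0$ for all $u$ as a regularity condition --- is not merely an alternative but necessary. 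Note that the paper's own proof carries the same unstated requirement: it assumes $A(u)=1$ for every $u\in U$, whereas distortion-freeness as defined only delivers this on the support of $q$. You should delete the claimed direct argument and keep the assumption.
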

However, the condition in Corollary~\ref{cry:dis_free} is not guaranteed in SemStamp, whose embedding space is randomly partitioned. Similar conclusions apply to other SWM methods such as k-SemStamp and SimMark, with further analysis provided in Appendix~\ref{app:rand} and~\ref{app:frame}.

\subsection{Single-channel Distortion-free Sampling}\label{sec:1channel}
In this section, we present a toy example of a distortion-free sampling scheme for next-sentence generation, serving as a foundation for subsequent sections.

\paragraph{Proxy Function via Pivot Vector.} 
Given a text encoder $\mathcal{T} : \Sigma^* \rightarrow \mathbb{R}^d$ and a fixed pivot vector $v \in \mathbb{R}^d$, we define the proxy function as $\mathcal{F}_v(s) = \langle v, \mathcal T(s) \rangle= \frac{v \cdot \mathcal{T}(s)}{\|v\| \cdot \|\mathcal{T}(s)\|}$, which denotes the cosine similarity between the pivot vector and the encoded sentence. The range of $\mathcal{F}_v$ is $[-1, 1]$.

\paragraph{Median-based Sampling.} 
Given a context $\bm{\pi}$, we sample $N$ i.i.d. candidate sentences from the natural distribution: $W = \{x_1, \dots, x_N\},  x_i \sim P_M(s \mid \bm{\pi})$, where $N$ is an even integer. The proxy function $\mathcal{F}_v$ assigns each candidate a scalar score
$F = \{f_i = \mathcal{F}_v(x_i) \mid i = 1, \dots, N\}$.
Let $m_v$ be the median of $F$. We partition $W$ into two equal-sized subsets:
\begin{equation}
F_{\text{upper}} = \{x_i \mid f_i \geq m_v\}, \quad F_{\text{lower}} = \{x_i \mid f_i < m_v\},
\end{equation}
with ties broken randomly to ensure balance.

Define the watermark code space as $E = \{F_{\text{upper}}, F_{\text{lower}}\}$. A random seed $k \in \{0, 1\}$ is sampled uniformly, and the selected code is $e_K = 
\begin{cases}
F_{\text{upper}}, & \text{if } k = 1, \\
F_{\text{lower}}, & \text{if } k = 0.
\end{cases}$\\
The valid sentence set is $V = e_K$, and the next sentence is sampled uniformly from $V$: $Y \sim \text{Un}(V)$. We denote the operation that selects the half-partition $V$ from the whole set $W$ based on random seed $k$ and proxy function $\mathcal F_v$ as $V=\mathcal D(W\mid k,\mathcal F_v), |V|=|W|/2$.

\begin{theorem}[Distortion-free on a Single Channel]
For any context $\bm{\pi}$, sample $N$ i.i.d.\ candidates $W=\{x_1,\dots,x_N\}$ from $P_M(\cdot\mid\bm{\pi})$, and form half-sets $e_0,e_1$ by the median rule. Draw $k\sim\mathrm{Unif}\{0,1\}$ and then sample $Y$ uniformly from $e_k$. Then $P_M^w(s\mid\bm{\pi})=P_M(s\mid\bm{\pi})$.
\end{theorem}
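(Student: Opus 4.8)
The plan is to compute the watermarked probability $P_M^w(s\mid\bm{\pi}) = P(Y=s\mid\bm{\pi})$ directly by conditioning on the sampled candidate tuple $W=(x_1,\dots,x_N)$, and to show that the selection step is \emph{position-uniform}: regardless of the proxy values $f_i$ or of how ties are resolved, the index $i$ of the returned candidate $Y=x_i$ is uniform on $\{1,\dots,N\}$. Once this is established, distortion-freeness follows from linearity of expectation together with the i.i.d.\ assumption on the candidates. The crux is that the selection probability must be shown to be \emph{independent of the proxy scores entirely}, so that $\mathcal{F}_v$ and the median convention never enter the final marginal.

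First I would analyze the selection step conditioned on a fixed candidate tuple $W$ and a fixed outcome of the tie-breaking randomness. The median rule, together with random tie-breaking, produces an ordered pair of half-sets $(e_0,e_1)$ with $|e_0|=|e_1|=N/2$ that together partition the $N$ positions. Drawing $k\sim\mathrm{Unif}\{0,1\}$ selects $V=e_k$ of size $N/2$, after which $Y\sim\mathrm{Unif}(V)$. Hence for every position $i$,
\begin{equation}
P(Y=x_i\mid W)=\tfrac{1}{2}\cdot\tfrac{1}{N/2}=\tfrac{1}{N},
\end{equation}
because position $i$ lies in exactly one of $e_0,e_1$, that half is chosen with probability $\tfrac12$, and $x_i$ is then drawn with probability $\tfrac{1}{N/2}$. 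The decisive observation is that this value does not depend on the proxy scores and is identical for every realization of the tie-breaking; therefore the same identity survives after marginalizing over the tie-breaking randomness.

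It follows that, conditioned on $W$, the returned sentence satisfies $P(Y=s\mid W)=\tfrac{1}{N}\sum_{i=1}^N \mathbf{1}\{x_i=s\}$, where I count \emph{positions} rather than distinct sentence values so that repeated candidates are handled correctly. Marginalizing over the draw of $W$ and using that each $x_i\sim P_M(\cdot\mid\bm{\pi})$ i.i.d., linearity of expectation yields
\begin{equation}
P_M^w(s\mid\bm{\pi})=\mathbb{E}_W\!\left[\tfrac{1}{N}\sum_{i=1}^N \mathbf{1}\{x_i=s\}\right]=\tfrac{1}{N}\sum_{i=1}^N P_M(s\mid\bm{\pi})=P_M(s\mid\bm{\pi}),
\end{equation}
which is exactly the claim.

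I expect the only delicate point to be the bookkeeping around repeated sentences and random tie-breaking: one must argue over candidate \emph{positions} $i\in\{1,\dots,N\}$ rather than over distinct values in $\Sigma^*$, and verify that the position-uniform identity holds pointwise in the tie-breaking randomness so that the median threshold $m_v$ never enters the final probability. Everything else reduces to linearity of expectation and the i.i.d.\ property of the candidate pool, and the argument is uniform in the (even) sample size $N$.
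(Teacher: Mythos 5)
Your proposal is correct and follows essentially the same route as the paper's proof: both establish the position-uniform identity $P(Y=x_i\mid W)=\tfrac{1}{2}\cdot\tfrac{2}{N}=\tfrac{1}{N}$ (since each candidate lies in exactly one half-set), and then marginalize over the i.i.d.\ draw of $W$ by linearity. Your write-up is in fact slightly more careful than the paper's --- summing indicators over candidate \emph{positions} to handle repeated sentences and checking the identity pointwise in the tie-breaking randomness --- but these are refinements of bookkeeping, not a different argument.
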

\vspace{-2mm}
\begin{proof}
For any $x_i\in W$, we have
\begin{align*}
P(Y=x_i\mid W)
= \sum_{k\in\{0,1\}} P(k)P(Y=x_i\mid W,k)
= \tfrac12\left(\tfrac{2}{N}\mathbf 1\{x_i\in e_0\}+\tfrac{2}{N}\mathbf 1\{x_i\in e_1\}\right)
= \tfrac{1}{N}.
\end{align*}
Averaging over the randomness of $W$ yields
\begin{align*}
P_M^w(s\mid\bm{\pi})=\sum_{i=1}^{N}P(x_i=s)P(Y=x_i\mid W)=P_M(s\mid\bm{\pi}),
\end{align*}
which proves that the single-channel sampling method above is distortion-free.
\end{proof}

\section{Proposed Method}
\begin{figure*}[t]
\centering
\begin{minipage}{0.395\textwidth}
\begin{algorithm}[H]
\small
\SetAlgoLined
\LinesNumbered
\SetKwInOut{Input}{Input}
\SetKwInOut{Output}{Output}
\caption{\\ \method Online Generation}
\label{alg:PMark_online_gen}
\Input{
LLM $M$; prompt $s^{(0)}$;\\encoder $\mathcal{T}$; $T$; $b$;\\random seeds $R=\{r_{(t,j)}\}$;\\sample budget $N$;
}
\Output{$s^{(1)}, \dots, s^{(T)}$}

Initialize $b$ orthogonal pivots $v^{(1)},\dots,v^{(b)}$;

\For{$t \leftarrow 1$ \KwTo $T$}{
  Sample $W^{(t)}=\{x^{(1)},\dots,x^{(N)}\}$ with $x^{(i)} \sim P_M(s\mid s^{(0:t-1)})$;
  
  $V^{(0)} \gets W^{(t)}$;
  
  \For{$j \leftarrow 1$ \KwTo $b$}{
    $\mathcal{F}_j(s)=\langle v^{(j)}, \mathcal{T}(s) \rangle$;

    $V^{(j)} \gets \mathcal{D}\!\left(V^{(j-1)} \mid r_{(t,j)}, \mathcal{F}_j\right)$ on channel $v^{(j)}$;
  }
  
  Select $s^{(t)} \sim \text{Un}\!\left(V^{(b)}\right)$;
}
\textbf{return} $s^{(1)}, \dots, s^{(T)}$;
\end{algorithm}
\end{minipage}
\hfill
\begin{minipage}{0.59\textwidth}
\begin{algorithm}[H]
\small
\SetAlgoLined
\LinesNumbered
\SetKwInOut{Input}{Input}
\SetKwInOut{Output}{Output}
\caption{\method Online Detection}
\label{alg:PMark_online_detect}
\Input{
$S=[s^{(0)},\dots,s^{(T)}]$; $M$; $\mathcal T$;$v^{(1\ldots b)}$; $R=\{r_{(t,j)}\}$; $N$; $\alpha$; threshold $\delta$; Smooth Factor $K$
}
\Output{\texttt{True} or \texttt{False}}

$N_g \gets 0$;\quad $N_{\text{total}} \gets b \cdot T$;

\For{$t \leftarrow 1$ \KwTo $T$}{
  Resample $W'_t=\{x^{(1)},\dots,x^{(N)}\}$ with $x^{(i)} \sim P_M(s\mid s^{(0:t-1)})$;
  
  \For{$j \leftarrow 1$ \KwTo $b$}{
    Define $\mathcal{F}_j(s)=\langle v^{(j)}, \mathcal{T}(s) \rangle$;\quad
    $\widehat m_{(t,j)} \gets \operatorname{HDMedian}\!\big(\{\mathcal{F}_j(x^{(i)})\}_{i=1}^N\big)$;
    
    $x_{(t,j)} \gets \mathcal{F}_j(s^{(t)})$;
    
    \uIf{$r_{(t,j)}=1$ \textbf{ and } $x_{(t,j)} > \widehat m_{(t,j)} - \delta$}{$c_{(t,j)} \gets 1$}
    \uElseIf{$r_{(t,j)}=0$ \textbf{ and } $x_{(t,j)} < \widehat m_{(t,j)} + \delta$}{$c_{(t,j)} \gets 1$}
    \Else{$c_{(t,j)} \gets \exp(-K|x_{(t,j)} - \widehat m_{(t,j)}|)$}
    
    $N_g \gets N_g + c_{(t,j)}$;
  }
}

$z \gets \dfrac{\left|N_g - 0.5\,N_{\text{total}}\right|}{\sqrt{0.25\,N_{\text{total}}}}$;\quad
\textbf{return} $(\,z > z_\alpha\,)$;
\end{algorithm}
\end{minipage}
\vspace{-2mm}
\caption{\method Online Watermarking. \textbf{Left}: Multi-channel constrained generation; \textbf{Right}: Detection via soft-$z$-test.}
\label{alg:online}
\vspace{-0.7cm}
\end{figure*}

\subsection{Multi-channel Constrained Generation (Online)}\label{sec:gen_on}
\paragraph{Single-Channel Robustness.} While Section~\ref{sec:1channel} introduces a simple distortion-free sampling strategy for SWM, it remains vulnerable to paraphrasing attacks. 
\begin{theorem}[Semantic Robustness on Single Channel]\label{thm:robust}
Let $s$ be a watermarked sentence and $v \in \mathbb{R}^d$ the pivot vector. Suppose an attacker $\mathcal{A}: \Sigma^* \rightarrow \Sigma^*$ modifies $s$ such that $1 - \langle \mathcal{T}(\mathcal{A}(s)), \mathcal{T}(s) \rangle \leq d$.
Then the probability that the watermark evidence of $s$ is removed by such an attack is bounded by
\begin{equation}
P_{\mathrm{rm}} 
\leq \mu_M\left( s \in \Sigma^* \middle| \mathcal{F}_v(s) \in [m_v - \sqrt{2d}, m_v + \sqrt{2d}] \right).
\end{equation}
\end{theorem}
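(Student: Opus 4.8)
The plan is to reduce the statement to a deterministic Lipschitz-type bound on the proxy function $\mathcal{F}_v$ under the attacker's semantic constraint, and then pass to probabilities using the distortion-free property established earlier. The core observation is that the only way the attack can remove the watermark evidence on channel $v$ is to push the proxy value $\mathcal{F}_v$ across the median $m_v$; if I can bound how far $\mathcal{F}_v$ can move, I can confine the removable sentences to a thin band around $m_v$ and then measure that band.

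First I would make the geometry explicit. Writing $a = \mathcal{T}(s)/\|\mathcal{T}(s)\|$, $b = \mathcal{T}(\mathcal{A}(s))/\|\mathcal{T}(\mathcal{A}(s))\|$, and $\hat v = v/\|v\|$ for the unit-normalized embeddings and pivot, the attacker's constraint $1 - \langle \mathcal{T}(\mathcal{A}(s)), \mathcal{T}(s)\rangle \le d$ is exactly $\langle a, b\rangle \ge 1 - d$ (recall $\langle\cdot,\cdot\rangle$ denotes cosine similarity). Since $a,b$ are unit vectors, this gives the chord bound $\|a - b\|^2 = 2 - 2\langle a,b\rangle \le 2d$. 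The proxy function evaluates the linear functional $\langle \hat v, \cdot\rangle$ at the normalized embedding, so by Cauchy--Schwarz
\[
|\mathcal{F}_v(\mathcal{A}(s)) - \mathcal{F}_v(s)| = |\langle \hat v,\, b - a\rangle| \le \|\hat v\|\,\|b - a\| \le \sqrt{2d}.
\]
This is the key inequality: no admissible attack can shift the proxy value by more than $\sqrt{2d}$.

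Next I would translate this displacement bound into a removal condition. The watermark evidence for $s$ on channel $v$ is that $\mathcal{F}_v(s)$ lies on the side of $m_v$ dictated by the seed, and it is removed precisely when $\mathcal{F}_v(\mathcal{A}(s))$ lands on the opposite side, i.e. when the attack moves the value across $m_v$. Because the move is at most $\sqrt{2d}$, such a crossing is impossible unless $\mathcal{F}_v(s)$ already lies within $\sqrt{2d}$ of the median, that is, $\mathcal{F}_v(s) \in [m_v - \sqrt{2d},\, m_v + \sqrt{2d}]$. Hence the removal event is contained in this boundary event. Taking probability over the generation of $s$ and invoking the single-channel distortion-free property (so that $s$ is distributed according to $P_M$), the probability of the boundary event is exactly the natural mass $\mu_M(\{s : \mathcal{F}_v(s)\in[m_v-\sqrt{2d},\, m_v+\sqrt{2d}]\})$, which yields the claimed bound.

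The main obstacle is less in the calculation than in stating the quantification cleanly: the displacement inequality must hold uniformly over every attacker $\mathcal{A}$ meeting the cosine constraint, so the $\sqrt{2d}$ bound is genuinely worst-case, and it is the containment ``removal $\Rightarrow$ boundary band'' that converts this worst case into a probability statement. I would also take care that $\langle\cdot,\cdot\rangle$ is read as cosine similarity consistently (matching the definition of $\mathcal{F}_v$), since the factor in $\sqrt{2d}$ comes entirely from the unit-sphere identity $\|a-b\|^2 = 2(1-\langle a,b\rangle)$; any stray normalization would change the constant.
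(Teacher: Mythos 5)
Your proof is correct and follows essentially the same route as the paper's: normalize the embeddings and pivot to unit vectors, use the chord identity $\|\vec x'-\vec x\|^2 = 2-2\langle \vec x',\vec x\rangle \le 2d$ together with Cauchy--Schwarz to bound the proxy displacement by $\sqrt{2d}$, and then observe that a sign flip across the median is possible only when the original score lies in the band $[m_v-\sqrt{2d},\,m_v+\sqrt{2d}]$, whose natural mass bounds $P_{\mathrm{rm}}$. Your additional remark invoking the distortion-free property to justify measuring the band under $\mu_M$ is a harmless clarification of a step the paper leaves implicit.
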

The robustness of the entire watermarked text $S=[s_1,\ldots,s_n]$ depends on the specific detection scheme, with a detailed analysis available in~\cite{watermax}. Importantly, compared with token-level baselines where each token contributes one bit of watermark evidence, the evidence density of SWM in the single-channel setting is extremely sparse. This sparsity results in a significant degradation in detectability under adversarial attacks. To address this limitation, we extend the paradigm to incorporate multiple channel constraints.

\paragraph{Multi-Channel Sampling.} Given a pre-defined set of $b$ orthogonal vectors $v_1,\cdots, v_b$, we refer to each pivot as a \emph{channel}. The proxy function defined on $v_i$ is $\mathcal F_{v_i}(s)=\langle v_i, \mathcal T(s) \rangle$, also denoted as $\mathcal F_i$ for simplicity.
Specifically, these pivots can be generated from the QR decomposition of a matrix. Since this approach relies on dynamic median estimation, we refer to it as the online \method.
\begin{enumerate}
    \item Given a context $\bm{\pi}$, \method samples $N$ candidate sentences from the natural distribution: $W=\{x_1,\ldots,x_N\}, x_i \sim P(s \mid \bm{\pi}),  i = 1, \dots, N$.
    \item For $i=1$ to $b$, update $V_i=\mathcal D(V_{i-1}\mid k_i,\mathcal F_i)$, where $V_0=W$ and $k_i\sim \text{Un}(\{0,1\})$. 
    \item A watermarked sentence $Y$ is then uniformly sampled from $V_b$, so that the distribution of the watermarked output is $Y\sim P_M^w(s\mid \bm{\pi})$.
\end{enumerate}
\begin{remark}[Multi-channel Distortion-free]
It is straightforward to verify that such multi-channel sampling is distortion-free, since for $\forall x_j\in V_i$, $P(x_j\in V_{i+1})=\tfrac{1}{2}$ and $P(Y=x\mid x\in W)=1/N$.
\end{remark}
The complete process for generating watermarked texts is described in Algorithm~\ref{alg:PMark_online_gen}. A theoretical analysis outlining the advantages of multi-channel sampling is presented in Appendix~\ref{app:prf}.

\subsection{Soft-$z$-Test Detection (Online)}
We now introduce a robust detection scheme for text watermarked with the online \method.
\begin{enumerate}
    \item \textbf{Sentence Segmentation.}  
    The input text is segmented into sentences $S = [s_0, s_1, \ldots, s_T]$ to enable subsequent processing.
    \item \textbf{Median Reconstruction via Sampling.} 
    For each sentence $s_t$ $(1 \leq t \leq T)$, resample $W'_t=\{x_1,\ldots,x_N\}$ i.i.d. from $P_M(s_{t}\mid s_{0:t-1})$.
    \item \textbf{Online Median Estimation.}  
    For each pivot vector $v_j$ $(1 \leq j \leq b)$, estimate the median $m_{(t,j)}'$ of the values $\mathcal{F}_j(W_t')$ using the Harrell–Davis estimator~\citep{hdmedian}.
    \item \textbf{Smooth Counting of Watermark Evidence.}  
    To mitigate discrepancies between the medians estimated during generation and detection, we introduce a smooth counting mechanism inspired by~\citep{simmark}. Specifically, given a margin threshold $\delta > 0$ and a smoothing factor $K > 0$, the watermark evidence contributed by $s_t$ along axis $v_j$ is defined as
    \begin{equation*}
    c_{(t,j)} =
    \begin{cases}
    1, \text{if } (r_{(t,j)}=1 \wedge \mathcal{F}_j(s_t) > m_{(t,j)}' - \delta)\vee (r_{(t,j)}=0 \wedge \mathcal{F}_j(s_t) < m_{(t,j)}' + \delta), \\
    e^{-K|\mathcal{F}_j(s_t) - m_{(t,j)}'|}, \text{otherwise}.
    \end{cases}
    \end{equation*}
    \item \textbf{Robust Detection via Soft-$z$-Test.} 
    Finally, to determine whether the input text was generated by \method, we apply a soft-count-based $z$-test. The test statistic is computed as $z = \frac{\left|\sum_{t=1}^{T} \sum_{j=1}^{b} c_{(t,j)} - 0.5bT\right|}{\sqrt{bT \cdot 0.5 \cdot 0.5}}$,
    where $b \cdot T$ is the total number of potential watermark bits across channels.
\end{enumerate}
The complete description of the online detection scheme is provided in Algorithm~\ref{alg:PMark_online_detect}. 

\subsection{Offline Partition with a Prior Threshold}
\paragraph{Concentration of Measure.}
In practice, we find the range of $\mathcal{F}$ is confined to a small interval around zero, $[-\epsilon, \epsilon]$ with $\epsilon \le 0.08$ in most cases. This observation is consistent with Lemma~\ref{lem:theta}:
\begin{lemma}[Orthogonality]\label{lem:theta}
Let $x,y$ be random vectors drawn uniformly from the unit sphere $S^{d-1}\subset\mathbb{R}^d$. Then the angle $\theta$ between them has density
\begin{equation}
p_d(\theta)=\frac{\Gamma\left(\frac{d}{2}\right)}{\Gamma\left(\frac{d-1}{2}\right)\sqrt{\pi}}\sin^{d-2}\theta \quad (\theta\in[0,\pi]).
\end{equation}
\end{lemma}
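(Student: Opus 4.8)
The plan is to exploit the rotational invariance of the uniform distribution on $S^{d-1}$ to reduce the two-vector problem to a one-vector problem, and then read off the density of the polar angle from the standard factorization of the spherical surface measure, fixing the constant by a Beta-integral normalization.

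First I would fix $x$. Since the uniform measure on $S^{d-1}$ is invariant under the orthogonal group $O(d)$, and since $\theta$ depends on the pair only through $\langle x,y\rangle$, the distribution of $\theta$ is unchanged if I rotate so that $x=e_1$. Then $\cos\theta=\langle e_1,y\rangle=y_1$, so $\theta\in[0,\pi]$ is exactly the colatitude of $y$ measured from the north pole $e_1$, and it suffices to find the density of this angle when $y\sim\mathrm{Unif}(S^{d-1})$.

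The central step is the coordinate decomposition $y=(\cos\theta,\,\sin\theta\,\omega)$ with $\omega\in S^{d-2}$, under which the $(d-1)$-dimensional surface measure factorizes as $d\sigma_{d-1}(y)=\sin^{d-2}\theta\,d\theta\,d\sigma_{d-2}(\omega)$. The factor $\sin^{d-2}\theta$ arises because the latitude slice at colatitude $\theta$ is a copy of $S^{d-2}$ of radius $\sin\theta$, whose area element is scaled by $(\sin\theta)^{d-2}$; equivalently one computes the Jacobian of $(\theta,\omega)\mapsto y$ directly. Integrating out $\omega$, which is independent of $\theta$ and contributes only the constant total area of $S^{d-2}$, shows that the marginal density of $\theta$ is proportional to $\sin^{d-2}\theta$ on $[0,\pi]$.

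Finally I would fix the normalizing constant via the standard identity $\int_0^\pi\sin^{d-2}\theta\,d\theta=\sqrt{\pi}\,\Gamma(\tfrac{d-1}{2})/\Gamma(\tfrac{d}{2})$, which forces $p_d(\theta)=\frac{\Gamma(d/2)}{\Gamma((d-1)/2)\sqrt{\pi}}\sin^{d-2}\theta$, as claimed. The only delicate point is justifying the $\sin^{d-2}\theta$ Jacobian factor rigorously; everything else is a one-line normalization. A route that sidesteps the explicit spherical-coordinate Jacobian is to first establish that the single coordinate $y_1$ has density proportional to $(1-u^2)^{(d-3)/2}$ on $[-1,1]$, and then substitute $u=\cos\theta$, $du=-\sin\theta\,d\theta$: this turns $(1-u^2)^{(d-3)/2}$ into $\sin^{d-3}\theta$ and supplies the remaining $\sin\theta$ from the change of variables, again yielding $\sin^{d-2}\theta$ and the same normalization.
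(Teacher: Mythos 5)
Your proposal is correct and rests on the same two ingredients as the paper's proof: fixing $x=e_1$ by rotational invariance, and using the fact that the latitude slice at colatitude $\theta$ is a $(d-2)$-sphere of radius $\sin\theta$. The only difference is bookkeeping --- your primary route factorizes the surface measure directly in $(\theta,\omega)$ coordinates and normalizes via $\int_0^\pi\sin^{d-2}\theta\,d\theta=\sqrt{\pi}\,\Gamma\!\left(\tfrac{d-1}{2}\right)/\Gamma\!\left(\tfrac{d}{2}\right)$, whereas the paper first derives the density of $t=\cos\theta$ as proportional to $(1-t^2)^{(d-3)/2}$ and then changes variables; indeed, the ``alternative route'' you sketch at the end is precisely the paper's proof.
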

Moreover, we observe a concentration phenomenon whereby the median $m$ remains close to zero.
Building on this intuition and supporting empirical results, we use zero as an effective prior for the median in the offline \method. Experimental results in Section~\ref{sec:exp} demonstrate the effectiveness of this approximation. A brief analysis of the theoretical impact of this prior assumption is included in Appendix~\ref{app:offline}.

\paragraph{Offline Watermarking with a Prior-Median Assumption.}
Accordingly, we adopt zero as the prior median during both generation and detection, while maintaining the same watermarking rate $\gamma$ for the $z$-test. For a candidate sentence $x$, define the binary signal vector
\begin{equation}
\mathrm{Sig}(x)=\bigl[[f_1(x)>0],\ldots,[f_b(x)>0]\bigr],
\end{equation}
and let $r\in\{0,1\}^b$ denote the random channel seeds. The total watermark evidence contributed by $x$ is
\begin{equation}
\mathrm{E}(x)=\sum_{j=1}^{b}\bigl(\mathrm{Sig}(x)(j)\land r(j)\bigr).
\end{equation}
During generation, we select the sentence that maximizes this watermark evidence. This avoids the computational overhead of repeated sampling and the need to access the original prompt during detection by eliminating dynamical estimation, meaning that the offline detector will function without querying generator. The complete algorithm is also provided in Appendix~\ref{app:offline}.

\section{Experiments}\label{sec:exp}
In this section, we conduct extensive experiments to answer the following research questions: 
(\textbf{RQ1}) How does \method perform under various attacks?
(\textbf{RQ2}) Can \method enable high-quality text generation in practice?
(\textbf{RQ3}) What is the computational cost of \method compared with other SWM baselines?
(\textbf{RQ4}) How sensitive is \method to its key components or parameters?
\subsection{Experiment Setup}
\paragraph{Dataset and Baselines.} Following previous work~\citep{semstamp,morphmark}, we evaluate the performance of \method using 500 samples from the C4~\citep{c4} and BOOKSUM~\citep{booksum} datasets separately, with OPT-1.3B~\citep{opt} and Mistral-7B-v0.1~\citep{mistral} as backbone models. Our baselines are based on the official implementation of MarkLLM~\citep{markllm}, including token-level watermarking methods KGW~\citep{kgw}, UPV~\citep{upv}, MorphMark~\citep{morphmark}, SIR~\citep{sir}, EXP~\citep{exp}, EXPGumbel~\citep{exp}, SynthID~\citep{synthid}, as well as semantic-level baselines SemStamp~\citep{semstamp}, k-SemStamp~\citep{ksemstamp}, and SimMark~\citep{simmark}. For the \method generation process, we set the number of channels to $4$ with a sample budget $N=64$. The hyperparameters of the soft-$z$-test during detection are set as $K=150$ and $\delta=0.001$. More implementation details can be found in Appendix \ref{app:setup}.
\paragraph{Metrics.} Following previous work, we use perplexity (PPL) to measure the quality of generated text, computed using LLaMA-2-7B~\citep{llama2}. We also assess watermark effectiveness in terms of detectability (TPR@1\%, AUC) and robustness under sentence-level attacks (Paraphrase Attack by Parrot~\citep{parrot}, GPT-3.5-turbo~\citep{openai2022chatgpt}, and Back-Translation Attack by LLAMA-3.1-8B~\citep{markllm,llama3}). Word-level attacks~\citep{markllm} such as word deletion and synonym substitution at rates 0.05, 0.15, and 0.30 are also considered. Refer to Appendix \ref{app:attack} for a more detailed discussion of these attacks.

\begin{table*}[t!]
\huge
\centering
\caption{Overall results for baseline methods and \method on OPT-1.3B and Mistral-7B across the BOOKSUM and C4 benchmarks. \textbf{Doc-P} denotes Paraphrase Attack, while Pegasus, Parrot, GPT are different paraphrasers. For each attack, we report TP@FP=1\%, TP@FP=5\%, and AUC, followed by the standard deviation of TP@FP=1\% in light gray. \textbf{Bold} denotes the best result, and \underline{underlined} denotes the second-best.}
\vspace{-2mm}
\begin{adjustbox}{width=\columnwidth,center}
\begin{tabular}{l|cccc|cccc}
\toprule
\multirow{3}{*}{\textbf{Method}}
& \multicolumn{4}{c|}{\textbf{BOOKSUM}}
& \multicolumn{4}{c}{\textbf{C4}} \\
\cline{2-9}
& \textbf{No Attack\textcolor{red}{$\uparrow$}} & \textbf{Doc-P (Pegasus)\textcolor{red}{$\uparrow$}} & \textbf{Doc-P (Parrot)\textcolor{red}{$\uparrow$}} & \textbf{Doc-P (GPT)\textcolor{red}{$\uparrow$}}
& \textbf{No Attack\textcolor{red}{$\uparrow$}} & \textbf{Doc-P (Pegasus)\textcolor{red}{$\uparrow$}} & \textbf{Doc-P (Parrot)\textcolor{red}{$\uparrow$}} & \textbf{Doc-P (GPT)\textcolor{red}{$\uparrow$}} \\
\midrule
\multicolumn{9}{c}{\textbf{OPT-1.3B}} \\
\midrule
EXP~\citeyearpar{exp}&98.8/99.0/99.3\textcolor{gray!40}{(±0.37)}&4.2/14.6/52.5\textcolor{gray!40}{(±1.12)}&5.4/16.0/55.2\textcolor{gray!40}{(±1.50)}&4.4/13.4/53.0\textcolor{gray!40}{(±1.50)}&99.0/99.6/99.8\textcolor{gray!40}{(±0.37)}&54.6/68.2/86.9\textcolor{gray!40}{(±1.50)}&42.4/59.6/84.6\textcolor{gray!40}{(±0.12)}&21.2/37.2/73.3\textcolor{gray!40}{(±1.25)}\\
EXPGumbel~\citeyearpar{exp}&99.4/99.4/99.4\textcolor{gray!40}{(±0.12)}&12.4/20.2/50.6\textcolor{gray!40}{(±1.62)}&12.4/20.0/52.2\textcolor{gray!40}{(±1.12)}&13.0/23.2/54.4\textcolor{gray!40}{(±1.12)}&98.6/98.6/99.3\textcolor{gray!40}{(±0.25)}&75.2/84.6/91.8\textcolor{gray!40}{(±0.87)}&74.6/81.0/92.8\textcolor{gray!40}{(±1.50)}&55.6/66.0/86.0\textcolor{gray!40}{(±1.87)}\\
KGW~\citeyearpar{kgw}&100.0/100.0/100.0\textcolor{gray!40}{(±0.00)}&1.4/4.6/56.8\textcolor{gray!40}{(±0.37)}&1.4/5.9/55.7\textcolor{gray!40}{(±0.37)}&0.6/4.2/54.6\textcolor{gray!40}{(±0.25)}&100.0/100.0/100.0\textcolor{gray!40}{(±0.00)}&89.3/\underline{97.0}/\underline{99.2}\textcolor{gray!40}{(±1.12)}&76.2/91.4/98.1\textcolor{gray!40}{(±1.50)}&51.4/78.5/95.0\textcolor{gray!40}{(±0.87)}\\
SIR~\citeyearpar{sir}&99.6/100.0/99.8\textcolor{gray!40}{(±0.25)}&84.2/93.8/97.5\textcolor{gray!40}{(±3.62)}&80.6/93.8/98.1\textcolor{gray!40}{(±5.99)}&49.8/82.8/95.9\textcolor{gray!40}{(±13.59)}&99.8/100.0/99.9\textcolor{gray!40}{(±0.12)}&91.6/94.6/98.7\textcolor{gray!40}{(±1.25)}&83.4/90.6/97.6\textcolor{gray!40}{(±1.50)}&74.2/88.6/97.7\textcolor{gray!40}{(±2.62)}\\
UPV~\citeyearpar{upv}&100.0/100.0/100.0\textcolor{gray!40}{(±0.16)}&90.0/95.6/98.7\textcolor{gray!40}{(±0.98)}&89.2/98.2/99.3\textcolor{gray!40}{(±0.00)}&73.6/91.0/98.0\textcolor{gray!40}{(±0.00)}&100.0/100.0/100.0\textcolor{gray!40}{(±0.36)}&90.6/95.4/98.3\textcolor{gray!40}{(±2.37)}&78.7/88.4/97.4\textcolor{gray!40}{(±3.05)}&80.9/91.3/98.2\textcolor{gray!40}{(±3.31)}\\
SynthID~\citeyearpar{synthid}&100.0/100.0/99.9\textcolor{gray!40}{(±0.00)}&38.2/56.6/86.9\textcolor{gray!40}{(±3.49)}&25.8/43.6/83.8\textcolor{gray!40}{(±3.49)}&1.8/14.6/65.7\textcolor{gray!40}{(±0.37)}&100.0/100.0/99.9\textcolor{gray!40}{(±0.00)}&45.4/66.2/89.5\textcolor{gray!40}{(±5.61)}&26.4/49.4/82.7\textcolor{gray!40}{(±5.11)}&6.4/21.2/67.6\textcolor{gray!40}{(±2.12)}\\
MorphMark~\citeyearpar{morphmark}&100.0/100.0/100.0\textcolor{gray!40}{(±0.00)}&1.8/7.7/55.2\textcolor{gray!40}{(±0.37)}&1.2/6.1/54.6\textcolor{gray!40}{(±0.25)}&0.8/4.2/51.1\textcolor{gray!40}{(±0.50)}&100.0/100.0/100.0\textcolor{gray!40}{(±0.00)}&78.6/93.0/97.8\textcolor{gray!40}{(±1.37)}&70.2/87.1/96.8\textcolor{gray!40}{(±1.00)}&46.7/76.2/93.2\textcolor{gray!40}{(±1.62)}\\
SemStamp~\citeyearpar{semstamp}&97.7/98.8/99.4\textcolor{gray!40}{(±5.01)}&89.0/93.2/97.3\textcolor{gray!40}{(±18.00)}&90.7/93.0/97.5\textcolor{gray!40}{(±18.40)}&79.6/85.4/94.2\textcolor{gray!40}{(±17.26)}&94.6/96.1/98.9\textcolor{gray!40}{(±9.29)}&85.8/91.0/96.9\textcolor{gray!40}{(±8.24)}&84.4/89.3/96.2\textcolor{gray!40}{(±8.10)}&73.5/81.4/93.0\textcolor{gray!40}{(±7.06)}\\
k-SemStamp~\citeyearpar{ksemstamp}&99.6/100.0/99.9\textcolor{gray!40}{(±0.12)}&74.5/84.6/97.4\textcolor{gray!40}{(±1.62)}&72.3/86.6/97.5\textcolor{gray!40}{(±0.62)}&62.1/72.9/95.2\textcolor{gray!40}{(±1.75)}&100.0/100.0/99.9\textcolor{gray!40}{(±0.00)}&76.6/89.2/97.1\textcolor{gray!40}{(±1.37)}&74.3/87.0/97.5\textcolor{gray!40}{(±1.62)}&62.9/78.6/94.1\textcolor{gray!40}{(±0.75)}\\
SimMark~\citeyearpar{simmark}&88.2/94.0/98.8\textcolor{gray!40}{(±3.37)}&23.4/42.8/84.6\textcolor{gray!40}{(±7.11)}&29.0/45.6/87.3\textcolor{gray!40}{(±6.98)}&22.0/40.4/83.6\textcolor{gray!40}{(±6.73)}&77.6/94.2/98.5\textcolor{gray!40}{(±2.00)}&11.6/35.8/82.9\textcolor{gray!40}{(±1.75)}&13.6/39.2/84.5\textcolor{gray!40}{(±1.50)}&16.0/44.8/86.5\textcolor{gray!40}{(±1.25)}\\
\rowcolor{black!30}Ours(Online)&99.8/99.8/99.9\textcolor{gray!40}{(±0.16)}&\textbf{97.4}/\textbf{99.0}/\textbf{99.8}\textcolor{gray!40}{(±2.18)}&\textbf{96.8}/\textbf{99.0}/\textbf{99.8}\textcolor{gray!40}{(±1.63)}&\textbf{95.4}/\textbf{99.0}/\textbf{99.6}\textcolor{gray!40}{(±1.51)}&100.0/100.0/99.9\textcolor{gray!40}{(±0.00)}&\textbf{96.2}/\textbf{99.6}/\textbf{99.8}\textcolor{gray!40}{(±1.69)}&\textbf{97.2}/\textbf{99.0}/\textbf{99.8}\textcolor{gray!40}{(±1.01)}&\textbf{97.8}/\textbf{99.6}/\textbf{99.8}\textcolor{gray!40}{(±0.64)}\\
\rowcolor{black!10}Ours(Offline)&99.4/99.6/99.8\textcolor{gray!40}{(±0.27)}&\underline{93.2}/\underline{98.2}/\underline{99.5}\textcolor{gray!40}{(±1.57)}&\underline{95.2}/\underline{98.6}/\underline{99.5}\textcolor{gray!40}{(±0.95)}&\underline{94.2}/\underline{98.8}/\underline{99.6}\textcolor{gray!40}{(±1.73)}&98.0/99.0/99.8\textcolor{gray!40}{(±0.45)}&\underline{91.7}/94.4/98.8\textcolor{gray!40}{(±1.69)}&\underline{91.4}/\underline{95.2}/\underline{99.0}\textcolor{gray!40}{(±2.64)}&\underline{92.6}/\underline{96.2}/\underline{99.1}\textcolor{gray!40}{(±1.89)}\\
\midrule
\multicolumn{9}{c}{\textbf{Mistral-7B}} \\
\midrule
EXP~\citeyearpar{exp}&99.8/99.8/99.9\textcolor{gray!40}{(±0.00)}&36.4/53.0/85.8\textcolor{gray!40}{(±2.49)}&47.2/61.8/89.1\textcolor{gray!40}{(±1.00)}&10.4/22.0/71.7\textcolor{gray!40}{(±1.00)}&99.2/99.2/99.5\textcolor{gray!40}{(±0.25)}&44.4/60.4/85.7\textcolor{gray!40}{(±3.62)}&34.2/54.8/84.2\textcolor{gray!40}{(±5.61)}&17.4/29.8/72.1\textcolor{gray!40}{(±3.12)}\\
EXPGumbel~\citeyearpar{exp}&99.6/99.6/99.7\textcolor{gray!40}{(±0.12)}&65.6/77.2/90.8\textcolor{gray!40}{(±1.75)}&78.8/85.2/92.8\textcolor{gray!40}{(±1.12)}&38.4/53.2/81.9\textcolor{gray!40}{(±2.00)}&98.6/98.6/99.3\textcolor{gray!40}{(±0.37)}&71.2/82.0/92.5\textcolor{gray!40}{(±2.99)}&66.2/77.2/90.8\textcolor{gray!40}{(±1.37)}&40.2/55.4/82.5\textcolor{gray!40}{(±4.36)}\\
KGW~\citeyearpar{kgw}&100.0/100.0/100.0\textcolor{gray!40}{(±0.00)}&84.2/95.4/98.7\textcolor{gray!40}{(±1.50)}&90.2/97.3/99.2\textcolor{gray!40}{(±0.62)}&35.8/68.9/92.1\textcolor{gray!40}{(±0.87)}&100.0/100.0/100.0\textcolor{gray!40}{(±0.00)}&85.6/95.5/98.8\textcolor{gray!40}{(±1.37)}&79.2/93.7/98.8\textcolor{gray!40}{(±2.99)}&54.2/76.7/95.6\textcolor{gray!40}{(±3.49)}\\
SIR~\citeyearpar{sir}&100.0/100.0/99.9\textcolor{gray!40}{(±0.00)}&85.0/92.6/98.0\textcolor{gray!40}{(±2.00)}&80.8/92.0/98.2\textcolor{gray!40}{(±2.24)}&41.2/71.4/93.6\textcolor{gray!40}{(±3.87)}&100.0/100.0/99.9\textcolor{gray!40}{(±0.00)}&87.6/93.4/98.1\textcolor{gray!40}{(±0.25)}&83.2/89.6/97.8\textcolor{gray!40}{(±1.00)}&63.4/80.8/96.4\textcolor{gray!40}{(±3.62)}\\
UPV~\citeyearpar{upv}&99.6/100.0/100.0\textcolor{gray!40}{(±0.16)}&72.6/96.2/98.5\textcolor{gray!40}{(±2.11)}&73.0/96.9/99.0\textcolor{gray!40}{(±1.67)}&33.4/76.2/94.6\textcolor{gray!40}{(±4.38)}&99.4/99.8/99.9\textcolor{gray!40}{(±0.56)}&71.4/91.4/97.7\textcolor{gray!40}{(±3.44)}&61.9/85.4/97.5\textcolor{gray!40}{(±1.93)}&34.9/68.6/93.4\textcolor{gray!40}{(±3.88)}\\
SynthID~\citeyearpar{synthid}&100.0/100.0/100.0\textcolor{gray!40}{(±0.00)}&30.2/49.2/83.5\textcolor{gray!40}{(±2.74)}&28.2/48.4/85.5\textcolor{gray!40}{(±3.37)}&3.0/13.0/65.0\textcolor{gray!40}{(±0.87)}&99.8/99.8/99.8\textcolor{gray!40}{(±0.12)}&47.0/56.0/87.6\textcolor{gray!40}{(±2.24)}&31.4/41.4/80.7\textcolor{gray!40}{(±1.75)}&7.4/17.2/68.2\textcolor{gray!40}{(±2.99)}\\
MorphMark~\citeyearpar{morphmark}&99.8/100.0/100.0\textcolor{gray!40}{(±0.12)}&68.2/88.9/97.2\textcolor{gray!40}{(±1.50)}&77.4/93.3/98.4\textcolor{gray!40}{(±1.00)}&24.6/52.8/87.7\textcolor{gray!40}{(±1.00)}&98.6/100.0/100.0\textcolor{gray!40}{(±0.50)}&74.8/89.4/97.5\textcolor{gray!40}{(±2.12)}&70.0/86.4/97.4\textcolor{gray!40}{(±2.62)}&35.2/59.7/91.4\textcolor{gray!40}{(±2.49)}\\
SemStamp~\citeyearpar{semstamp}&97.7/98.4/99.5\textcolor{gray!40}{(±4.87)}&88.1/92.8/97.8\textcolor{gray!40}{(±17.71)}&92.4/95.2/98.1\textcolor{gray!40}{(±18.70)}&76.0/83.0/94.5\textcolor{gray!40}{(±15.86)}&92.5/95.8/98.3\textcolor{gray!40}{(±9.19)}&80.9/87.9/95.9\textcolor{gray!40}{(±7.86)}&77.9/86.3/95.7\textcolor{gray!40}{(±7.48)}&69.2/80.0/92.4\textcolor{gray!40}{(±6.28)}\\
k-SemStamp~\citeyearpar{ksemstamp}&99.0/99.0/99.7\textcolor{gray!40}{(±0.49)}&45.5/64.1/92.4\textcolor{gray!40}{(±8.68)}&53.9/70.3/94.8\textcolor{gray!40}{(±11.02)}&43.5/61.9/91.6\textcolor{gray!40}{(±8.78)}&100.0/100.0/99.9\textcolor{gray!40}{(±0.00)}&49.2/68.4/93.2\textcolor{gray!40}{(±1.12)}&54.8/71.6/93.7\textcolor{gray!40}{(±1.37)}&43.9/61.2/89.2\textcolor{gray!40}{(±0.75)}\\
SimMark~\citeyearpar{simmark}&78.0/89.4/97.9\textcolor{gray!40}{(±3.49)}&19.8/40.6/83.6\textcolor{gray!40}{(±6.48)}&28.8/46.2/85.9\textcolor{gray!40}{(±3.24)}&23.4/45.0/84.9\textcolor{gray!40}{(±3.87)}&70.8/89.6/97.9\textcolor{gray!40}{(±1.87)}&12.4/31.2/81.5\textcolor{gray!40}{(±0.87)}&14.6/37.3/84.4\textcolor{gray!40}{(±1.12)}&22.6/47.9/86.4\textcolor{gray!40}{(±1.00)}\\
\rowcolor{black!30}Ours(Online)&100.0/100.0/99.9\textcolor{gray!40}{(±0.30)}&\textbf{94.4}/\textbf{98.2}/\textbf{99.6}\textcolor{gray!40}{(±1.94)}&\textbf{96.6}/\textbf{98.8}/\textbf{99.7}\textcolor{gray!40}{(±1.54)}&\textbf{96.8}/\textbf{99.0}/\textbf{99.7}\textcolor{gray!40}{(±2.52)}&100.0/100.0/99.9\textcolor{gray!40}{(±0.16)}&\textbf{93.0}/\textbf{96.8}/\textbf{99.4}\textcolor{gray!40}{(±0.75)}&\textbf{93.6}/\textbf{97.6}/\textbf{99.4}\textcolor{gray!40}{(±0.83)}&\textbf{95.2}/\textbf{98.8}/\textbf{99.7}\textcolor{gray!40}{(±0.98)}\\
\rowcolor{black!10}Ours(Offline)&99.4/100.0/99.9\textcolor{gray!40}{(±0.00)}&\underline{91.4}/\underline{97.4}/\underline{99.3}\textcolor{gray!40}{(±1.60)}&\underline{94.6}/\underline{97.8}/\underline{99.6}\textcolor{gray!40}{(±0.80)}&\underline{94.4}/\underline{98.6}/\underline{99.6}\textcolor{gray!40}{(±0.93)}&99.7/99.8/99.9\textcolor{gray!40}{(±0.25)}&\underline{90.8}/\underline{95.8}/\underline{99.1}\textcolor{gray!40}{(±1.94)}&\underline{91.3}/\underline{95.8}/\underline{99.3}\textcolor{gray!40}{(±2.01)}&\underline{92.0}/\underline{95.2}/\underline{99.3}\textcolor{gray!40}{(±2.73)}\\
\bottomrule
\end{tabular}
\end{adjustbox}
\vspace{-0.2in}
\label{tab:main}
\end{table*}
\subsection{Robustness and Text Quality (RQ1 \& RQ2)}\label{sec:main_res}
To answer \textbf{RQ1}, we comprehensively compare \method with seven widely used token-level watermarks and three semantic-level watermarks under various attacks. Table \ref{tab:main} presents comprehensive results of watermarking methods under paraphrase attacks, with Pegasus~\citep{pegasus}, Parrot~\citep{parrot}, and GPT-3.5-turbo~\citep{openai2022chatgpt} as different paraphrasers. Figure \ref{fig:bubble} illustrates the trade-off between text quality and robustness under back-translation attacks, while Figure \ref{fig:wrap} shows the results of various SWM methods under word-level attacks. We provide the following observations (\textbf{Obs.}):
\paragraph{Obs. \ding{182} \method demonstrates superior semantic robustness across different backbones and benchmarks.} 
\begin{wrapfigure}{r}{0.56\textwidth}
\vspace{-3mm}
\centering
\includegraphics[width=\linewidth]{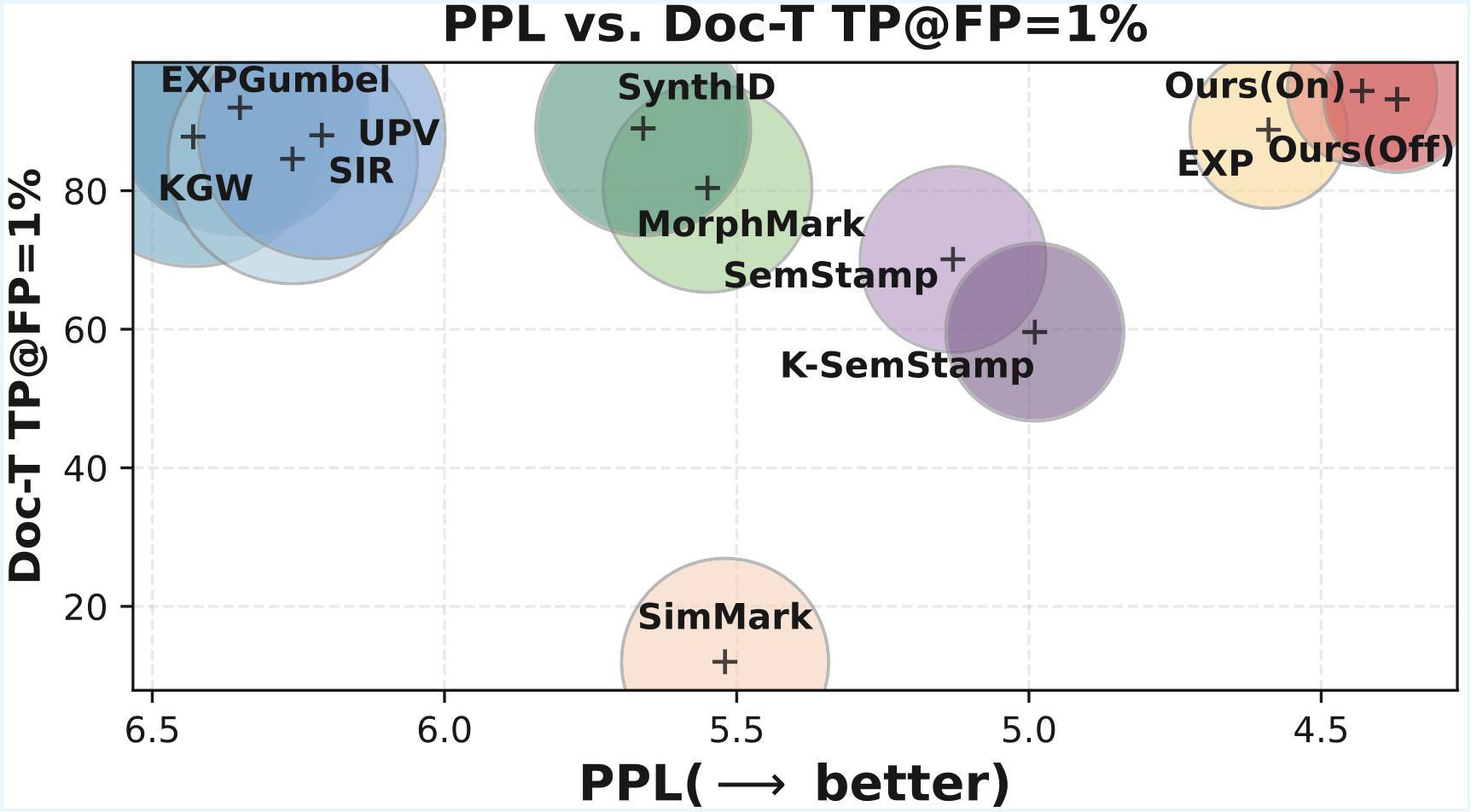}
\caption{Results of Mistral-7B on the C4 dataset. Smaller bubbles denote lower PPL.}\label{fig:bubble}
\vspace{-2mm}
\end{wrapfigure}
As shown in Table \ref{tab:main}, \method exhibits strong robustness to paraphrase attacks. Specifically, the TP@FP1\% of online \method remains above 93.0\% across different paraphrasers, substantially outperforming the baselines. For Mistral-7B, the TP@FP1\% on BOOKSUM and C4 after GPT paraphrasing reaches 96.8 and 95.2, exceeding the best SWM baseline SemStamp by 20.8\% and 26.0\% and the best TWM baseline SIR by 55.6\% and 31.8\%. The offline \method also shows strong robustness to paraphrase attacks, with TP@FP1\% remaining above 90.8\%, consistently ranking second to the online \method. A similar conclusion can be drawn from Figure \ref{fig:bubble}, where both online and offline \method excel in TP@FP1\% under back-translation attack. Despite a few competitive baseline results, such as those achieved by KGW when attacked by Pegasus, we argue that online PMark demonstrates near SOTA performance across a variety of paraphrasers, while offline PMark offers highly competitive robustness against paraphrase attacks.
\vspace{-1mm}
\paragraph{Obs. \ding{183} \method achieves superior text generation quality.}
\begin{wrapfigure}{r}{0.6\textwidth}
\vspace{-5mm}
\centering
\begin{minipage}[t]{0.29\textwidth}
\centering
\includegraphics[width=\linewidth]{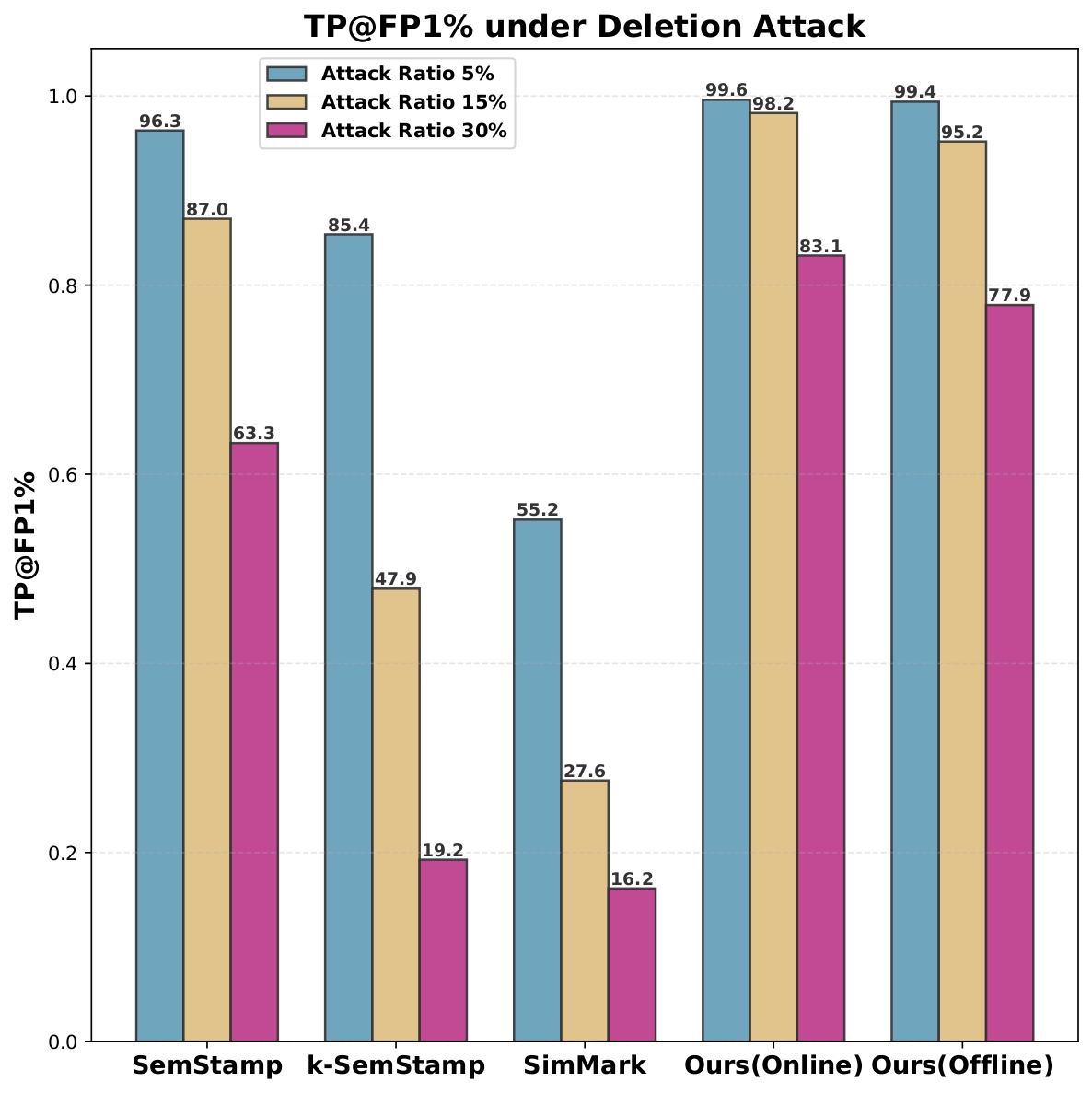}
\end{minipage}
\hfill
\begin{minipage}[t]{0.29\textwidth}
\centering
\includegraphics[width=\linewidth]{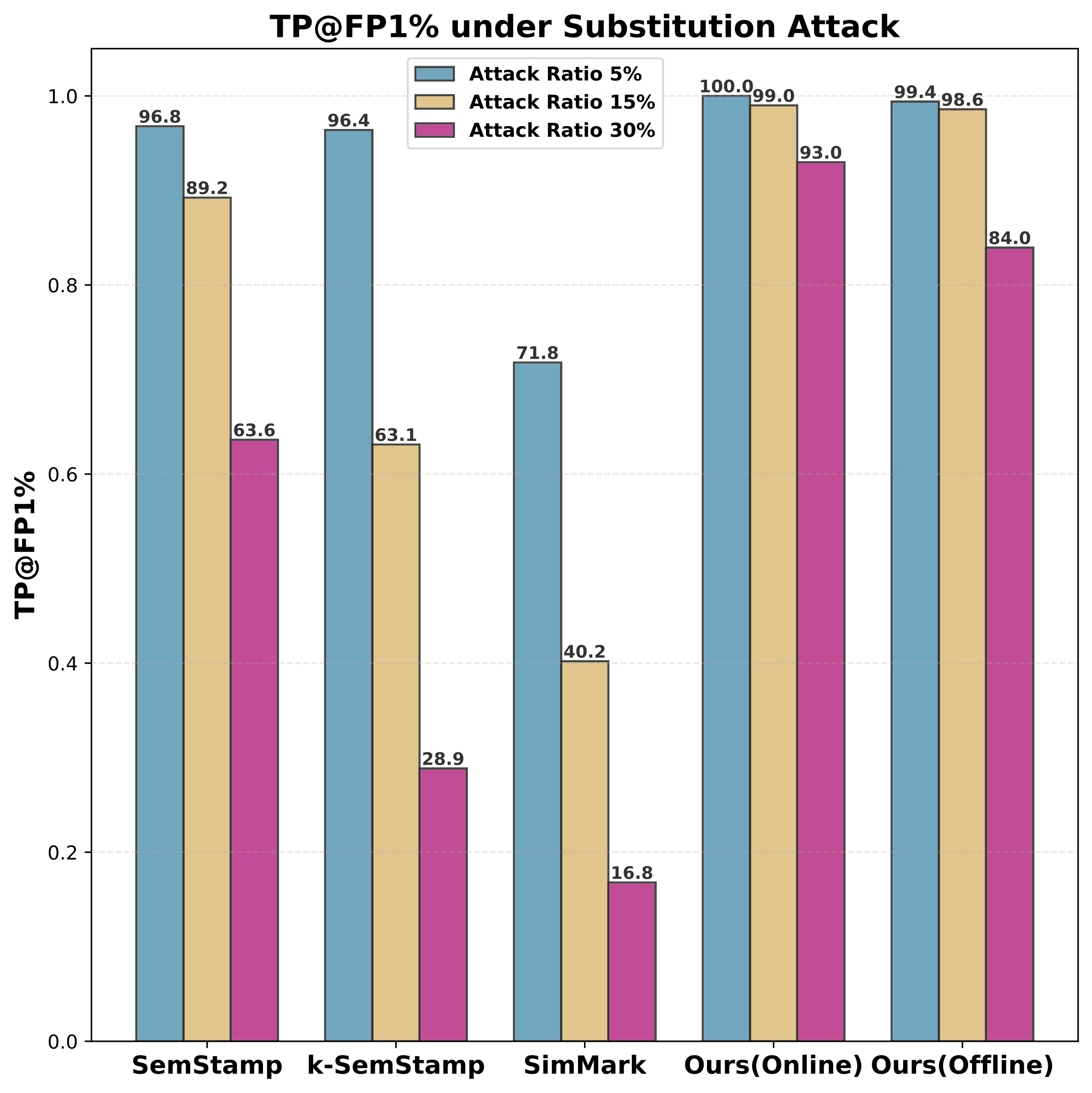}
\end{minipage}
\vspace{-1.5mm}
\caption{TP@FP1\% under Word-D and Word-S attacks.}\label{fig:wrap}
\vspace{-4mm}
\end{wrapfigure}
As shown in Figure \ref{fig:bubble}, the online version of \method attains the highest text quality on the BOOKSUM benchmark, with perplexity lower than the best baseline EXP~\citep{exp} by nearly 0.7 on Mistral-7B. Among prior SWMs, k-SemStamp~\citep{ksemstamp} achieves the lowest PPL (around 5.0) on BOOKSUM, yet it still underperforms \method by a clear margin. We observe similar trends on the C4 dataset: both the online and offline variants of \method deliver the highest-quality watermarked text. Similarly, EXP and k-SemStamp remain the strongest token-level and semantic-level baselines, respectively. Additional results on robustness and text quality are provided in Appendix~\ref{app:add_robust} and~\ref{app:add_ppl}.
\vspace{-1mm}
\paragraph{Obs. \ding{184} Previous SWM methods are vulnerable to word-level attacks.} We also evaluate \method under word-level deletion and synonym-substitution attacks with Mistral-7B on BOOKSUM, as shown in Figure \ref{fig:wrap}. Surprisingly, existing SWMs remain vulnerable to these token-level attacks because their watermark evidence is relatively sparse. As a result, even the reversal of one-bit watermark evidence can cause a large deviation in the $z$ statistic. In our work, we address this problem efficiently by applying multiple channel constraints to each sentence, rather than simply increasing generation length. Consequently, our method achieves state-of-the-art robustness: in online mode, TP@FP1\% reaches 98.2\% and 99.0\% under 15\% word deletion and synonym substitution, respectively; in offline mode, the corresponding TP@FP1\% values are 95.2\% and 98.6\%.
\subsection{Computation Consumption (RQ3)}
\begin{wraptable}{r}{0.7\textwidth}
\vspace{-0mm}
\centering
\caption{TP@FP1\% across different sample budgets and channel numbers in online/offline modes.}\label{tab:channel}
\vspace{-3mm}
\begin{tabular}{l|cccc}
\toprule
\rowcolor{CadetBlue!20}$N$/$b$ & b=1 & b=2 & b=3 & b=4 \\
\midrule
$N=$8 & 81.0/85.0 & 97.0/47.0 & 98.0/83.0 & -/- \\
\rowcolor{gray!10}$N=$16 & 84.0/93.0 & 100.0/90.0 & 100.0/96.0 & 100.0/93.0 \\
$N=$32 & 97.0/99.0 & 100.0/98.0 & 100.0/98.0 & 100.0/98.0 \\
\rowcolor{gray!10}$N=$64 & 99.0/95.0 & 100.0/98.0 & 100.0/99.0 & 100.0/100.0 \\
\bottomrule
\end{tabular}
\vspace{-4mm}
\end{wraptable}

\paragraph{Obs. \ding{185} Evidence density is critical for SWM detectability.}
Table \ref{tab:channel} confirms our understanding of watermark evidence and sampling efficiency in SWM, evaluated with OPT-1.3B on BOOKSUM. Specifically, online \method exhibits a high TP@FP\% of 98\% with an extremely limited sampling budget $N=8$, supported by 3 channels. This suggests that dense watermark evidence is necessary for reducing sampling budget. We also note that offline \method requires a budget of 16 to achieve practical detectability, which we attribute to the high risk of prior deviation caused by a limited budget. While Bayesian models can improve accuracy in such probability estimation as shown in SynthID, we only present the naive results here and leave related improvements for future work.

\begin{wraptable}{r}{0.7\textwidth}
\vspace{-4mm}
\centering
\caption{Token efficiency of SWM methods. $T_s$ denotes average token consumption per sentence, while $T_t$ denotes consumption per token.}\label{tab:com}
\vspace{-3mm}
\begin{tabular}{l|rrrrr}
\hline
 \rowcolor{CadetBlue!20} Method & k-Sem & SimMark & Sem & Ours(On) & Ours(Off) \\
\hline
$T_s$ & 246.9 & 186.7 & 1694.4 & 315.8 & 239.7 \\
\rowcolor{gray!10} $T_t$ & 13.3 & 8.1 & 99.3 & 16.0 & 12.1 \\
\hline
\end{tabular}
\vspace{-4mm}
\end{wraptable}

\paragraph{Obs. \ding{186} \method achieves the remarkable trade-off between token efficiency and performance among SWM methods.}
To further verify the efficiency of \method, we conducted additional experiments on token efficiency across existing SWM methods at the same detectability level, as shown in Table \ref{tab:com}. While SimMark achieves the lowest token consumption during watermark generation, it performs poorly without well-tuned hyperparameters, as shown in Table \ref{tab:main}. Similarly, k-SemStamp requires fewer tokens than online \method, but its robustness to paraphrasing is limited. In contrast, SemStamp exhibits the highest robustness to semantic attacks among existing SWMs, but at the cost of an impractical 1694.4 tokens per sentence. As a result, we argue that \method achieves the best trade-off among SWM methods: the offline mode requires fewer tokens than k-SemStamp while maintaining competitive robustness, and the online mode achieves the highest robustness with only 20\% additional token consumption.
\subsection{Sensitivity Analysis (RQ4)}
\begin{wrapfigure}{r}{0.7\textwidth}
\vspace{-6mm}
\centering
\includegraphics[width=\linewidth]{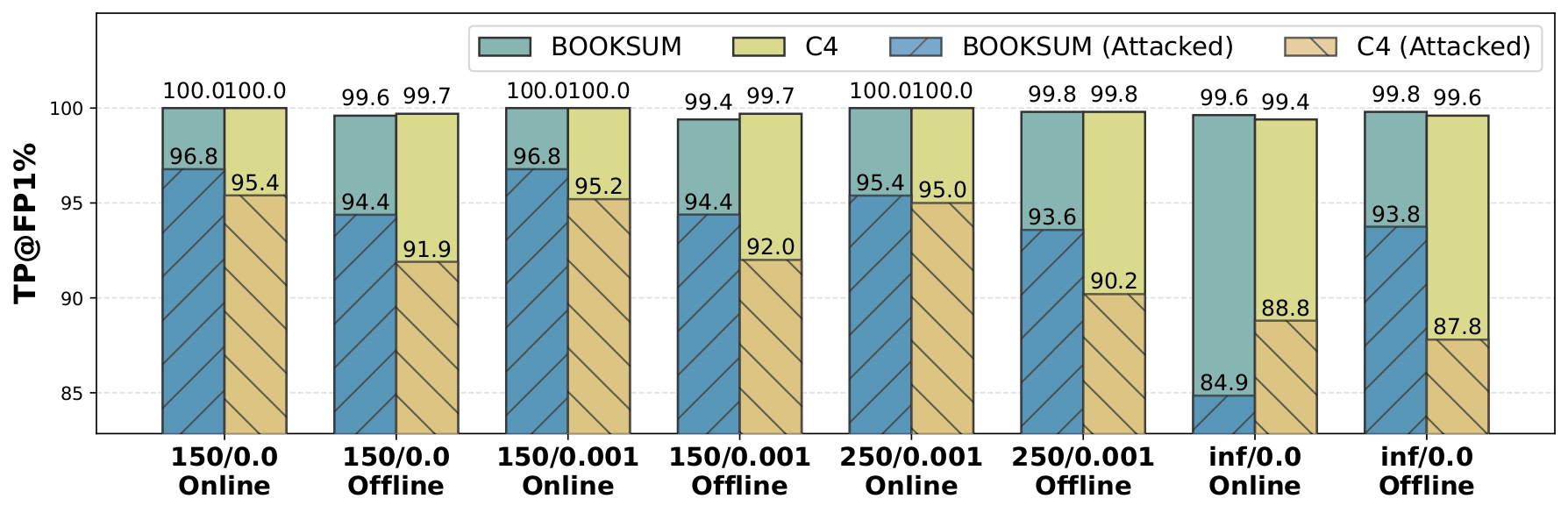}
\vspace{-8mm}
\caption{Performance of Mistral-7B under different hyperparameter settings ($K=150, 250, +\infty$ and $\delta=0, 0.001$).}
\vspace{-6mm}
\label{fig:hyper} 
\end{wrapfigure}
\paragraph{Obs. \ding{187} Soft-$z$-test boosts the detection accuracy of online \method.} We present the parameter sensitivity analysis of two hyperparameters in Figure \ref{fig:hyper}. Results of Doc-P(GPT) are also included for comparison. While the soft-$z$-test yields a stable improvement in robustness compared with the naive $z$-test ($K=+\infty, \delta=0$), we find it is more beneficial for online \method, where increases of 11.9\% and 6.4\% in attacked TP@FP1\% are observed on BOOKSUM and C4, respectively. More results and cases can be found in Appendix~\ref{app:add_case}.
\vspace{-4.5mm}
\section{Conclusion}
The semantics of the next sentence is an abstract and complex concept. To effectively estimate it indirectly, a proxy function that maps sentences from textual space to a scalar value is often used, either implicitly or explicitly. Building on this insight, we present a unified perspective on existing SWMs and propose a novel distortion-free paradigm based on median estimation. Furthermore, we identify that the vulnerability of current SWMs to adversarial attacks arises from the sparsity of watermark evidence. To address this issue, we introduce multi-channel constraints to enhance robustness. To the best of our knowledge, this is the first work to provide a unified theoretical framework for SWMs, the first distortion-free SWM approach, and the first to highlight and realize dense watermark evidence in SWM. Experimental results demonstrate that \method consistently outperforms existing SWM baselines in both text quality and robustness, offering a more reliable and effective paradigm for semantic-level watermarking.

\section{Acknowledgments}
We would like to express our gratitude to the Red Bird MPhil Program at the Hong Kong University of Science and Technology (Guangzhou) for providing support, resources, and funding, which have been instrumental in the successful completion of this research.
\clearpage
\bibliography{iclr2026_conference}
\bibliographystyle{iclr2026_conference}

\clearpage
\appendix
\section{Use of LLMs}
We used AI assistants for two purposes: (1) generating routine code and boilerplate functions, which were subsequently reviewed and debugged by humans, and (2) performing grammatical review and sentence-level editing of the manuscript. The research methodology, findings, and analysis were independently proposed and conducted.
\section{Detailed Definition and Additional proofs}\label{app:prf}
\begin{lemma}[Countable]\label{lma:countable}
Let $\Sigma$ be a finite tokenizer vocabulary, and let $\Sigma^\star$ denote the set of all semantically complete sentences (including the null sentence) represented as finite token sequences over $\Sigma$. Then $\Sigma^\star$ is countable.
\end{lemma}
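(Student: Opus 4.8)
The plan is to show that $\Sigma^\star$ embeds into the set of all finite token sequences over $\Sigma$, and that this larger set is countable; countability then descends to the subset $\Sigma^\star$. Concretely, since every semantically complete sentence is by definition a finite sequence of tokens drawn from $\Sigma$, we have the inclusion
\[
\Sigma^\star \subseteq \bigcup_{n=0}^{\infty} \Sigma^n,
\]
where $\Sigma^n$ denotes the set of token sequences of length exactly $n$ (with $\Sigma^0$ the singleton containing the null sentence).

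The first step is to bound each length class. Because $\Sigma$ is finite, $|\Sigma^n| = |\Sigma|^n < \infty$ for every $n$, so each $\Sigma^n$ is a finite set. The second step invokes the standard fact that a countable union of finite sets is countable: indexing by $n \in \mathbb{N}$ yields a countable family of finite sets, whose union is therefore countable. Hence the set of all finite token sequences over $\Sigma$ is countable. The closing step uses that any subset of a countable set is itself countable (being either finite or countably infinite); since $\Sigma^\star$ sits inside the countable set of all finite sequences, it is countable, which completes the argument.

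As an alternative constructive route, I could instead fix an enumeration $\Sigma = \{a_1,\dots,a_K\}$ with $K = |\Sigma|$ and map a sentence $[x_1,\dots,x_n]$ to the integer whose base-$(K+1)$ digits are the token indices (shifting indices by one to avoid leading-zero collisions and to keep the map injective). This exhibits an explicit injection $\Sigma^\star \hookrightarrow \mathbb{N}$ and gives the same conclusion without appealing to the union-of-finite-sets lemma.

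There is no genuine obstacle here; the statement is elementary. The only point meriting care is the role of the qualifier \emph{semantically complete}: it merely selects a subset of all finite token sequences and never enlarges the space, so it cannot break countability. This is precisely why the right closing move is the subset-of-a-countable-set step, rather than any attempt to enumerate semantically complete sentences directly, since their internal linguistic structure is irrelevant to the cardinality bound.
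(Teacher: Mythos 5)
Your proof is correct, but your main route differs from the paper's. The paper proves countability by exhibiting an explicit G\"odel-style injection $C:\Sigma^{<\omega}\to\mathbb{N}$, sending a sequence $(x_1,\dots,x_{|s|})$ to $\prod_{k=1}^{|s|} p_k^{\,e(x_k)+1}$ (primes $p_k$, bijection $e:\Sigma\to\mathbb{N}_{\ge 1}$), with injectivity following from the Fundamental Theorem of Arithmetic; countability of $\Sigma^\star$ then descends via the inclusion $\Sigma^\star\subseteq\Sigma^{<\omega}$, exactly as in your closing step. Your primary argument instead decomposes $\Sigma^{<\omega}=\bigcup_{n\ge 0}\Sigma^n$ into finite length classes and invokes the set-theoretic facts that a countable union of finite sets is countable and that subsets of countable sets are countable. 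This is a genuinely different (and arguably more elementary) decomposition: it avoids any number theory, at the cost of leaning on standard cardinality lemmas rather than producing a concrete code. Note that your union-of-finite-sets step needs no appeal to choice here, since fixing an order on $\Sigma$ gives a canonical (e.g.\ length-then-lexicographic) enumeration of $\Sigma^{<\omega}$, which you could mention to make the argument fully self-contained. Your alternative base-$(K+1)$ encoding is, in spirit, the same move as the paper's --- an explicit injection into $\mathbb{N}$ --- just using positional notation in place of prime factorization, and it is equally valid: since token indices lie in $\{1,\dots,K\}$, no digit is zero, so the map is injective. Both your routes and the paper's treat the ``semantically complete'' qualifier identically, as a subset restriction that is irrelevant to the cardinality bound.
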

\begin{proof}
Since $\Sigma$ is finite, fix a bijection $e:\Sigma\to\mathbb N_{\ge 1}$ (i.e. the index on the tokenizer). Let $(p_k)_{k\ge 1}$ be the increasing sequence of prime numbers.
Define $C:\Sigma^{<\omega}\to\mathbb N$ by
\begin{equation}
C(s):=
\begin{cases}
1, & s=\varnothing,\\
\displaystyle\prod_{k=1}^{|s|} p_k^{\,e(x_k)+1}, & s=(x_1,\dots,x_{|s|})\in\Sigma^{<\omega}.
\end{cases}
\end{equation}
If $C(s)=C(t)$, the Fundamental Theorem of Arithmetic forces the exponent of each $p_k$ to coincide on both sides. Hence $|s|=|t|$ and $e(x_k)=e(y_k)$ for all $k$, which implies $x_k=y_k$ by bijectivity of $e$, so $s=t$.
Therefore $\Sigma^{<\omega}$ injects into $\mathbb N$ and is countable. Since $\Sigma^\star\subseteq \Sigma^{<\omega}$, it is countable as well.
\end{proof}

\begin{definition}[Partition of a Set]\label{def:partition}
Let $U$ be a non-empty set. A \emph{partition} $\mathcal{P}(U)$ of $U$ is a collection of non-empty, pairwise disjoint subsets of $U$ whose union equals $U$. Formally,
\[
\mathcal{P}(U) = \{A_1, \dots, A_n\}
\]
such that:
\begin{itemize}
    \item $A_i \neq \emptyset$ for all $i$,
    \item $A_i \cap A_j = \emptyset$ for all $i \neq j$,
    \item $\bigcup_{i=1}^n A_i = U$.
\end{itemize}
Each element $u \in U$ belongs to exactly one subset $A_i \in \mathcal{P}(U)$, which we denote as $\mathcal P_u(U)$ in the main body.
\end{definition}

\paragraph{Proof of Lemma~\ref{lem:green_scale}.}Here we suppose that $q(S)>0$ in practice, meaning that reject-sampling will always succeed when $N$ is large enough.
\begin{proof}
Let $\mathcal F:\Sigma^*\to U$ be the proxy with finite range $U$, and fix a green set $S\subseteq U$. For $u\in U$,
\begin{equation}
q(u)\triangleq\sum_{t\in \mathcal F^{-1}(u)} P_M(t\mid \boldsymbol{\pi}),\qquad 
q(S)\triangleq\sum_{\,\mathcal F(t)\in S} P_M(t\mid \boldsymbol{\pi}).
\end{equation}
Hence
\begin{equation}
\Pr(Y=s\mid \boldsymbol{\pi},S)
=\sum_{k=0}^{\infty}\big(1-q(S)\big)^k\,P_M(s\mid\boldsymbol{\pi})
=\frac{P_M(s\mid\boldsymbol{\pi})}{q(S)}\quad\text{for } \forall \mathcal F(s)\in S.
\end{equation}
\end{proof}
\paragraph{Proof of Theorem~\ref{thm:swm_pmf}.} Here the watermark code space is $E=\{S\mid S\subseteq U,|S|=\}$, where $|U|=M$, $=\gamma M$ and $|E|=\binom{M}{}$. 
\begin{proof}
Based on \eqref{eq:code_exp} we have
\begin{equation}
P_M^w(s\mid\boldsymbol{\pi})
=\sum_{e_K\sim \text{Un}(E)} 
P(Y=s \mid \boldsymbol{\pi},e_K)\;P(e_K)
=\frac{1}{\binom{M}{g}}\sum_{S\in E} 
P(Y=s \mid \boldsymbol{\pi},S).
\end{equation}
From Lemma~\ref{lem:green_scale},
\begin{equation}
P(Y=s\mid \boldsymbol{\pi},S)=
\begin{cases}
\displaystyle \frac{P_M(s\mid\boldsymbol{\pi})}{q(S)}, & u\in S,\\[6pt]
0, & u\notin S,
\end{cases}
\end{equation}
therefore only sets $S$ containing $u=\mathcal F(s)$ contribute:
\begin{equation}
P_M^w(s\mid\boldsymbol{\pi})
=\frac{1}{\binom{M}{}}
\sum_{\substack{S\subseteq E\\u\in S}}
\frac{P_M(s\mid\boldsymbol{\pi})}{q(S)}\\
= P_M(s\mid\boldsymbol{\pi})\cdot
\frac{1}{\binom{M}{}}
\sum_{\substack{S\subseteq U\\|S|=,\,u\in S}}
\frac{1}{q(S)}.
\end{equation}
which is exactly \eqref{eq:sem_pmf}.
\end{proof}

\paragraph{Proof of Corollary~\ref{cry:dis_free}.} Denote \eqref{eq:sem_pmf} as 
\[
P_M^w(s\mid\boldsymbol\pi)
= P_M(s\mid\boldsymbol\pi)A(u),
\]
where 
\[
A(u)\triangleq
\frac{1}{\binom{M}{}}\sum_{\substack{S\subseteq U\\ |S|=m,u\in S}} \frac{1}{\sum_{v\in S}q(v)}.
\tag{$\star$}
\]
So $P_M^w(\cdot\mid\boldsymbol\pi)$ is distortion-free if and only if $A(u)\equiv 1$ for all $u\in U$.
\begin{proof}
$\;$\\
\emph{Sufficiency}. If $q(u)\equiv 1/M$ for all $u$, then for every $S$, $q(S)=\sum_{v\in S}q(v)=g/M=\gamma$. Also $\Pr(u\in S)=\frac{\binom{M-1}{g-1}}{\binom{M}{}}=\frac{}{M}=\gamma$. Hence
$$
A(u)=\frac{}{M}\cdot\frac{1}{\gamma}=1,
$$
which means
\[
P_M^w(s\mid\boldsymbol\pi)=P_M(s\mid\boldsymbol\pi).
\]
\emph{Necessity}. Assume distortion-free, so $A(u)=1$ for all $u$. Fix two indices $u\neq v$. Write the difference using $(\star)$. Denote $T\subseteq U\setminus\{u,v\}$, $|T|=-1$, as a set containing both $u$ and $v$ will be neutralized in the difference, thus
\[
A(u)-A(v)
=\frac{1}{\binom{M}{}}
\sum_{\substack{T\subseteq U\setminus\{u,v\}\\ |T|=-1}}
\left(\frac{1}{q(u)+\sum_{w\in T}q(w)}-\frac{1}{q(v)+\sum_{w\in T}q(w)}\right).
\]
Since $f(x)=\frac{1}{x+a}$ is a strictly decreasing function for any fixed $a>0$, therefore
\begin{itemize}
    \item If $q(u)>q(v)$, then every term in the sum is negative and hence $A(u)<A(v)$.
    \item If $q(u)<q(v)$, then $A(u)>A(v)$.
\end{itemize}
Distortion-free requires $A(u)\equiv A(v)$, hence we must have $q(u)=q(v)$. Since $u,v$ were arbitrary, all $q(\cdot)$ are equal, meaning that $q(u)=\frac{1}{M}$ for every $u\in U$.
\end{proof}

\paragraph{Proof of Theorem~\ref{thm:robust}.} Let $\vec{x}=\frac{\mathcal T(s)}{\|\mathcal T(s)\|}$, $\vec x'=\frac{\mathcal T(\mathcal A(s))}{\|\mathcal T(\mathcal A(s))\|}$, and $\vec v=\frac{v}{\|v\|}$. Then the proxy is $f=\langle \vec v,\vec x\rangle\in[-1,1]$ and $f'=\langle \vec v,\vec x'\rangle\in[-1,1]$.
\begin{proof}
For unit vectors, $\|\vec x'-\vec x\|^2=2-2\langle \vec x',\vec x\rangle \le 2d$, hence
\[
|f'-f| = |\langle \vec v,\vec x'-\vec x\rangle|
\le \|\vec v\|\|\vec x'-\vec x\| \le \sqrt{2d}.
\]
A flip is possible only if the original score sits within the $\sqrt{2d}$ band around the median:
\[
|f-m_v|\le |f'-f|\le \sqrt{2d}
\quad\Longrightarrow\quad
f\in [m_v-\sqrt{2d},m_v+\sqrt{2d}].
\]
Therefore,
\[
P_{\text{rm}}
\le
\mu_M\left( s\in\Sigma^* \middle| \mathcal F_v(s)\in[m_v-\sqrt{2d},m_v+\sqrt{2d}] \right).
\]
\end{proof}
\paragraph{Proof of Lemma~\ref{lem:theta}.} 
\begin{proof}Without loss of generality, fix $x=e_1=(1,0,\dots,0)$. Draw $y$ uniformly on $S^{d-1}$. The angle $\theta$ between $x$ and $y$ satisfies $\cos\theta=x\cdot y = g(y)$.
For $t\in(-1,1)$, the set $\{y\in S^{d-1}: g(y)=t\}$ is a $(d-2)$-sphere of radius $\sqrt{1-t^2}$. Its $(d-2)$-dimensional surface area is
$$
\mathrm A(t)=\mathrm{A}(S^{d-2})(1-t^2)^{\frac{d-3}{2}},
$$
where $\mathrm{A}(\cdot)$ denotes the surface area, $\mathrm{A}(S^{k}) =\dfrac{2\pi^{(k+1)/2}}{\Gamma((k+1)/2)}$.
Hence the probability that $T\in[t,t+dt]$ is proportional to that area times $dt$.
$$
f_T(t)=\frac{\mathrm{A}(S^{d-2})}{\mathrm{A}(S^{d-1})}\,(1-t^2)^{\frac{d-3}{2}}
=\frac{\Gamma\left(\frac d2\right)}{\Gamma\left(\frac{d-1}{2}\right)\sqrt{\pi}}(1-t^2)^{\frac{d-3}{2}},\quad t\in[-1,1].
$$
With $t=\cos\theta$, $dt=-\sin\theta d\theta$, and $1-t^2=\sin^2\theta$. Therefore

$$
p_d(\theta)
= f_T(\cos\theta)\,\sin\theta
=\frac{\Gamma\left(\frac d2\right)}{\Gamma\left(\frac{d-1}{2}\right)\sqrt{\pi}}
\sin^{d-2}\theta,\quad \theta\in[0,\pi].
$$
\end{proof}

\begin{theorem}[Multi-channel Robustness]\label{thm:multi-channel-robustness}
Let \( \epsilon \) be an upper bound on \( p = \mu_M(s \in \Sigma^* \mid |\mathcal{F}_{v_j}(s) - m_{v_j}| \leq \sqrt{2d}) \) for all channels \( j \), with proxy function independence across channels. Then:
\[
\mathbb{E}[z'] \geq (1 - 2\epsilon)\sqrt{bT}, \quad \text{Var}[z'] \leq 4\epsilon(1 - \epsilon), \quad \text{SNR} \geq \frac{(1 - 2\epsilon)\sqrt{bT}}{2\sqrt{\epsilon(1 - \epsilon)}}
\]
\end{theorem}

\begin{proof}
Attack removes evidence in each channel with probability \( \epsilon \):
\[
\mathbb{E}[c_{(t,j)}'] = 1 - \epsilon, \quad \text{Var}[c_{(t,j)}'] = \epsilon(1 - \epsilon)
\]
Total evidence after attack:
\[
\mathbb{E}[N_g'] = bT(1 - \epsilon), \quad \text{Var}[N_g'] = bT\epsilon(1 - \epsilon)
\]

Let \( X = N_g' - 0.5bT \):
\[
\mathbb{E}[X] = bT(0.5 - \epsilon), \quad \text{Var}[X] = bT\epsilon(1 - \epsilon)
\]

{By Jensen's inequality for convex \( |\cdot| \):
\[
\mathbb{E}[z'] = \mathbb{E} \left[ \frac{|X|}{0.5\sqrt{bT}} \right] \geq \frac{|\mathbb{E}[X]|}{0.5\sqrt{bT}} = (1 - 2\epsilon)\sqrt{bT}
\]}

{For variance bound, \( X \) approximately \( N(\mu_X, \sigma_X^2) \) with \( \mu_X > 0 \):
\[
\text{Var}[|X|] \leq \sigma_X^2
\]}

{\[
\text{Var}[z'] = \frac{4}{bT}\text{Var}[|X|] \leq \frac{4}{bT} \cdot \sigma_X^2 = 4\epsilon(1 - \epsilon)
\]}

{Signal-to-noise ratio:
\[
\text{SNR} = \frac{\mathbb{E}[z']}{\sqrt{\text{Var}[z']}} \geq \frac{(1 - 2\epsilon)\sqrt{bT}}{2\sqrt{\epsilon(1 - \epsilon)}}
\]}
\end{proof}

{This theorem demonstrates that increasing the number of channels \( b \) improves robustness under the same attack strength. The signal-to-noise ratio grows with \( \sqrt{bT} \), showing that multi-channel constraints enhance detectability. The approximation \( \text{Var}[|X|] \leq \sigma_X^2 \) is reasonable when \( \mu_X/\sigma_X = \sqrt{bT} \cdot \frac{0.5 - \epsilon}{\sqrt{\epsilon(1 - \epsilon)}} \) is large, which occurs for typical experimental parameters (\( b = 4, T \approx 10-20, \epsilon \leq 0.3 \)), yielding \( \mu_X/\sigma_X \geq 2.8 \). This theoretical analysis complements the empirical results showing PMARK's superior robustness against various attacks.}

\section{Additional Related Work}\label{app:rw}
Text watermarking techniques for large language models can be broadly categorized into two main paradigms: token-level methods and sentence-level methods.

\subsection{Token-level Text Watermarking}
Token-level watermarking, the more traditional approach, embeds signals by manipulating the probability distribution of tokens during the decoding process. A foundational method in this category partitions the vocabulary into ``green'' and ``red'' lists, boosting the probability of green-listed tokens to create a detectable statistical bias~\citep{kgw}. Numerous subsequent works have built upon this concept. For instance, some methods focus on enhancing security and text quality by adaptively watermarking only high-entropy tokens~\citep{liu2024adaptive, lu2024entropy}, while others dynamically adjust watermark strength to balance effectiveness and quality~\citep{morphmark}. To improve robustness against spoofing attacks, researchers have explored contrastive representation learning to generate semantic-aware token lists~\citep{an2025defending}. Other variations introduce signals in different domains, such as sinusoidal perturbations in the token probability vector to guard against model distillation~\citep{zhao2023protecting}, or frequency-based signals detectable via STFT~\citep{xu2024freqmark}. {There are also some token-level watermarking techniques~\citep{watermax,Bahri2024blackwm} that introduce a scorer to select chunk candidates with better detectability, thereby facilitating watermarks for black-box LLMs.}

Multi-bit watermarking has also been a significant focus, with techniques employing error-correction codes~\citep{qu2025provably}, probability balancing~\citep{wang2023towards}, and majority-bit-based list construction~\citep{xu2025majority} to embed more complex information. Furthermore, some approaches frame watermarking as a problem of hypothesis testing, optimizing detection accuracy by coupling output tokens with pseudo-random generators~\citep{huang2023towards}. Despite their diversity and ingenuity, a fundamental limitation shared by nearly all token-level methods is their vulnerability to semantic-level attacks. Since the watermark is tied to specific token choices, paraphrasing or other meaning-preserving modifications can easily disrupt or erase the embedded signal.

\subsection{Sentence-level Text Watermarking}
To address the vulnerability of token-level methods to paraphrasing, sentence-level watermarking (SWM) has emerged as a more robust paradigm. These methods treat entire sentences as the fundamental unit for embedding information. A pioneering work in this area, SEMSTAMP~\citep{semstamp}, partitions the semantic embedding space using locality-sensitive hashing and employs rejection sampling to ensure that generated sentences fall into a pre-defined ``valid'' semantic region. This approach significantly enhances robustness against paraphrastic attacks.

Building on this idea, subsequent methods have explored various ways to define and select valid semantic regions. For example, SimMark~\citep{simmark} uses sentence embedding similarity, PersonaMark~\citep{zhang2024personamark} hashes sentence structures for personalization, and CoheMark~\citep{cohemark} leverages fuzzy c-means clustering and inter-sentence cohesion. Other approaches utilize advanced feature extractors such as sparse autoencoders~\citep{yu2025saemark} or fine-tuned LLM-based paraphrasers~\citep{xu2024robust} to embed multi-bit watermarks. However, a persistent challenge for most existing SWM techniques is their reliance on rejection sampling. Although effective for ensuring semantic validity, this mechanism often introduces significant distortion to the original text distribution, degrades text quality, and risks sampling failure when no valid candidates are found within a reasonable number of attempts. Furthermore, many of these methods lack rigorous theoretical guarantees for their robustness and distortion properties, a critical gap that our work aims to address.\par
\textbf{Concurrent Work.} A concurrent work is SAEMark~\citep{yu2025saemark}, which introduces a concept analogous to the proxy function used in our framework, referred to as a ``feature extractor.'' While both approaches share some conceptual similarities, \method offers notable advantages in several respects. (1) \textit{Task setting.} SAEMark targets multi-bit watermarking that embeds additional information into the generated text, whereas \method is designed for single-bit watermarking. (2) \textit{Text quality.} SAEMark selects sentences based on proximity to a target threshold, which may result in distortion. (3) \textit{Robustness.} SAEMark embeds watermark signals on a per-sentence basis, which has been shown to be vulnerable to certain attacks in their original work, while \method leverages multi-channel constraints to significantly improve resistance to such threats. (4) \textit{Flexibility.} \method allows for greater adaptability through customizable settings such as the pivot vector and the number of channels, which are more difficult to implement effectively within the SAEMark framework.

\section{Additional Discussion}
\subsection{Robust Random Seeds for Semantic-level Watermark}\label{app:rand}
\paragraph{Random Seeds in SWMs.} In previous SWM schemes such as SemStamp and k-SemStamp, the random seed for $s_n$ is extracted from the proxy value of the previous sentence $s_{n-1}$, as shown in Appendix~\ref{app:frame}. However, this approach may not be truly random, since consecutive sentences often relate to the same topic~\citep{topic}. Moreover, SimMark~\citep{simmark} eliminates the use of a random seed entirely, relying instead on a fixed green region. These approaches can reduce the randomness of the key $k$ and thus break the assumption in~\ref{eq:code_exp}. In this work, we use pre-defined random seeds that are de facto bound to specific positions $t$. This partially addresses the problem but leaves the challenge of achieving $n$-shot undetectability~\citep{unbiased} unresolved.

\begin{definition}[Sentence-level $n$-shot-undetectable]\label{def:nshot}
For a fixed context $\bm \pi$, we say that a watermarked LLM distribution is $n$-shot-undetectable compared to the original LLM if, for all $s\in \Sigma^\star$,
\[
\sum_{i=1}^n P_M(s\mid \bm \pi)=\sum_{k\in K}P(K)\sum_{i=1}^{n} P_M^w(s\mid \bm \pi;k).
\]
\end{definition}

Imagine a user asking the model the same question $n$ times. An unwatermarked model or $n$-shot-undetectable scheme would generate answers $x_i, i=1,\dots,n$ i.i.d. from $P_M(s\mid \bm \pi)$, while watermarking schemes such as SemStamp would always generate answers based on the same $k$, since the context is identical across queries. A stronger notion, \emph{$n$-sequence distortion-free}, is proposed in SynthID, which removes the requirement of identical contexts in Definition~\ref{def:nshot}.

\begin{definition}[Sentence-level $n$-sequence Distortion-free]
For a sequence of $n$ prompts $\bm \pi_1,\dots,\bm \pi_n$ and a sequence of $n$ responses $y_1,...,y_n \in \Sigma^\star$, we say that a watermarked LLM distribution is $n$-sequence distortion-free if, for all $s\in \Sigma^\star$,
\[
\sum_{i=1}^n P_M(y_i\mid \bm \pi_i)=\sum_{k\in K}P(K)\sum_{i=1}^{n} P_M^w(y_i\mid \bm \pi_i;k;(\bm \pi_1,y_1),\dots,(\bm \pi_{i-1},y_{i-1})).
\]
\end{definition}

In token-level watermarking methods such as Unbiased~\citep{unbiased} and SynthID~\citep{synthid}, this is achieved by skipping previously seen contexts and directly outputting unwatermarked tokens. 

However, in semantic-level watermark this is difficult to implement. Using the proxy of the previous sentence like SemStamp reintroduces the problem of reduced randomness in $K$. Furthermore, mixing SWM with token-level random seed generators may increase vulnerability to paraphrase attacks. Therefore, we consider semantic-level dynamic random seed generation to be a challenging open problem and leave it for future work. In this work, we directly adopt pre-defined random seeds following~\cite{christ2024undetectable} to achieve single-shot distortion-freeness.

\subsection{Robustness to Cross-sentence Paraphrasing Attack}\label{app:seed}
{\paragraph{Context-aware semantic-robust Random Seeds.} Based on the discussion in Appendix~\ref{app:rand}, we further explore the potential of pseudo-random, context-aware, yet semantically robust random seeds within the \method framework. Specifically, we define $b$ additional orthogonal vectors $\zeta_1, \dots, \zeta_b$, which are distinct from the pivot vectors $v_1, \dots, v_b$ used during sampling.}

{The random seed $r_{(t,i)}$ for sentence $s_t$ is then based on the signal from the last $w$ sentence embeddings $\mathcal T(s_{t-w:t-1})$:
\[
r_{(t,i)}= [\langle \zeta_i, \mathcal T(s_{t-w:t-1}) \rangle >0 ]= \left[\frac{\zeta_i \cdot \mathcal T(s_{t-w:t-1})}{\|\zeta_i\| \cdot \|\mathcal T(s_{t-w:t-1})\|}>0\right].
\]}
{We leave the rigorous analysis of the distribution of these random seeds $r_{(t,i)}$ for future work. However, this approach is intuitively more random than seeds directly adapted from the proxy value of the last sentence, as employed by SemStamp and k-SemStamp.}

{\paragraph{Analysis of Cross-sentence Paraphrasing Attacks}} 
\begin{table*}[ht]
\centering
\caption{{Results for baseline methods and \method on OPT-1.3B and Mistral-7B across the C4 and BOOKSUM benchmarks, under extreme paragraph-level paraphrasing attacks by GPT-3.5-turbo. For each dataset and model combination, we report TP@FP=1\%, TP@FP=5\%, and AUC from left to right. \textbf{Bold} denotes the best result, and \underline{underlined} denotes the second-best.}}
\begin{adjustbox}{width=\columnwidth,center}
{\begin{tabular}{l|c|c|c|c}
\hline
\textbf{Method} & \textbf{(C4, OPT-1.3B)} & \textbf{(C4, Mistral-7B)} & \textbf{(BOOKSUM, OPT-1.3B)} & \textbf{(BOOKSUM, Mistral-7B)} \\
\hline
\textbf{KGW} & 34.20/68.84/90.51 & 37.82/\underline{65.17}/91.62 & 2.81/5.64/55.25 & 42.85/70.95/90.88 \\
\textbf{UPV} & 59.20/78.98/95.68 & 10.62/46.42/89.03 & \underline{81.56}/\underline{88.66}/\textbf{97.86} & 35.41/60.83/90.51 \\
\textbf{SynthID} & 1.83/25.98/69.14 & 7.11/12.82/66.33 & 4.45/17.69/62.15 & 5.09/12.19/64.88 \\
\textbf{KSEMSTAMP} & 0.00/0.81/53.77 & 0.00/2.26/50.79 & 2.60/8.44/56.38 & 9.60/15.96/54.64 \\
\rowcolor{black!30}\textbf{Ours(Online)} & \underline{88.46}/\textbf{96.15}/\underline{97.63} & \textbf{96.97}/\textbf{100.00}/\textbf{99.91} & \textbf{82.30}/\textbf{94.44}/\underline{97.61} & \underline{75.00}/\underline{87.50}/\underline{98.18} \\
\rowcolor{black!10}\textbf{Ours(Offline)} & \textbf{75.00}/\underline{95.83}/\textbf{98.26} & \underline{41.67}/63.95/\underline{91.64} & 70.83/82.11/96.33 & \textbf{77.45}/\textbf{95.65}/\textbf{98.39} \\
\hline
\end{tabular}}
\end{adjustbox}
\label{tab:gpt_oa}
\end{table*}
{In Table \ref{tab:gpt_oa}, we present the results of \method under paragraph-level paraphrasing attacks with $w=3$, an attack setting where most existing watermark methods are nearly powerless. The results demonstrate the potential of \method under extreme cross-sentence paraphrasing attacks, where 1) Online \method exhibits a remarkable TPR@FPR1\% $>$ 75\% across different backbones and datasets; 2) Offline \method maintains a highly competitive performance compared with baselines.}

{\paragraph{Experiment Details.} The prompt used to instruct gpt-3.5-turbo for paragraph-level paraphrasing is as follows:}

\begin{verbatim}
<system>
You are a helpful assistant to rewrite the text.
</system>
<user>
Please rewrite the following text, avoiding the use of 
same words or phrases as the original text as much as possible. 
You are able to merge sentences or change their order: 
{whole_text}
</user>
\end{verbatim}

\begin{verbatim}
<Original Watermarked text>
The troupe's goal is to transform Passepartout into a clown-like 
character. Passepartout's story starts in the medieval town of 
Qeynan. He is being held captive by a man named Laidrin. He is a 
sorcerer who makes Passepartout into a clown-like character. 
Passepartout's story begins with a peasant, Fath, who has been 
betrayed by the king's men. Fath flees with his wife ...
<Is_watermarked: True>

<Detected Text after Attack>
The troupe aims to transform Passepartout into a clown-like figure. 
The story begins in the medieval town of Qeynan, where Passepartout 
is held captive by Laidrin, a sorcerer who casts a spell on him, 
turning him into a clownish character. Meanwhile, a peasant named 
Fath, betrayed by the king’s men, flees into the forest with ...
<Is_watermarked: True>
\end{verbatim}

\section{Analysis of Existing SWM Methods}\label{app:frame}
In Section \ref{sec:pre}, we presented a brief analysis of existing SWM methods, whose core lies in the proxy functions they employ. \par
\textbf{SemStamp.} Specifically, the proxy function of SemStamp can be defined as
\[
\mathcal F_\text{SemStamp}= \mathrm{LSH}(\mathcal{T}(s))= \big[\mathrm{LSH}_1(\mathcal{T}(s))\ \|\ \cdots\ \|\ \mathrm{LSH}_n(\mathcal{T}(s))\big],
\]
where $\mathrm{LSH}_i(v) = \mathrm{sign}(u_i^\top v)$~\citep{lsh1,lsh2}, $\{u_i\}_{i=1}^n \subset \mathbb R^d$ is a set of randomly initialized vectors, and $\mathcal E$ is a text encoder with $\mathcal{T}(s)\in \mathbb R^d$. The Green Region $G_t$ for each sentence position $t$ is then defined as
\[
G_t=U'[:\gamma \cdot 2^d], \quad U'= \mathrm{Shuffle}(U; r_t), \quad r_t=q\cdot \mathcal F(s_{t-1}),
\]
where $\mathrm{Shuffle}(U; r)$ denotes shuffling the ordered list $U$ with random seed $r$, and $q$ is a large prime.\par

\textbf{k-SemStamp.} For k-SemStamp~\citep{semstamp},
\[
\mathcal F_\text{k-SemStamp} = \underset{j}{\operatorname{argmax}} \ \frac{u_j^\top \mathcal T(s)}{\lVert u_j\rVert\cdot\lVert\mathcal T(s)\rVert},
\]
where $\{u_j\}$ are cluster centers obtained via k-means clustering on the entire dataset. Similar to SemStamp~\citep{semstamp}, the Green Region $G_t$ is given by
\[
G_t=U'[:\gamma \cdot 2^d], \quad U'= \mathrm{Shuffle}(U; r_t), \quad r_t=q\cdot \mathcal F(s_{t-1}).
\]

\textbf{SimMark.} For SimMark~\citep{simmark},
\[
\mathcal F_\text{SimMark}(s_t)=\frac{\mathcal T(s_{t-1})^\top \mathcal T(s_t)}{\lVert \mathcal T(s_{t-1})\rVert\cdot\lVert\mathcal T(s_t)\rVert},
\]
where $s_{t-1}$ is the previous sentence. A fixed interval is then used as the Green Region. Under the cosine similarity used above, $G_t$ is always
\[
G_t=[0.68,\,0.76].
\]
Such a pre-defined threshold carries a high risk that no valid candidate sentence has a proxy lying within the Green Region. Moreover, these carefully tuned hyperparameters may perform poorly in other settings, as shown in Table \ref{tab:main}.

\section{Offline Algorithm}\label{app:offline}
To describe the generation and detection process of the offline version of \method more precisely, we illustrate it in pseudocode, as shown in Figure \ref{alg:offline}.

\begin{theorem}[Distortion Bound of Offline PMark]\label{thm:distortion-bound}
{Let \( p_j = P_M(\mathcal{F}_{v_j}(s) > 0 \mid \pi) \), \( |p_j - 0.5| \leq \epsilon \), 
\(\delta_{\text{TV}} = \frac{1}{2} \sum_{s \in \Sigma^*} |P_M^{w}(s \mid \pi) - P_M(s \mid \pi)|\).
Then \(\delta_{\text{TV}} \leq \epsilon\) for single-channel case.}
\end{theorem}

\begin{proof}
{For single-channel case:
\[
P_M^{w}(s \mid \pi) = 
\begin{cases} 
P_M(s \mid \pi) \cdot \frac{1}{2p} & \text{if } \mathcal{F}_v(s) > 0 \\
P_M(s \mid \pi) \cdot \frac{1}{2(1-p)} & \text{if } \mathcal{F}_v(s) < 0
\end{cases}
\]}

{Therefore,
\begin{align}
\delta_{\text{TV}} &= \frac{1}{2} \left[ \sum_{s: \mathcal{F}_v(s) > 0} P_M(s \mid \pi) \left| \frac{1}{2p} - 1 \right| + \sum_{s: \mathcal{F}_v(s) < 0} P_M(s \mid \pi) \left| \frac{1}{2(1-p)} - 1 \right| \right]
\notag\\
&= \frac{1}{2} \left[ p \cdot \frac{|1-2p|}{2p} + (1-p) \cdot \frac{|1-2p|}{2(1-p)} \right]
= \frac{1}{2} \left[ \frac{|1-2p|}{2} + \frac{|1-2p|}{2} \right]
= \frac{|1-2p|}{2}
\end{align}
Since \( |p - 0.5| \leq \epsilon \), \( |1-2p| \leq 2\epsilon \), thus \(\delta_{\text{TV}} \leq \epsilon\).} 
\end{proof}

{For the multi-channel offline PMark, the generation process selects the sentence with the highest watermark evidence score \(\text{score}(s) = \sum_{j=1}^b \mathbf{1}_{\mathcal{F}_{v_j}(s)>0} \cdot r_j\). The distortion depends on the joint distribution of the proxy function values across channels.}

{Assuming that the proxy functions are orthogonal and that the signs \(\mathbf{1}_{\mathcal{F}_{v_j}(s)>0}\) are independent across channels for a random sentence \(s\), the score distribution can be analyzed. If \(p_j = 0.5\) for all \(j\), the selection is uniform, and distortion is zero. If \(|p_j - 0.5| \leq \epsilon\), the distortion can be bounded by considering the probability that the score deviates from its expectation. Empirical results show that the perplexity of offline PMark is close to that of the unwatermarked text, indicating that \(\epsilon\) is small in practice, and thus the distortion is negligible.}

\section{Implementation Details}
\subsection{Baselines}\label{app:setup}
\textbf{MarkLLM.} We use the official implementation of MarkLLM~\citep{markllm}\footnote{\url{https://github.com/THU-BPM/MarkLLM}} to reproduce the results of various token-level baselines, including KGW~\citep{kgw}, UPV~\citep{upv}, MorphMark~\citep{morphmark}, SIR~\citep{sir}, EXP~\citep{exp}, EXPGumbel~\citep{exp}, and SynthID~\citep{synthid}. During generation, we set \texttt{max\_new\_tokens} to 256, comparable to the generation length used by semantic-level methods. The temperature and \texttt{top\_p} are fixed at $0.7$ and $0.95$, respectively, across all experiments to ensure fair comparison. In addition, for baselines that depend on external networks (e.g., SIR~\citep{sir} and UPV~\citep{upv}), we use the official weights provided by MarkLLM.

\textbf{k-SemStamp.} We adapt the official k-SemStamp implementation from MarkLLM to align it with other semantic-level methods. Specifically, we set \texttt{max\_new\_sentences} to 12 (instead of a token count), which is applied consistently across all semantic-level baselines. We set \texttt{max\_trials} to the default value of $100$ for SemStamp, k-SemStamp, and SimMark. For the embedding model, we use the fine-tuned \texttt{all-mpnet-base-v2}~\citep{reimers-2019-sentence-bert}\footnote{\url{https://huggingface.co/AbeHou/SemStamp-c4-sbert}} provided by \citet{ksemstamp}. We also employ the k-means centroid weights\footnote{\url{https://github.com/abehou/SemStamp}} released by \citet{ksemstamp}, trained on the C4~\citep{c4} and BookSum~\citep{booksum} datasets. For the backbone generation models OPT-1.3B~\citep{opt}\footnote{\url{https://huggingface.co/facebook/opt-1.3b}} and Mistral-7B-v0.1~\citep{mistral}\footnote{\url{https://huggingface.co/mistralai/Mistral-7B-v0.1}}, we use the original Hugging Face versions to ensure fair comparison with all other methods. All other hyperparameters are kept at their default settings in the k-SemStamp implementation of MarkLLM. For all SWM experiments, we assume that sentence segmentation is consistent during both generation and detection, such that we evaluate only the intrinsic capability of the SWM method.

\textbf{SemStamp.} For SemStamp~\citep{semstamp}, we use the authors’ official implementation, setting temperature $=0.7$, \texttt{top\_p} $=0.95$, and \texttt{max\_new\_sentences} $=12$ to match the other semantic-level methods. We also use the fine-tuned embedding model and the original generation model to ensure fair comparison.

\textbf{SimMark.} We use the official implementation of SimMark~\citep{simmark}\footnote{\url{https://github.com/DabiriAghdam/SimMark}}. We adopt cosine distance with the recommended validity interval [0.68, 0.76]. Additionally, we use the recommended embedding model \texttt{instructor-large}~\citep{instructor}\footnote{\url{https://huggingface.co/hkunlp/instructor-large}} and the original generation models OPT-1.3B and Mistral-7B-v0.1 to maintain comparability with other methods. The factor $K$ is set to 250 for the soft-$z$ test, following the original paper.

\textbf{\method.} During generation, we use the original \texttt{all-mpnet-base-v2} embedding model~\citep{reimers-2019-sentence-bert}\footnote{\url{https://huggingface.co/sentence-transformers/all-mpnet-base-v2}} without fine-tuning. Pivot vectors are generated via QR decomposition of a Gaussian random matrix~\citep{linalg} to ensure orthogonality. We set temperature $=0.7$, \texttt{top\_p} $=0.95$, and \texttt{max\_new\_sentences} $=12$, consistent with other semantic-level methods. The sample budget is $N=64$, and the number of channels is $b=4$. For detection, we set the numerical threshold $\delta=0.001$ and the smoothing factor $K=150$.

\subsection{Adversarial Attacks}\label{app:attack}
\textbf{Doc-P.} We conduct paraphrase attacks following the official SemStamp implementation~\cite{semstamp,ksemstamp}. Specifically, for the paraphrasers Pegasus~\citep{pegasus}\footnote{\url{https://huggingface.co/tuner007/pegasus_paraphrase}} and Parrot~\citep{parrot}\footnote{\url{https://github.com/PrithivirajDamodaran/Parrot_Paraphraser}}, we adopt the bigram-style attack proposed by SemStamp~\cite{semstamp}, which selects paraphrases with minimal bigram overlap relative to the source. For the GPT~\citep{openai2022chatgpt} paraphrasing attack, we use the prompt provided by MarkLLM~\citep{markllm}. All attacks are applied at the sentence level across all methods and settings to simulate real-world scenarios.

\textbf{Doc-T.} We implement the back-translation attack following MarkLLM. Specifically, we use Llama-3.1-8B~\citep{llama3}\footnote{\url{https://huggingface.co/meta-llama/Llama-3.1-8B}} to translate watermarked English sentences into Spanish and then back into English in an attempt to remove watermark evidence. However, as noted in Section~\ref{sec:main_res}, this attack can be biased and unstable in some cases.

\textbf{Word-D and Word-S.} We implement word deletion (Word-D) and synonym substitution (Word-S) attacks following the official MarkLLM implementation. We set the attack ratios to 5\%, 15\%, and 30\% to evaluate the overall robustness of our method. Additional experimental results are reported in Appendix~\ref{app:add}.

\section{Supplemented Experimental Results}\label{app:add}
\subsection{Robustness to Adversarial Attacks}\label{app:add_robust}
We evaluate the robustness of existing watermarking methods under various attacks, with results shown in Table~\ref{tab:booksum_tp} and Table~\ref{tab:c4_tp}.
\begin{figure*}[t]
\centering
\begin{minipage}{0.55\textwidth}
\begin{algorithm}[H]
\DontPrintSemicolon
\small
\caption{\method Offline Generation}
\KwIn{$M$; $s^{(0)}$; $\mathcal T$; $T$; $b$; $v^{(1)},\dots,v^{(b)}$; $R=\{r_{(t,j)}\}$; $N$} 
\KwOut{$s^{(1)}, \dots, s^{(T)}$}

\For{$t \gets 1$ \KwTo $T$}{
  \For{$j \gets 1$ \KwTo $b$}{$\mathcal F_j(s)\gets \langle v^{(j)}, \mathcal T(s)\rangle$}
  $u_{\max}\gets -\infty$;\quad $B \gets \varnothing$
  \For{$i \gets 1$ \KwTo $N$}{
    Sample $x^{(i)} \sim P_M(s\mid s^{(0:t-1)})$\;
    $\mathrm{Sig}(x^{(i)}) \gets \big[\mathcal F_j(x^{(i)})>0\big]_{j=1}^{b}$
    $u^{(i)} \gets \sum_{j=1}^{b}\mathrm{Sig}(x^{(i)})(j)$
    \If{$u^{(i)} = b$}{
      $s^{(t)} \gets x^{(i)}$\;
      \textbf{continue}
    }
    \If{$u^{(i)} > u_{\max}$}{
      $u_{\max} \gets u^{(i)}$;\quad $B \gets \{x^{(i)}\}$
    }
    \ElseIf{$u^{(i)} = u_{\max}$}{
      $B \gets B \cup \{x^{(i)}\}$
    }
  }
  Select $s^{(t)} \sim \mathrm{Un}(B)$\;
}
\Return{$s^{(1)}, \dots, s^{(T)}$}
\end{algorithm}
\end{minipage}
\hfill
\begin{minipage}{0.43\textwidth}
\begin{algorithm}[H]
\DontPrintSemicolon
\small
\caption{\method Offline Detection}
\KwIn{$S=[s^{(0)},\dots,s^{(T)}]$; $\mathcal T$; $v^{(1\ldots b)}$; $R=\{r_{(t,j)}\}$; $\delta$; $K$; $\alpha$}
\KwOut{\texttt{True} or \texttt{False}}
\For{$j \gets 1$ \KwTo $b$}{Define $\mathcal F_j(s)=\langle v^{(j)}, \mathcal T(s) \rangle$\;}
$N_g \gets 0$;\quad $N_{\text{total}} \gets b \cdot T$\;
\For{$t \gets 1$ \KwTo $T$}{
  \For{$j \gets 1$ \KwTo $b$}{
    $x_{(t,j)} \gets \mathcal F_j(s^{(t)})$;\quad $m_{(j)} \gets 0$\;
    \uIf{$r_{(t,j)}=1$ \textbf{ and } $x_{(t,j)} > m_{(j)} - \delta$}{$c_{(t,j)} \gets 1$}
    \uElseIf{$r_{(t,j)}=0$ \textbf{ and } $x_{(t,j)} < m_{(j)} + \delta$}{$c_{(t,j)} \gets 1$}
    \Else{$c_{(t,j)} \gets \exp\!\big(-K\,|x_{(t,j)} - m_{(j)}|\big)$}
    $N_g \gets N_g + c_{(t,j)}$;
  }
}
$z \gets \dfrac{\left|N_g - 0.5\,N_{\text{total}}\right|}{\sqrt{0.25\,N_{\text{total}}}}$;
\Return{$(z > z_\alpha)$}
\end{algorithm}
\end{minipage}
\vspace{-2mm}
\caption{\method Offline Watermarking: Left: Generation using fixed prior median $=0$; Right: Detection using soft z-test.}
\label{alg:offline}
\end{figure*}
\begin{table*}[t]
\huge
    \centering
\caption{Overall results for baseline methods and \method on OPT-1.3B and Mistral-7B on BOOKSUM. \textbf{Doc-T} denotes back-translation attack by LLama-3.1-8B, while Word-D and Word-S are word-level attacks under different ratios.}
\vspace{2mm}
\begin{adjustbox}{width=\columnwidth,center}
\begin{tabular}{l|ccccccccccc}
\toprule
\textbf{Method} & \textbf{No Attack\textcolor{red}{$\uparrow$}} & \textbf{Doc-P (Pegasus)\textcolor{red}{$\uparrow$}} & \textbf{Doc-P (Parrot)\textcolor{red}{$\uparrow$}} & \textbf{Doc-P (GPT)\textcolor{red}{$\uparrow$}} & \textbf{Doc-T\textcolor{red}{$\uparrow$}} & \textbf{Word-D(0.05)\textcolor{red}{$\uparrow$}} & \textbf{Word-D(0.15)\textcolor{red}{$\uparrow$}} & \textbf{Word-D(0.30)\textcolor{red}{$\uparrow$}} & \textbf{Word-S(0.05)\textcolor{red}{$\uparrow$}} & \textbf{Word-S(0.15)\textcolor{red}{$\uparrow$}} & \textbf{Word-S(0.30)\textcolor{red}{$\uparrow$}}\\
\midrule
\multicolumn{12}{c}{\textbf{OPT-1.3B}} \\
\midrule
EXP~\citeyearpar{exp}&98.8/99.0/99.3&4.2/14.6/52.5&5.4/16.0/55.2&4.4/13.4/53.0&5.4/17.8/53.8&3.8/14.2/56.5&2.2/11.0/56.0&1.6/7.2/54.3&6.0/17.2/57.1&2.4/14.4/55.5&3.2/11.4/54.4\\
EXPGumbel~\citeyearpar{exp}&99.4/99.4/99.4&12.4/20.2/50.6&12.4/20.0/52.2&13.0/23.2/54.4&21.4/30.6/55.4&23.8/32.6/58.0&20.4/27.8/57.2&13.4/23.2/57.4&24.8/34.0/58.1&19.8/28.6/56.3&16.4/24.6/52.1\\
KGW~\citeyearpar{kgw}&100.0/100.0/100.0&1.4/4.6/56.8&1.4/5.9/55.7&0.6/4.2/54.6&6.0/13.0/61.4&1.2/6.9/58.3&1.2/5.7/58.6&0.6/5.9/56.5&1.4/8.3/59.4&1.2/7.8/60.6&1.4/12.3/62.9\\
SIR~\citeyearpar{sir}&99.6/100.0/99.8&84.2/93.8/97.5&80.6/93.8/98.1&49.8/82.8/95.9&77.4/89.8/95.6&96.2/98.8/99.6&86.0/93.4/96.9&65.2/78.6/89.5&98.6/\textbf{100.0}/99.8&92.2/98.4/99.5&69.8/88.6/95.7\\
UPV~\citeyearpar{upv}&100.0/100.0/100.0&90.0/95.6/98.7&89.2/98.2/99.3&73.6/91.0/98.0&\textbf{98.6}/\textbf{99.6}/\textbf{99.7}&\textbf{100.0}/\textbf{100.0}/\textbf{100.0}&\textbf{99.8}/\textbf{100.0}/\textbf{100.0}&\textbf{99.4}/\textbf{100.0}/\textbf{100.0}&\textbf{100.0}/\textbf{100.0}/\textbf{100.0}&\textbf{100.0}/\textbf{100.0}/\textbf{100.0}&\textbf{99.4}/\textbf{99.6}/\textbf{99.8}\\
SynthID~\citeyearpar{synthid}&100.0/100.0/99.9&38.2/56.6/86.9&25.8/43.6/83.8&1.8/14.6/65.7&96.0/97.0/99.2&99.2/\textbf{100.0}/99.9&95.4/98.0/99.5&56.0/77.4/94.5&99.6/\textbf{100.0}/\underline{99.9}&95.6/98.8/99.6&54.8/76.0/94.5\\
MorphMark~\citeyearpar{morphmark}&100.0/100.0/100.0&1.8/7.7/55.2&1.2/6.1/54.6&0.8/4.2/51.1&5.0/13.3/60.0&1.8/7.3/57.9&1.2/7.2/56.7&1.2/7.5/57.0&2.0/9.2/58.7&3.4/10.5/60.0&1.0/8.2/61.6\\
SemStamp~\citeyearpar{semstamp}&97.7/98.8/99.4&89.0/93.2/97.3&90.7/93.0/97.5&79.6/85.4/94.2&87.8/92.8/97.7&97.6/98.2/99.3&89.4/94.4/98.4&67.5/80.8/95.1&97.7/98.6/99.5&93.2/96.8/99.1&78.1/86.6/96.4\\
KSEMSTAMP~\citeyearpar{ksemstamp}&99.6/100.0/99.9&74.5/84.6/97.4&72.3/86.6/97.5&62.1/72.9/95.2&72.3/84.2/97.2&93.8/97.0/99.6&64.7/76.8/96.3&29.1/44.3/86.7&99.2/\underline{99.8}/99.9&81.8/90.6/98.5&44.1/59.3/92.5\\
SimMark~\citeyearpar{simmark}&88.2/94.0/98.8&23.4/42.8/84.6&29.0/45.6/87.3&22.0/40.4/83.6&22.2/31.8/76.7&68.4/82.4/97.0&39.8/56.8/91.1&11.8/26.6/79.7&80.4/93.4/98.4&45.8/63.0/92.2&19.6/34.0/79.5\\
\rowcolor{black!30}Ours(Online)&99.8/99.8/99.9&\textbf{97.4}/\textbf{99.0}/\textbf{99.8}&\textbf{96.8}/\textbf{99.0}/\textbf{99.8}&\textbf{95.4}/\textbf{99.0}/\textbf{99.6}&93.2/96.8/99.2&\underline{99.8}/\underline{99.8}/\underline{99.9}&96.0/97.8/99.6&86.2/94.4/98.7&\underline{99.8}/\underline{99.8}/99.9&98.2/99.0/99.8&\underline{90.8}/95.2/99.1\\
\rowcolor{black!10}Ours(Offline)&99.4/99.6/99.8&\underline{93.2}/\underline{98.2}/\underline{99.5}&\underline{95.2}/\underline{98.6}/\underline{99.5}&\underline{94.2}/\underline{98.8}/\underline{99.6}&\underline{96.4}/\underline{98.8}/\underline{99.6}&98.6/99.4/99.8&\underline{98.0}/\underline{99.0}/\underline{99.7}&\underline{86.7}/\underline{96.0}/\underline{99.0}&99.4/99.6/99.9&\underline{98.4}/\underline{99.4}/\underline{99.8}&87.0/\underline{96.6}/\underline{99.4}\\
\midrule
\multicolumn{12}{c}{\textbf{Mistral-7B}} \\
\midrule
EXP~\citeyearpar{exp}&99.8/99.8/99.9&36.4/53.0/85.8&47.2/61.8/89.1&10.4/22.0/71.7&93.6/95.4/97.7&98.4/98.8/99.4&95.6/97.4/99.2&76.8/86.6/96.6&98.6/98.6/99.5&97.2/98.2/99.1&82.2/92.0/97.6\\
EXPGumbel~\citeyearpar{exp}&99.6/99.6/99.7&65.6/77.2/90.8&78.8/85.2/92.8&38.4/53.2/81.9&95.6/96.4/98.0&98.6/99.4/99.4&98.2/98.6/99.3&95.2/98.0/99.1&98.6/99.2/99.5&98.2/98.4/99.2&95.0/97.4/99.1\\
KGW~\citeyearpar{kgw}&100.0/100.0/100.0&84.2/95.4/98.7&90.2/97.3/99.2&35.8/68.9/92.1&93.0/96.8/98.8&\textbf{100.0}/\textbf{100.0}/\textbf{100.0}&\textbf{99.8}/\textbf{100.0}/\textbf{100.0}&\textbf{99.0}/\underline{99.7}/\underline{99.8}&\textbf{100.0}/\textbf{100.0}/\textbf{100.0}&\textbf{100.0}/\textbf{100.0}/\textbf{100.0}&\textbf{99.6}/\textbf{99.9}/\underline{99.9}\\
SIR~\citeyearpar{sir}&100.0/100.0/99.9&85.0/92.6/98.0&80.8/92.0/98.2&41.2/71.4/93.6&81.2/91.0/96.9&97.8/99.6/99.8&90.0/95.6/99.0&72.4/81.2/93.1&99.6/\textbf{100.0}/99.9&92.0/97.8/99.4&73.0/87.0/96.5\\
UPV~\citeyearpar{upv}&99.6/100.0/100.0&72.6/96.2/98.5&73.0/96.9/99.0&33.4/76.2/94.6&92.4/\underline{97.9}/\underline{99.5}&\underline{99.8}/\textbf{100.0}/\underline{100.0}&\underline{99.6}/\textbf{100.0}/\underline{100.0}&\underline{98.6}/\textbf{99.8}/\textbf{99.9}&\underline{99.8}/\textbf{100.0}/\underline{100.0}&\underline{99.6}/\textbf{100.0}/\underline{100.0}&\underline{98.8}/\underline{99.8}/\textbf{99.9}\\
SynthID~\citeyearpar{synthid}&100.0/100.0/100.0&30.2/49.2/83.5&28.2/48.4/85.5&3.0/13.0/65.0&\textbf{95.8}/97.4/99.4&99.6/99.8/99.9&96.4/98.6/99.7&63.4/80.6/95.2&99.6/\underline{99.8}/99.9&97.4/98.4/99.6&53.2/74.4/94.5\\
MorphMark~\citeyearpar{morphmark}&99.8/100.0/100.0&68.2/88.9/97.2&77.4/93.3/98.4&24.6/52.8/87.7&89.0/93.5/97.4&\textbf{100.0}/\textbf{100.0}/100.0&99.4/\underline{99.9}/99.9&93.8/99.0/99.6&\textbf{100.0}/\textbf{100.0}/100.0&99.0/\textbf{100.0}/99.9&92.2/98.7/99.6\\
SemStamp~\citeyearpar{semstamp}&97.7/98.4/99.5&88.1/92.8/97.8&92.4/95.2/98.1&76.0/83.0/94.5&87.1/94.2/98.5&96.3/98.0/99.4&87.9/94.2/98.6&63.4/76.8/93.9&96.7/98.4/99.4&89.4/95.0/98.7&64.4/77.6/94.6\\
KSEMSTAMP~\citeyearpar{ksemstamp}&99.0/99.0/99.7&45.5/64.1/92.4&53.9/70.3/94.8&43.5/61.9/91.6&67.3/79.8/95.9&85.4/92.8/98.9&47.9/64.5/93.1&19.2/33.1/80.0&96.4/97.8/99.5&63.1/80.0/96.2&28.9/47.3/88.2\\
SimMark~\citeyearpar{simmark}&78.0/89.4/97.9&19.8/40.6/83.6&28.8/46.2/85.9&23.4/45.0/84.9&21.0/35.0/74.3&54.6/77.0/95.2&27.6/49.2/88.4&16.4/32.6/78.7&70.6/84.6/97.1&40.4/61.2/91.3&17.4/32.6/78.9\\
\rowcolor{black!30}Ours(Online)&100.0/100.0/99.9&\textbf{94.4}/\textbf{98.2}/\textbf{99.6}&\textbf{96.6}/\textbf{98.8}/\textbf{99.7}&\textbf{96.8}/\textbf{99.0}/\textbf{99.7}&94.4/97.4/99.5&99.6/99.8/99.9&98.2/\textbf{100.0}/99.9&82.9/93.2/98.7&\textbf{100.0}/\textbf{100.0}/99.9&99.0/\textbf{100.0}/99.9&93.6/97.8/99.5\\
\rowcolor{black!10}Ours(Offline)&99.4/100.0/99.9&\underline{91.4}/\underline{97.4}/\underline{99.3}&\underline{94.6}/\underline{97.8}/\underline{99.6}&\underline{94.4}/\underline{98.6}/\underline{99.6}&\underline{95.8}/\textbf{99.2}/\textbf{99.7}&99.4/\underline{100.0}/99.9&95.4/99.4/99.8&78.9/95.8/98.6&99.4/99.8/99.9&98.6/\underline{99.6}/99.8&84.9/95.2/98.8\\
\bottomrule
\end{tabular}
\end{adjustbox}
\label{tab:booksum_tp}
\end{table*}

\begin{table*}[h]
\huge
    \centering
\caption{Overall results for baseline methods and \method on OPT-1.3B and Mistral-7B on C4.}
\vspace{2mm}
\begin{adjustbox}{width=\columnwidth,center}
\begin{tabular}{l|ccccccccccc}
\toprule
\textbf{Method} & \textbf{No Attack\textcolor{red}{$\uparrow$}} & \textbf{Doc-P (Pegasus)\textcolor{red}{$\uparrow$}} & \textbf{Doc-P (Parrot)\textcolor{red}{$\uparrow$}} & \textbf{Doc-P (GPT)\textcolor{red}{$\uparrow$}} & \textbf{Doc-T\textcolor{red}{$\uparrow$}} & \textbf{Word-D(0.05)\textcolor{red}{$\uparrow$}} & \textbf{Word-D(0.15)\textcolor{red}{$\uparrow$}} & \textbf{Word-D(0.30)\textcolor{red}{$\uparrow$}} & \textbf{Word-S(0.05)\textcolor{red}{$\uparrow$}} & \textbf{Word-S(0.15)\textcolor{red}{$\uparrow$}} & \textbf{Word-S(0.30)\textcolor{red}{$\uparrow$}}\\
\midrule
\multicolumn{12}{c}{\textbf{OPT-1.3B}} \\
\midrule
EXP~\citeyearpar{exp}&99.0/99.6/99.8&54.6/68.2/86.9&42.4/59.6/84.6&21.2/37.2/73.3&90.6/93.8/96.9&98.0/98.2/99.5&96.0/97.8/99.2&72.0/85.6/95.8&98.4/98.8/99.6&96.8/98.2/99.5&80.4/89.4/96.9\\
EXPGumbel~\citeyearpar{exp}&98.6/98.6/99.3&75.2/84.6/91.8&74.6/81.0/92.8&55.6/66.0/86.0&94.2/95.6/97.4&97.2/98.0/99.0&97.2/97.8/98.8&95.0/95.8/98.1&97.8/98.2/99.1&96.8/97.4/98.7&94.0/95.4/98.2\\
KGW~\citeyearpar{kgw}&100.0/100.0/100.0&89.3/\underline{97.0}/\underline{99.2}&76.2/91.4/98.1&51.4/78.5/95.0&90.5/94.6/97.8&\textbf{100.0}/\textbf{100.0}/\textbf{100.0}&\textbf{100.0}/\textbf{100.0}/\textbf{100.0}&\textbf{99.5}/\textbf{100.0}/\textbf{100.0}&\textbf{100.0}/\textbf{100.0}/\textbf{100.0}&\textbf{100.0}/\textbf{100.0}/\textbf{100.0}&\underline{98.5}/\underline{99.8}/\underline{99.9}\\
SIR~\citeyearpar{sir}&99.8/100.0/99.9&91.6/94.6/98.7&83.4/90.6/97.6&74.2/88.6/97.7&82.8/89.4/94.9&98.6/99.4/99.9&92.6/95.6/98.2&79.0/85.2/93.9&98.8/\underline{99.8}/99.8&94.4/96.8/99.2&77.0/86.2/95.8\\
UPV~\citeyearpar{upv}&100.0/100.0/100.0&90.6/95.4/98.3&78.7/88.4/97.4&80.9/91.3/98.2&\textbf{98.2}/\textbf{98.8}/\textbf{99.5}&\textbf{100.0}/\textbf{100.0}/100.0&\textbf{100.0}/\textbf{100.0}/\underline{100.0}&\underline{99.4}/\underline{99.8}/\underline{99.9}&\textbf{100.0}/\textbf{100.0}/100.0&\textbf{100.0}/\textbf{100.0}/\underline{100.0}&\textbf{99.9}/\textbf{100.0}/\textbf{99.9}\\
SynthID~\citeyearpar{synthid}&100.0/100.0/99.9&45.4/66.2/89.5&26.4/49.4/82.7&6.4/21.2/67.6&90.2/\underline{98.2}/\underline{99.5}&\textbf{100.0}/\textbf{100.0}/99.9&93.4/\underline{99.0}/99.7&46.8/73.0/92.9&\textbf{100.0}/\textbf{100.0}/99.9&96.6/\underline{99.6}/99.7&49.4/76.4/94.8\\
MorphMark~\citeyearpar{morphmark}&100.0/100.0/100.0&78.6/93.0/97.8&70.2/87.1/96.8&46.7/76.2/93.2&87.0/92.4/97.2&\textbf{100.0}/\textbf{100.0}/\underline{100.0}&\underline{99.9}/\textbf{100.0}/100.0&93.2/98.2/99.7&\textbf{100.0}/\textbf{100.0}/\underline{100.0}&\underline{99.4}/\textbf{100.0}/99.9&94.9/\underline{99.8}/99.8\\
SemStamp~\citeyearpar{semstamp}&94.6/96.1/98.9&85.8/91.0/96.9&84.4/89.3/96.2&73.5/81.4/93.0&80.0/85.7/95.2&91.7/94.7/98.5&82.8/90.0/97.0&57.8/71.4/90.0&93.1/96.5/98.9&86.7/91.6/97.4&61.6/75.0/92.2\\
KSEMSTAMP~\citeyearpar{ksemstamp}&100.0/100.0/99.9&76.6/89.2/97.1&74.3/87.0/97.5&62.9/78.6/94.1&62.4/78.0/93.8&95.9/98.8/99.5&69.2/84.6/95.9&35.2/53.2/85.8&\underline{98.9}/\underline{99.8}/99.8&79.8/90.2/97.7&44.4/63.0/90.7\\
SimMark~\citeyearpar{simmark}&77.6/94.2/98.5&11.6/35.8/82.9&13.6/39.2/84.5&16.0/44.8/86.5&11.0/26.8/75.3&52.2/83.4/96.6&27.4/57.0/90.7&8.8/27.4/81.4&66.6/88.2/97.5&30.0/60.0/91.4&9.8/28.2/77.7\\
\rowcolor{black!30}Ours(Online)&100.0/100.0/99.9&\textbf{96.2}/\textbf{99.6}/\textbf{99.8}&\textbf{97.2}/\textbf{99.0}/\textbf{99.8}&\textbf{97.8}/\textbf{99.6}/\textbf{99.8}&\underline{95.2}/98.0/99.5&\underline{99.2}/\underline{99.8}/99.9&97.2/98.4/99.7&85.6/94.2/98.7&\textbf{100.0}/\textbf{100.0}/99.9&98.4/\underline{99.6}/99.9&87.6/96.0/99.2\\
\rowcolor{black!10}Ours(Offline)&98.0/99.0/99.8&\underline{91.7}/94.4/98.8&\underline{91.4}/\underline{95.2}/\underline{99.0}&\underline{92.6}/\underline{96.2}/\underline{99.1}&94.7/97.0/99.4&97.9/99.0/99.7&95.3/98.0/99.5&84.5/91.6/98.1&97.8/98.8/99.7&95.1/98.4/99.4&88.2/93.8/98.5\\
\midrule
\multicolumn{12}{c}{\textbf{Mistral-7B}} \\
\midrule
EXP~\citeyearpar{exp}&99.2/99.2/99.5&44.4/60.4/85.7&34.2/54.8/84.2&17.4/29.8/72.1&88.8/92.6/97.6&98.2/98.8/99.4&93.2/96.0/98.7&63.0/75.6/94.2&98.2/99.0/99.4&95.6/\underline{98.6}/99.3&74.0/85.2/96.5\\
EXPGumbel~\citeyearpar{exp}&98.6/98.6/99.3&71.2/82.0/92.5&66.2/77.2/90.8&40.2/55.4/82.5&92.0/93.8/97.2&98.2/98.4/99.0&97.2/98.6/99.0&91.4/95.8/98.3&98.2/98.2/99.0&98.2/98.2/99.0&92.6/96.4/98.4\\
KGW~\citeyearpar{kgw}&100.0/100.0/100.0&85.6/95.5/98.8&79.2/93.7/98.8&54.2/76.7/95.6&87.8/92.9/96.5&\textbf{100.0}/\textbf{100.0}/\textbf{100.0}&\textbf{100.0}/\textbf{100.0}/\textbf{100.0}&\textbf{99.4}/\textbf{99.9}/\textbf{99.9}&\textbf{100.0}/\textbf{100.0}/\textbf{100.0}&\textbf{99.8}/\textbf{100.0}/\textbf{100.0}&\textbf{99.0}/\textbf{100.0}/\textbf{99.9}\\
SIR~\citeyearpar{sir}&100.0/100.0/99.9&87.6/93.4/98.1&83.2/89.6/97.8&63.4/80.8/96.4&84.6/91.0/97.2&98.8/99.4/99.9&95.2/97.8/99.4&79.4/86.6/95.8&99.2/99.6/99.9&94.8/\underline{98.6}/99.4&74.8/86.0/96.8\\
UPV~\citeyearpar{upv}&99.4/99.8/99.9&71.4/91.4/97.7&61.9/85.4/97.5&34.9/68.6/93.4&88.0/\underline{95.6}/98.9&\textbf{100.0}/\textbf{100.0}/\underline{100.0}&\underline{99.2}/\textbf{100.0}/\underline{99.9}&\underline{97.0}/\underline{99.8}/\underline{99.9}&\underline{99.8}/\textbf{100.0}/\underline{100.0}&\underline{99.4}/\textbf{100.0}/99.9&\underline{95.6}/\underline{99.8}/\underline{99.8}\\
SynthID~\citeyearpar{synthid}&99.8/99.8/99.8&47.0/56.0/87.6&31.4/41.4/80.7&7.4/17.2/68.2&89.0/92.8/98.6&99.2/\underline{99.8}/99.8&94.4/97.6/99.5&57.2/70.4/93.5&99.4/\underline{99.8}/99.8&96.2/98.2/99.5&51.8/71.4/95.0\\
MorphMark~\citeyearpar{morphmark}&98.6/100.0/100.0&74.8/89.4/97.5&70.0/86.4/97.4&35.2/59.7/91.4&80.4/89.4/96.5&\underline{99.4}/\textbf{100.0}/100.0&98.0/\underline{99.8}/99.9&91.0/97.8/99.6&99.0/\underline{99.8}/99.9&99.0/\textbf{100.0}/\underline{99.9}&92.6/98.4/99.7\\
SemStamp~\citeyearpar{semstamp}&92.5/95.8/98.3&80.9/87.9/95.9&77.9/86.3/95.7&69.2/80.0/92.4&70.1/82.5/94.6&89.3/93.6/97.8&82.0/88.8/96.3&53.0/68.1/90.8&92.3/94.9/98.2&82.1/88.9/96.8&57.8/73.9/91.5\\
KSEMSTAMP~\citeyearpar{ksemstamp}&100.0/100.0/99.9&49.2/68.4/93.2&54.8/71.6/93.7&43.9/61.2/89.2&59.6/75.8/94.6&89.7/95.6/99.2&53.7/73.2/94.2&20.8/40.6/81.8&97.2/99.4/99.7&66.2/81.8/96.5&31.0/48.0/86.6\\
SimMark~\citeyearpar{simmark}&70.8/89.6/97.9&12.4/31.2/81.5&14.6/37.3/84.4&22.6/47.9/86.4&12.0/22.9/67.9&50.8/76.8/95.8&27.0/51.9/89.2&13.8/31.6/80.7&64.6/85.3/97.2&28.6/54.7/90.5&10.8/25.6/74.1\\
\rowcolor{black!30}Ours(Online)&100.0/100.0/99.9&\textbf{93.0}/\textbf{96.8}/\textbf{99.4}&\textbf{93.6}/\textbf{97.6}/\textbf{99.4}&\textbf{95.2}/\textbf{98.8}/\textbf{99.7}&\underline{93.6}/\textbf{97.2}/\underline{99.2}&\textbf{100.0}/\textbf{100.0}/99.9&98.2/99.4/99.7&85.8/93.0/98.6&\textbf{100.0}/\textbf{100.0}/99.9&98.6/\textbf{100.0}/99.9&89.4/96.4/99.4\\
\rowcolor{black!10}Ours(Offline)&99.7/99.8/99.9&\underline{90.8}/\underline{95.8}/\underline{99.1}&\underline{91.3}/\underline{95.8}/\underline{99.3}&\underline{92.0}/\underline{95.2}/\underline{99.3}&\textbf{94.0}/\textbf{97.2}/\textbf{99.5}&99.2/\underline{99.8}/99.9&95.2/98.4/99.7&84.1/93.6/98.9&99.6/\underline{99.8}/99.9&97.1/\underline{98.6}/99.7&87.1/94.6/98.6\\
\bottomrule
\end{tabular}
\end{adjustbox}
\label{tab:c4_tp}
\end{table*}
\subsection{Quality of Watermarked Text}\label{app:add_ppl}
\begin{figure}[t]
\centering
\includegraphics[width=0.8\linewidth]{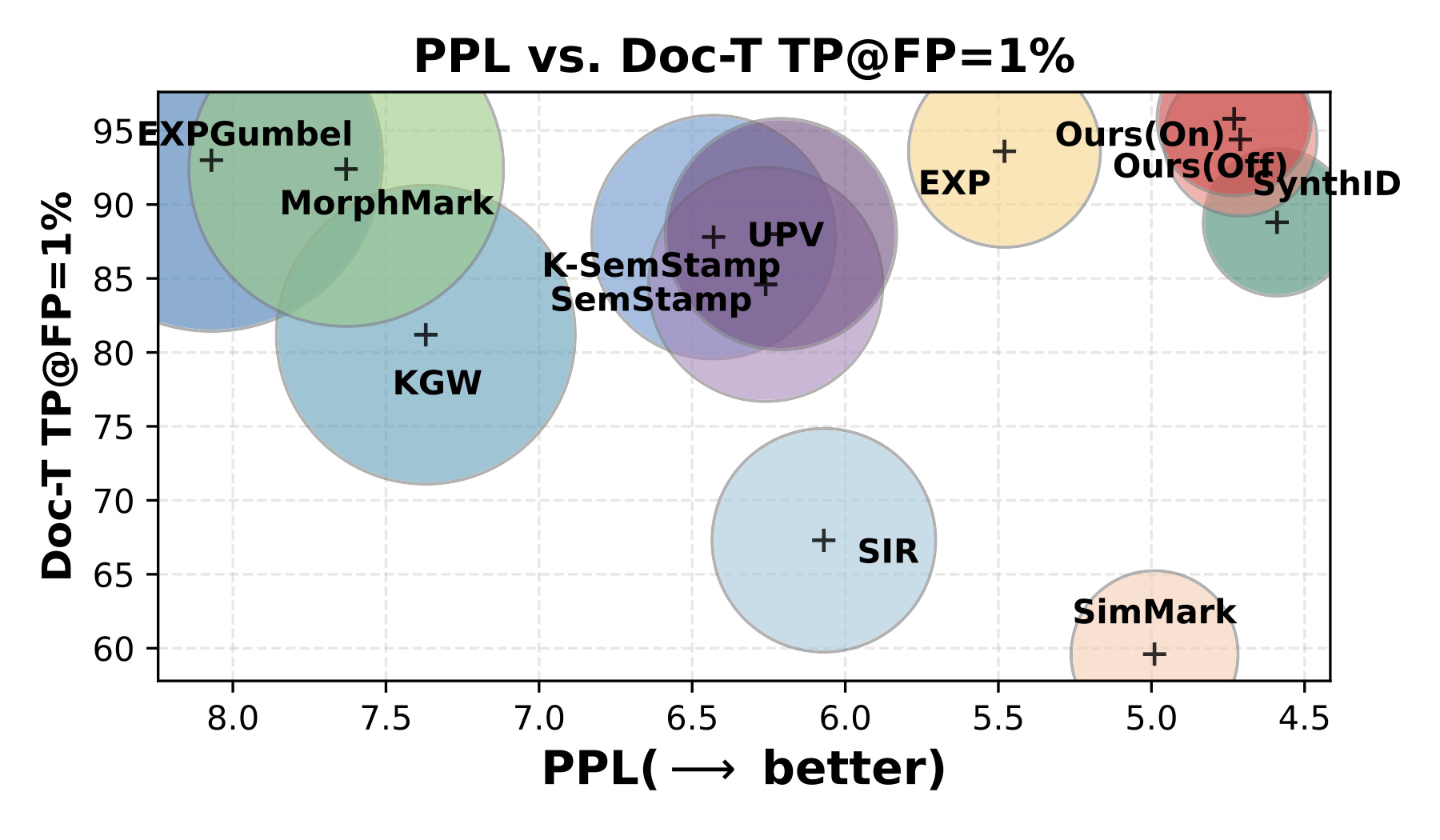}
\caption{Mistral-7B PPL on BOOKSUM.}
\label{fig:bub_mis_bs}
\end{figure}

\begin{figure}[t]
\centering
\includegraphics[width=0.8\linewidth]{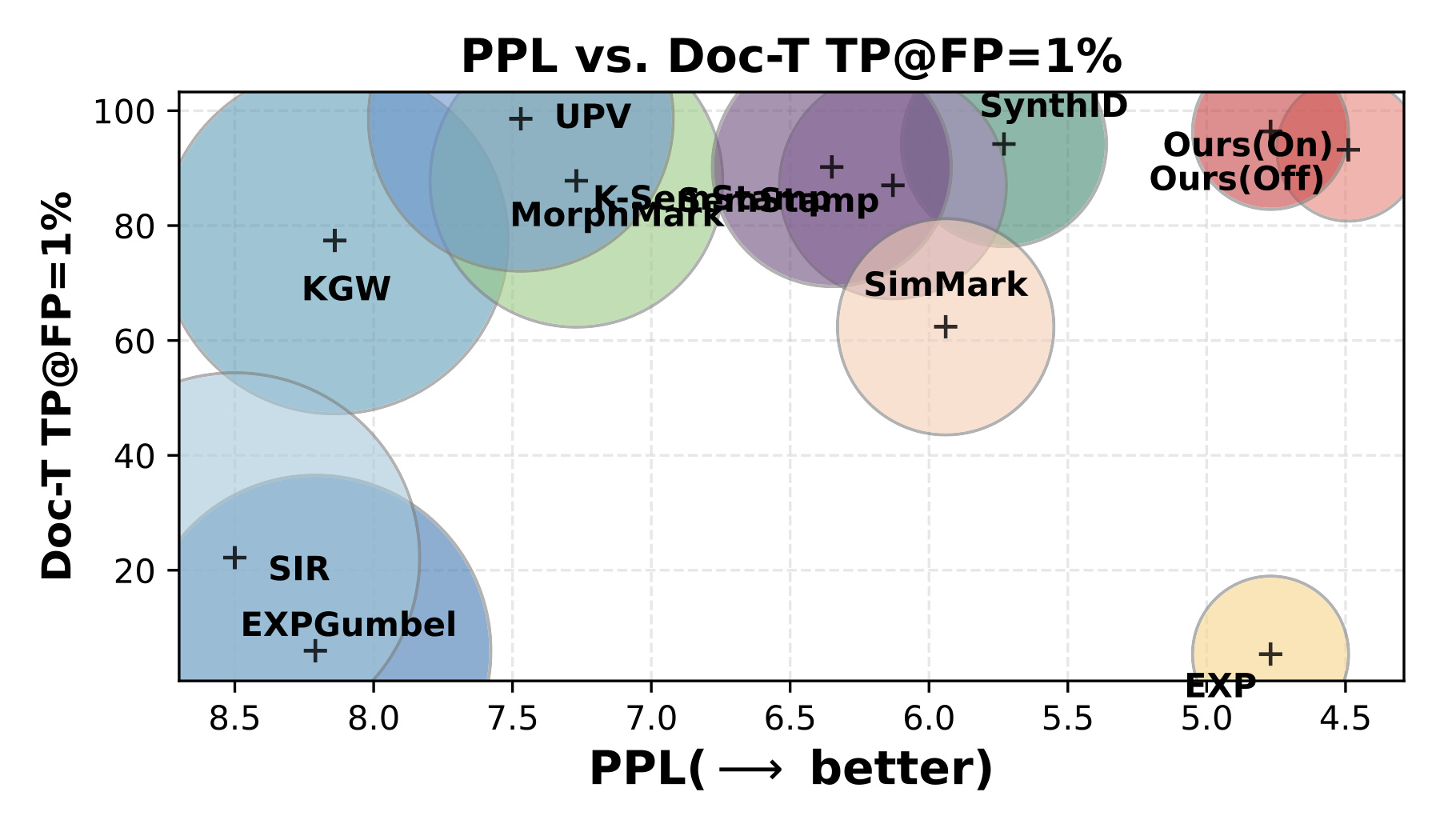}
\caption{OPT-1.3B PPL on BOOKSUM.}
\label{fig:bub_opt_bs}
\end{figure}

\begin{figure}[t]
\centering
\includegraphics[width=0.8\linewidth]{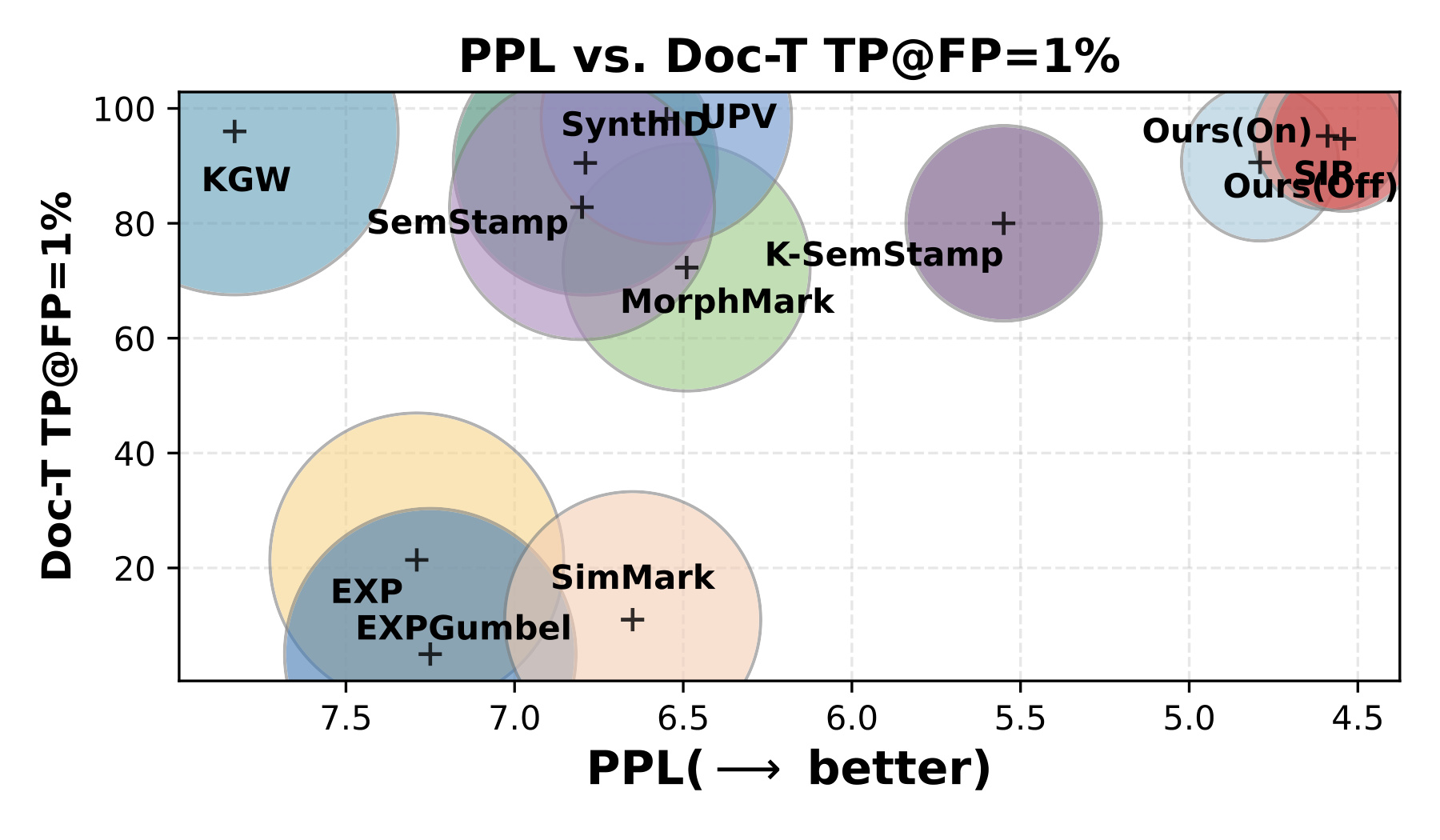}
\caption{OPT-1.3B PPL on C4.}
\label{fig:bub_opt_c4}
\end{figure}

Similar to Figure \ref{fig:bubble}, we present additional results about the perplexity of existing methods across different benchmarks and backbones in Figure \ref{fig:bub_mis_bs}, \ref{fig:bub_opt_bs} and \ref{fig:bub_opt_c4}.

\subsection{Sensitivity of Hyperparameter}\label{app:hyper}
\begin{figure}[t] 
\centering
\includegraphics[width=0.98\textwidth]{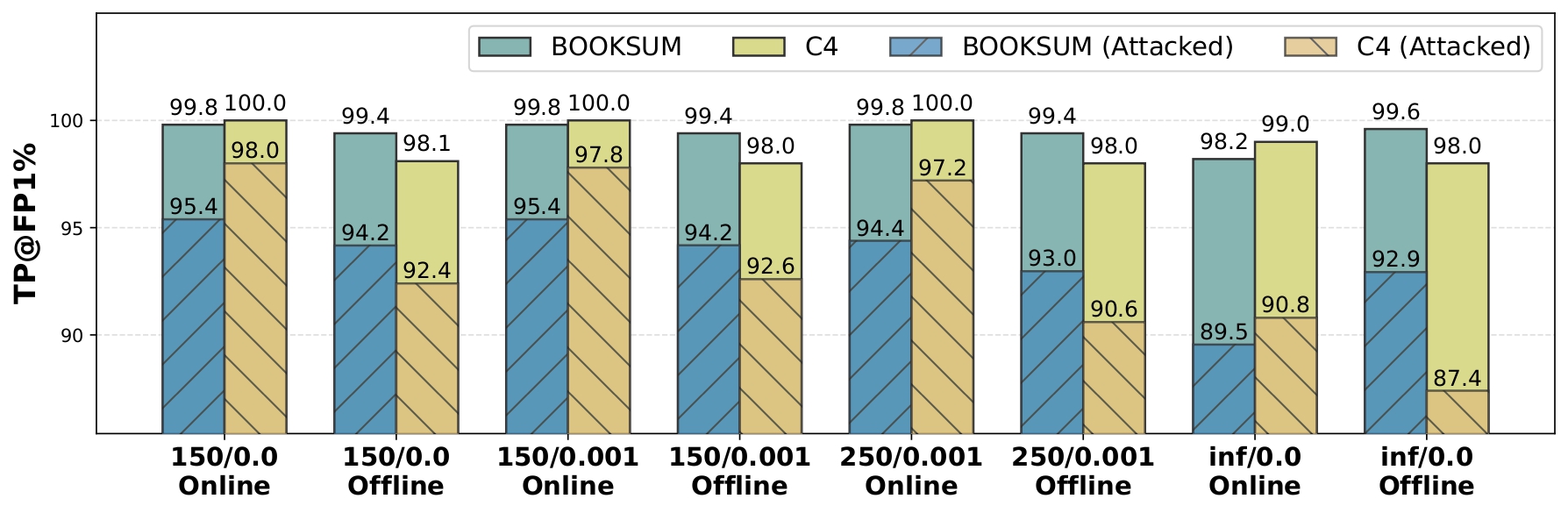}
\caption{Performance comparison of OPT-1.3B under different hyperparameter settings.}
\label{fig:hyper_opt}
\end{figure}
We present additional hyperparameter analysis results for OPT-1.3B in Figure \ref{fig:hyper_opt}, where similar conclusions can be deduced.

\subsection{Case Study}\label{app:add_case}
\begin{figure*}[t!]
\centering
\subfloat[SemStamp without attack.\label{fig:mis_bs_a}]{
    \includegraphics[width=0.32\textwidth]{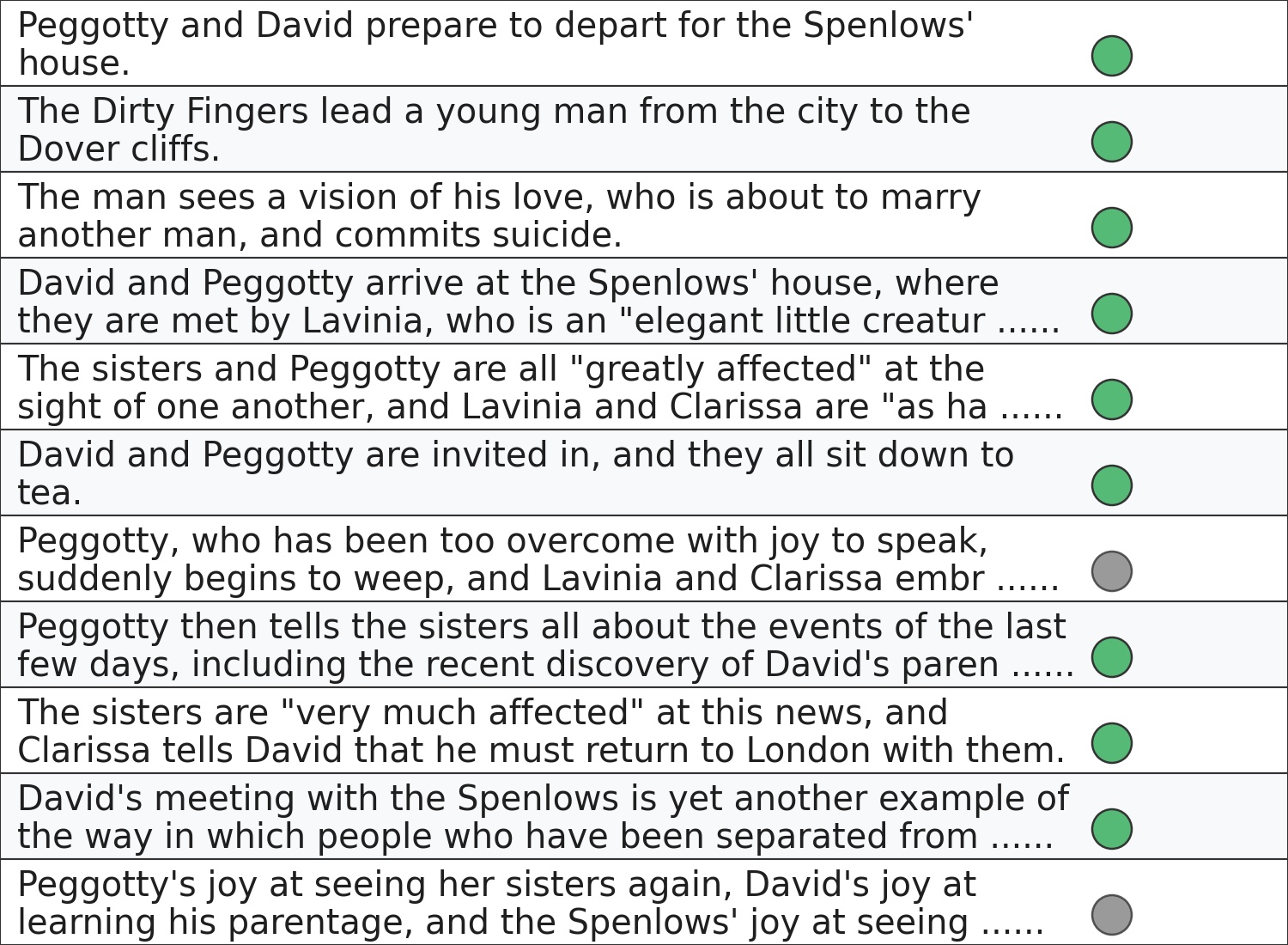}}
\subfloat[Offline \method without attack.\label{fig:mis_bs_b}]{
    \includegraphics[width=0.32\textwidth]{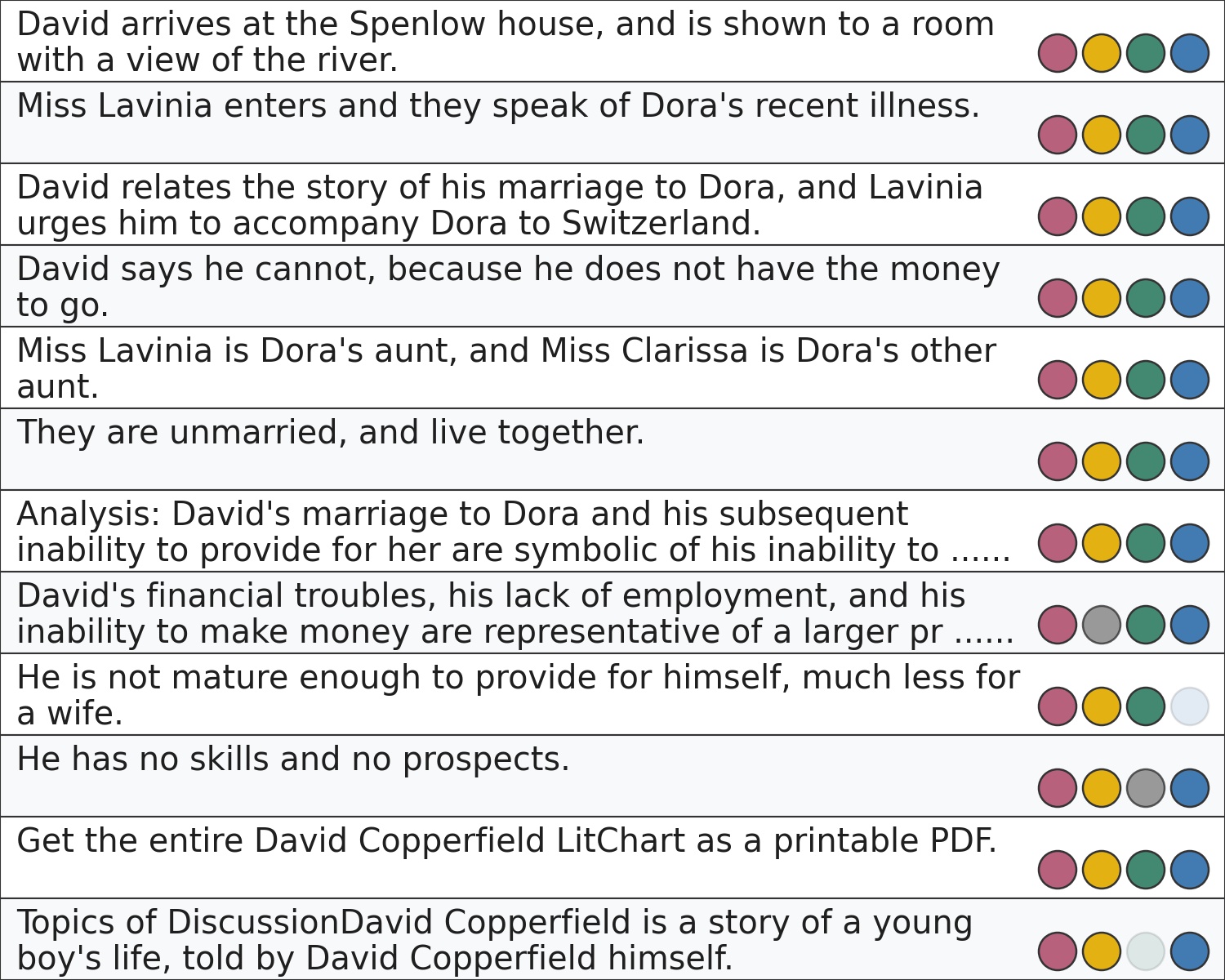}}
\subfloat[Online \method without attack.\label{fig:mis_bs_c}]{
    \includegraphics[width=0.32\textwidth]{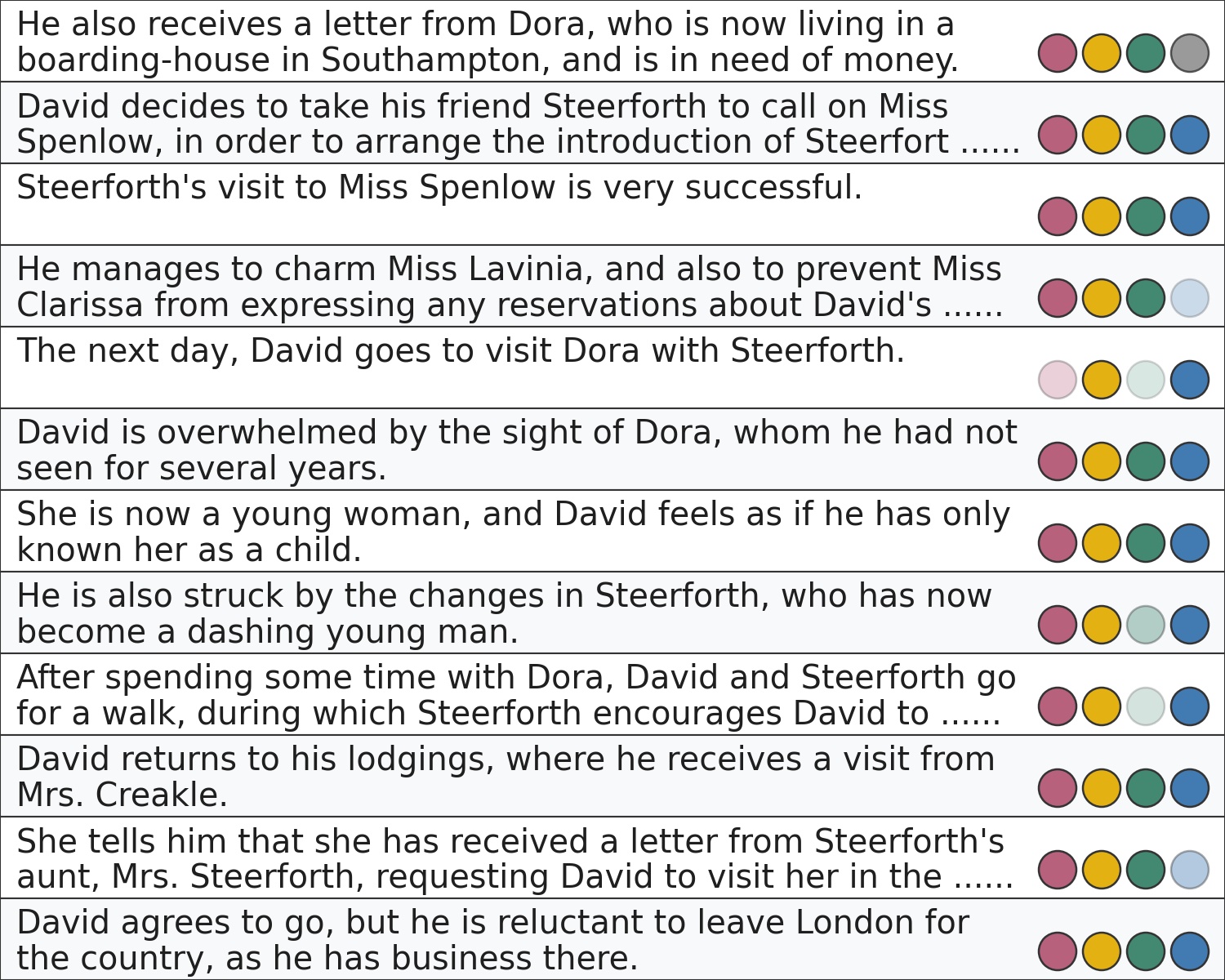}}
\\
\subfloat[SemStamp paraphrased by GPT.\label{fig:mis_bs_d}]{
    \includegraphics[width=0.32\textwidth]{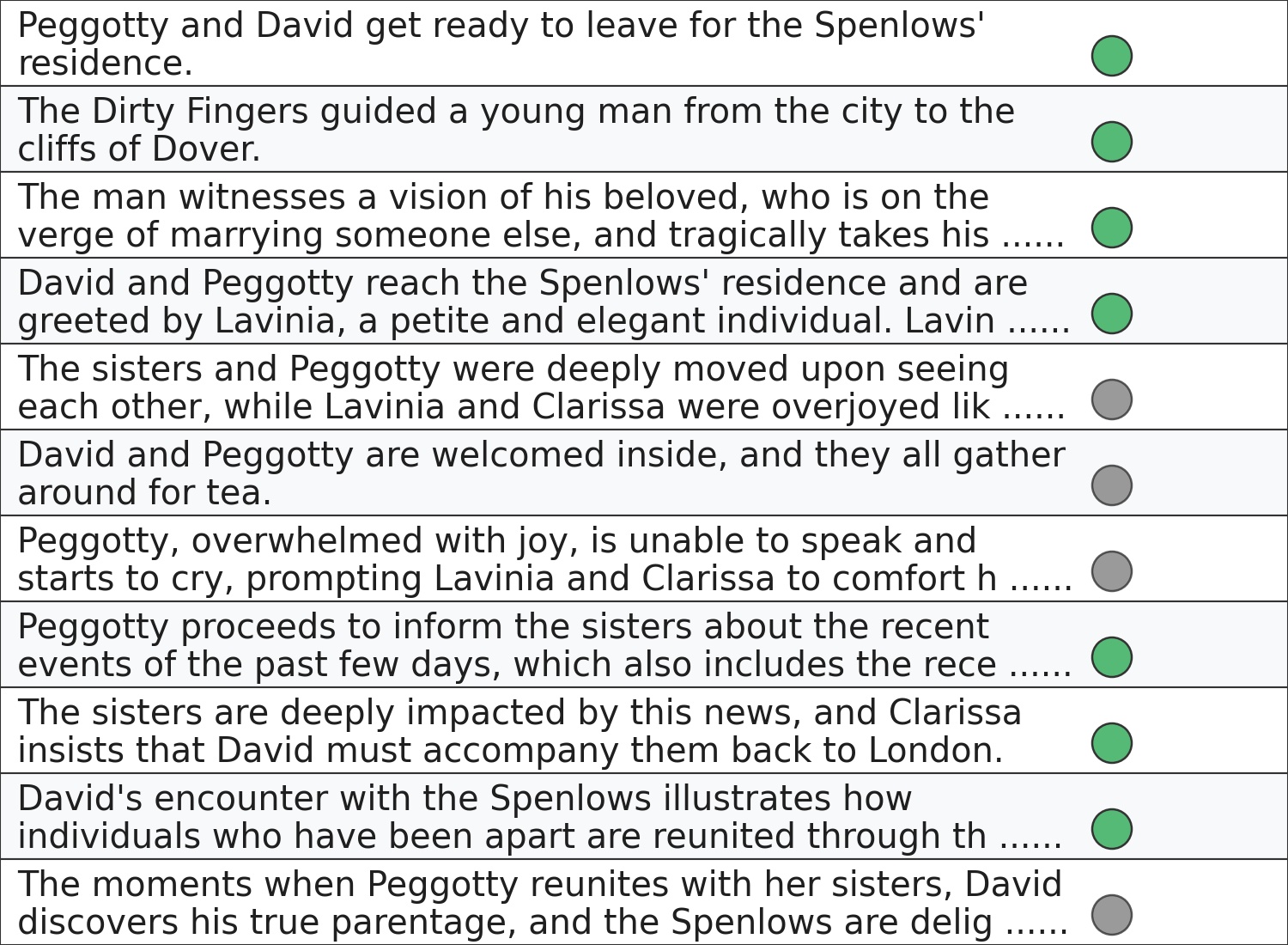}}
\subfloat[Online \method paraphrased by GPT.\label{fig:mis_bs_e}]{
    \includegraphics[width=0.32\textwidth]{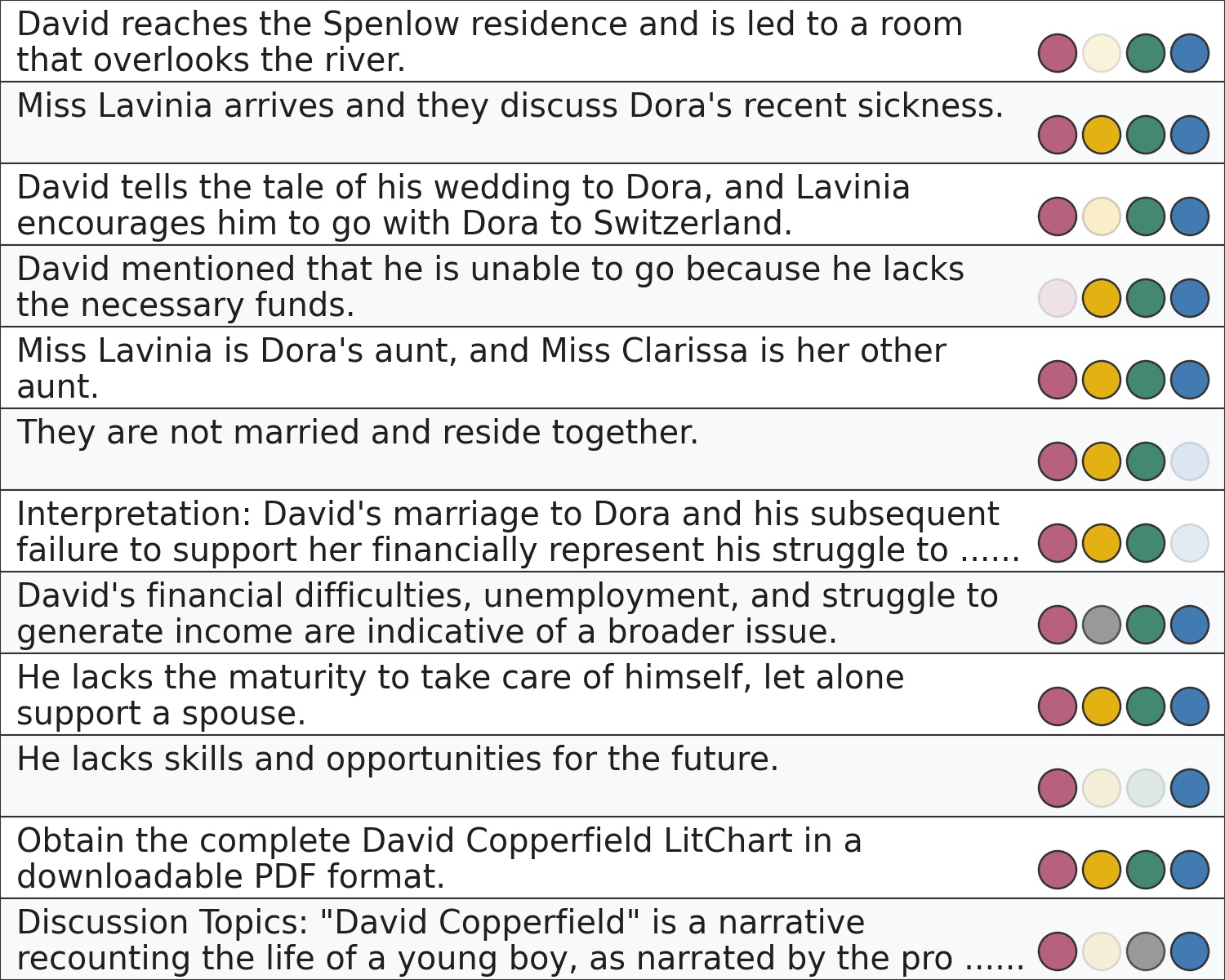}}
\subfloat[Offline \method paraphrased by GPT.\label{fig:mis_bs_f}]{
    \includegraphics[width=0.32\textwidth]{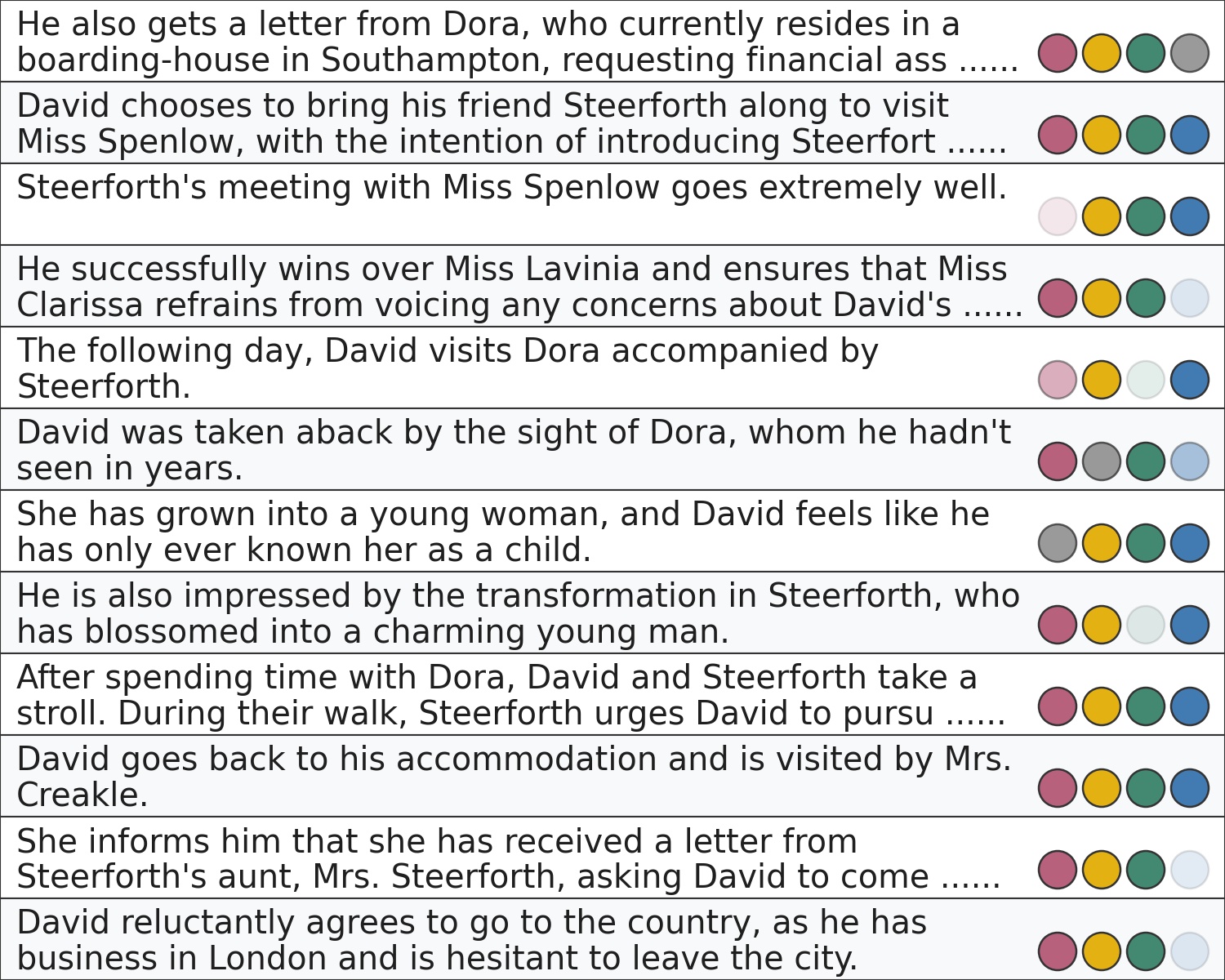}}
\caption{Cases from Mistral-7B on BOOKSUM dataset. \textcolor{mypurple}{$\bullet$}, \textcolor{myyellow}{$\bullet$}, \textcolor{mygreen}{$\bullet$} and \textcolor{myblue}{$\bullet$} indicate valid watermark evidence is detected with the transparetness denoting evidence strength, while \textcolor{mygray}{$\bullet$} indicates that no valid evidence detected in this channel.}
\label{fig:mis_bs_cases}
\end{figure*}
\begin{figure*}[t]
\centering
\subfloat[SemStamp without attack.\label{fig:mis_c4_a}]{
    \includegraphics[width=0.32\textwidth]{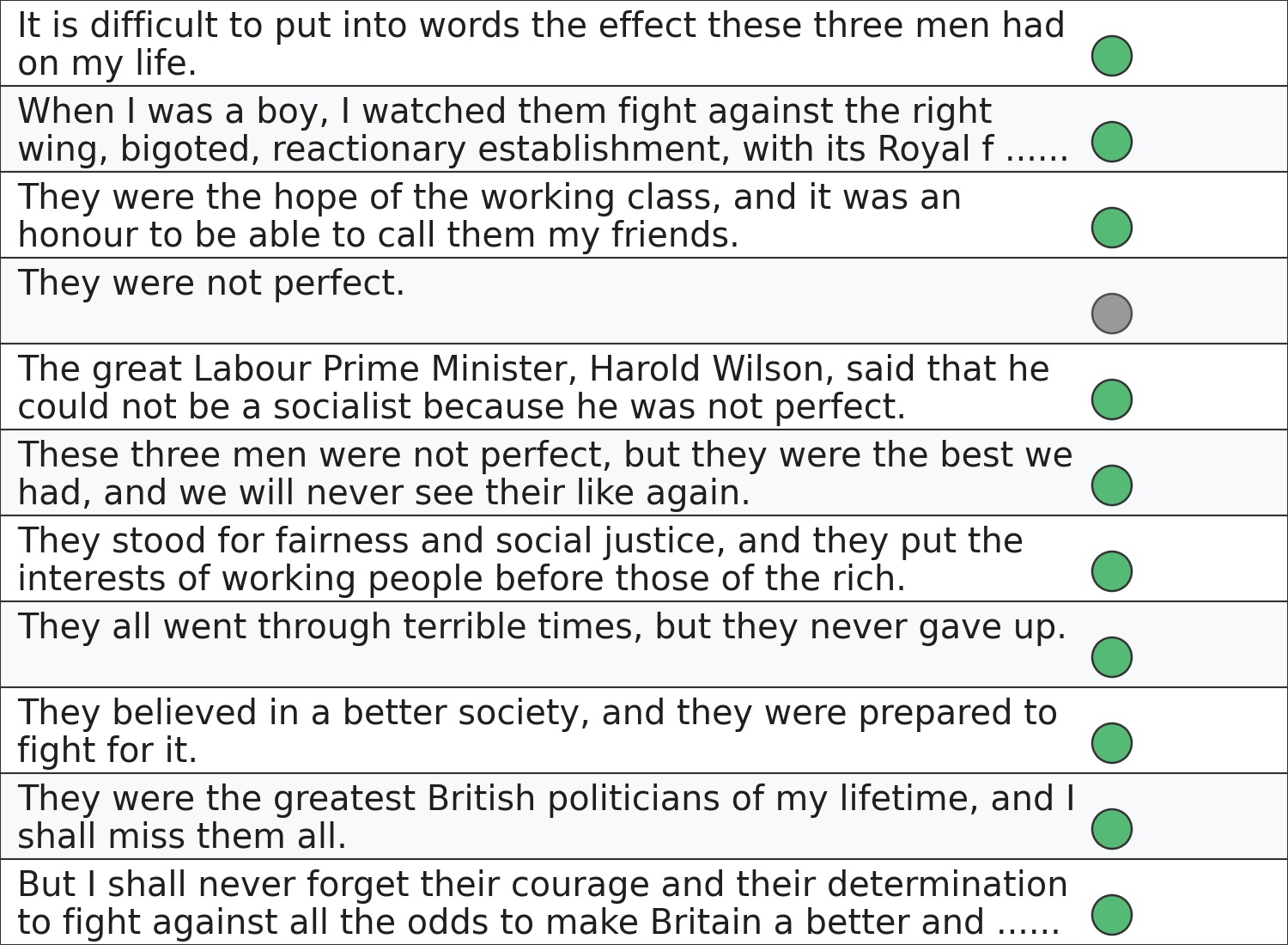}}
\subfloat[Offline \method without attack.\label{fig:mis_c4_b}]{
    \includegraphics[width=0.32\textwidth]{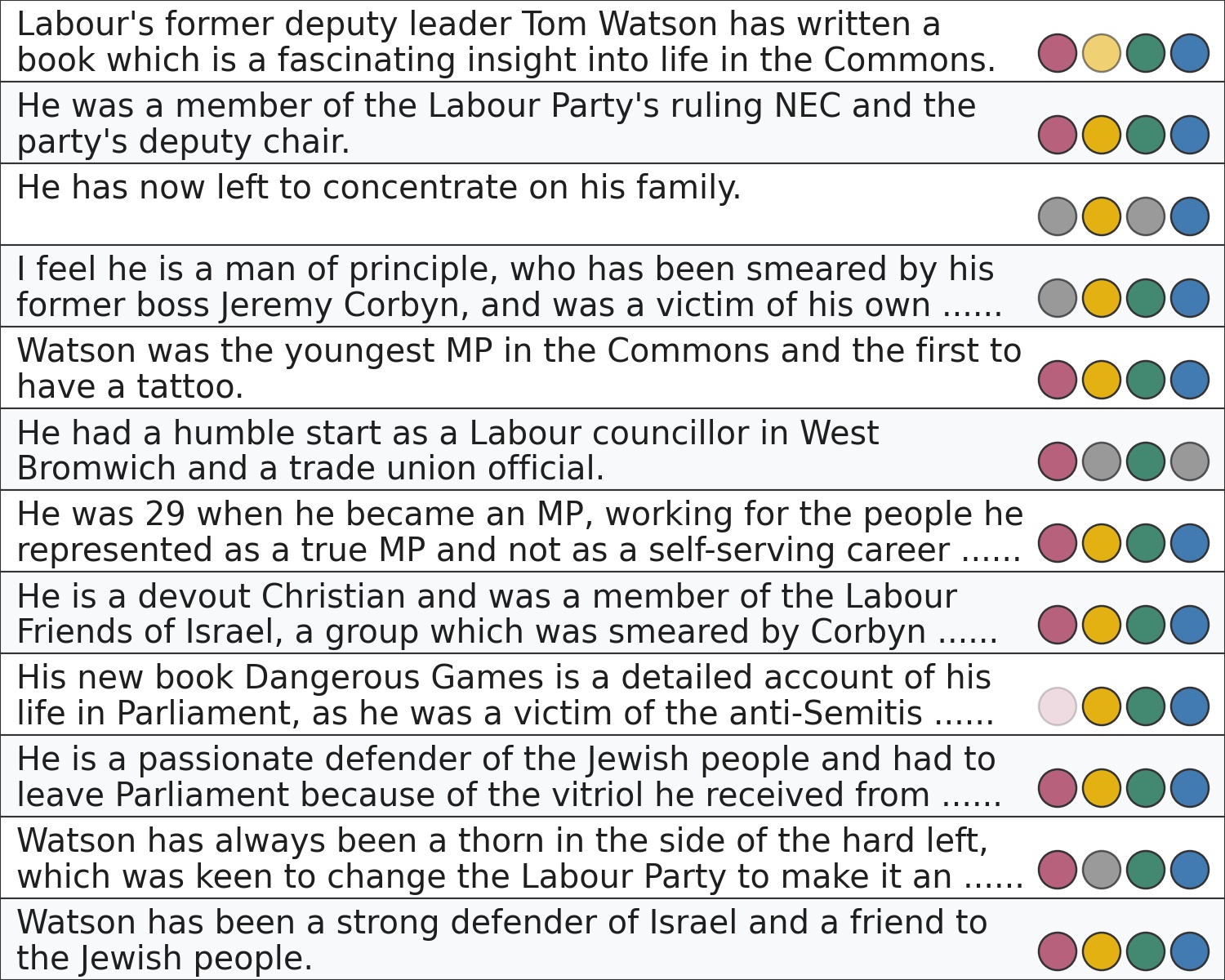}}
\subfloat[Online \method without attack.\label{fig:mis_c4_c}]{
    \includegraphics[width=0.32\textwidth]{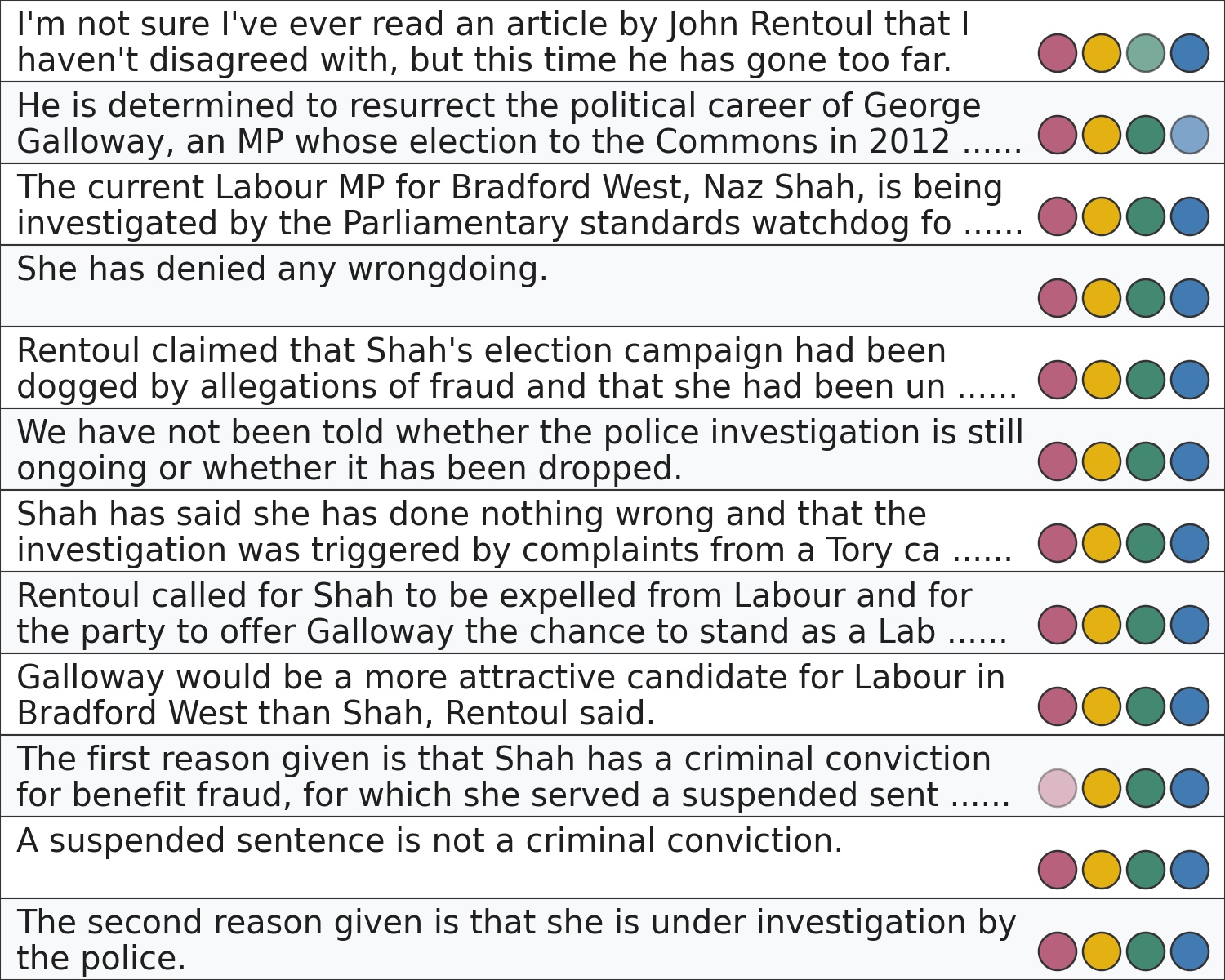}}
\\
\subfloat[SemStamp paraphrased by GPT.\label{fig:mis_c4_d}]{
    \includegraphics[width=0.32\textwidth]{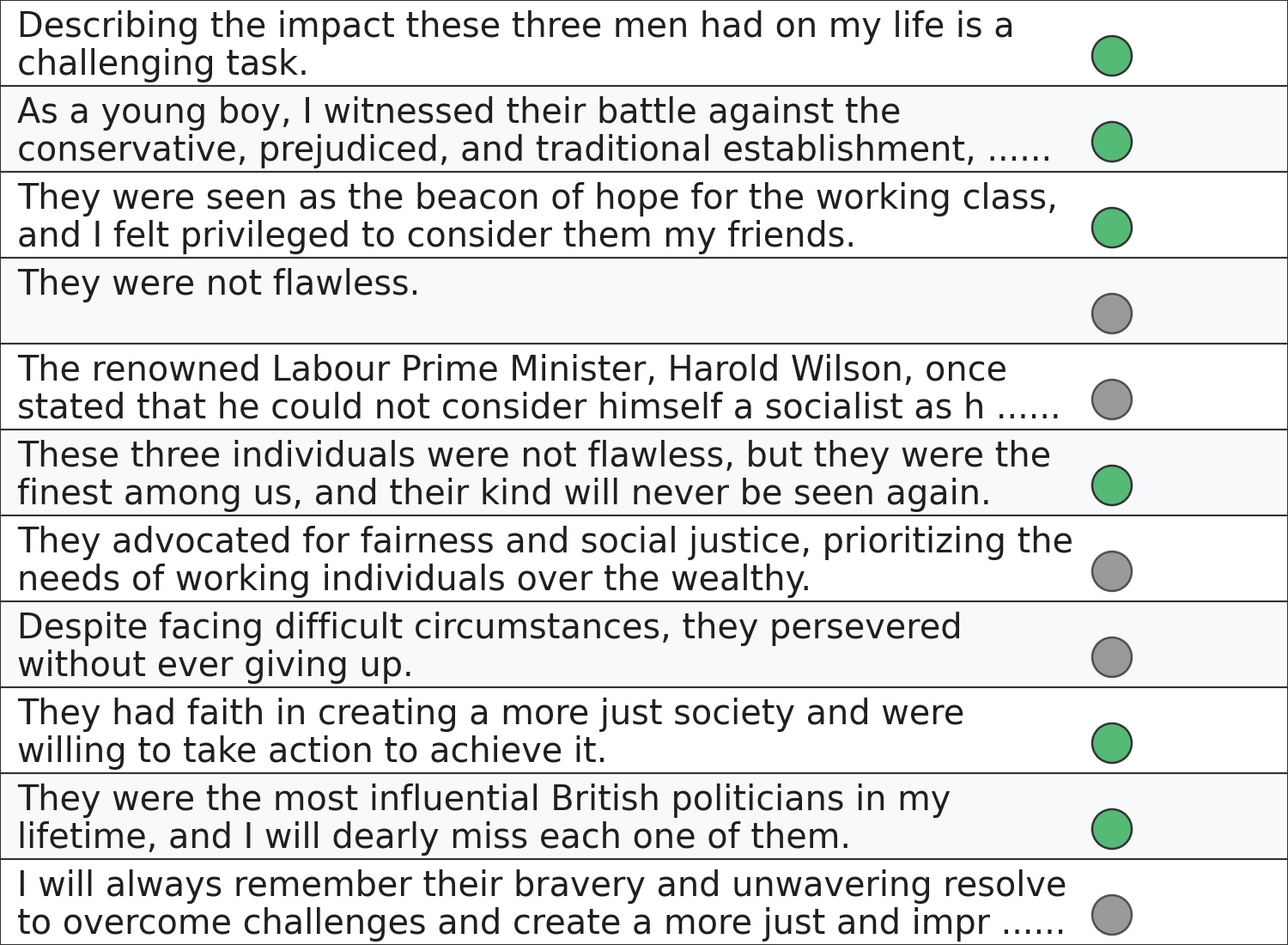}}
\subfloat[Online \method paraphrased by GPT.\label{fig:mis_c4_e}]{
    \includegraphics[width=0.32\textwidth]{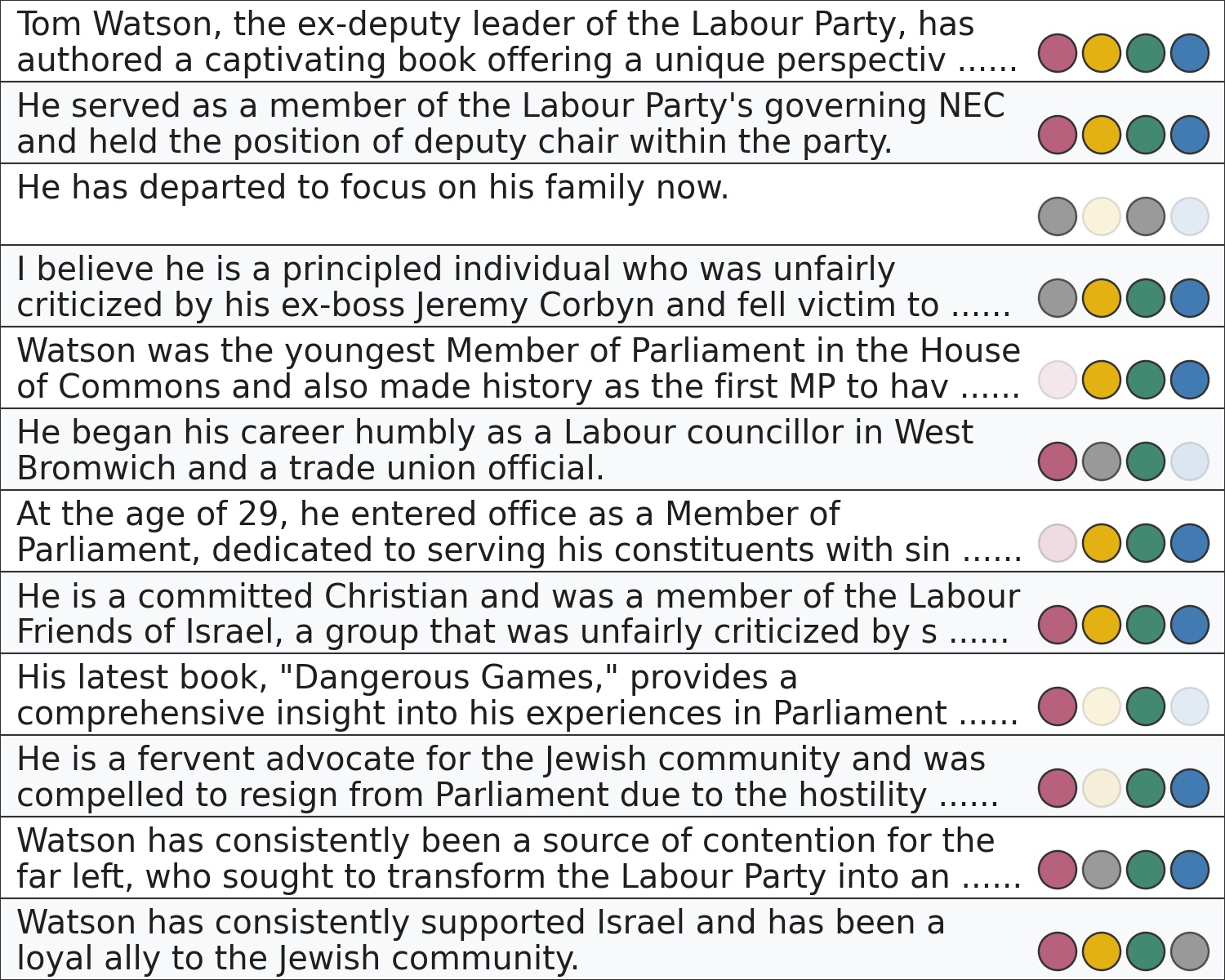}}
\subfloat[Offline \method paraphrased by GPT.\label{fig:mis_c4_f}]{
    \includegraphics[width=0.32\textwidth]{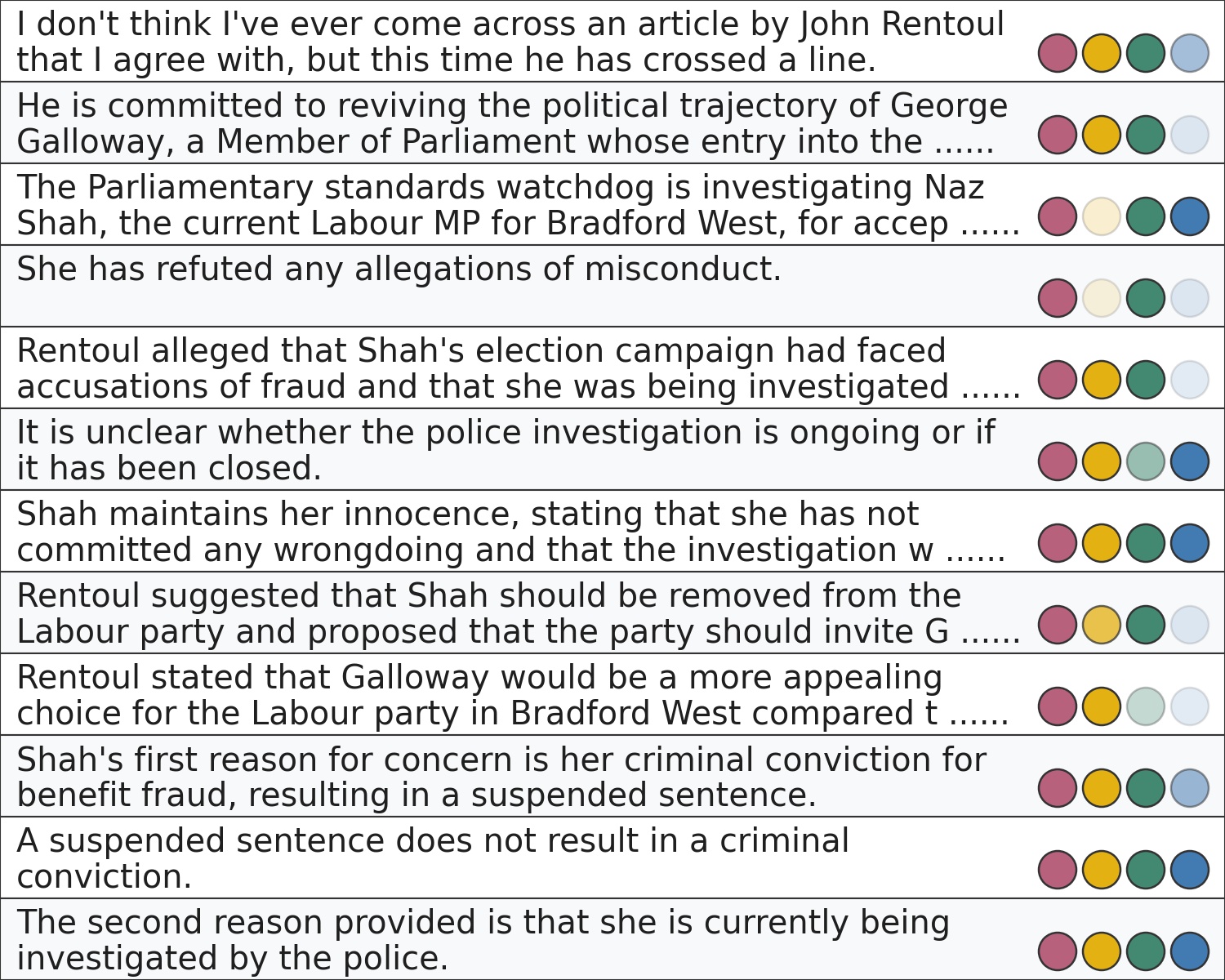}}
\caption{Cases from Mistral-7B on C4 dataset.}
\label{fig:mis_c4_cases}
\end{figure*}
\begin{figure*}[t]
\centering
\subfloat[SemStamp without attack.\label{fig:opt_bs_a}]{
    \includegraphics[width=0.26\textwidth]{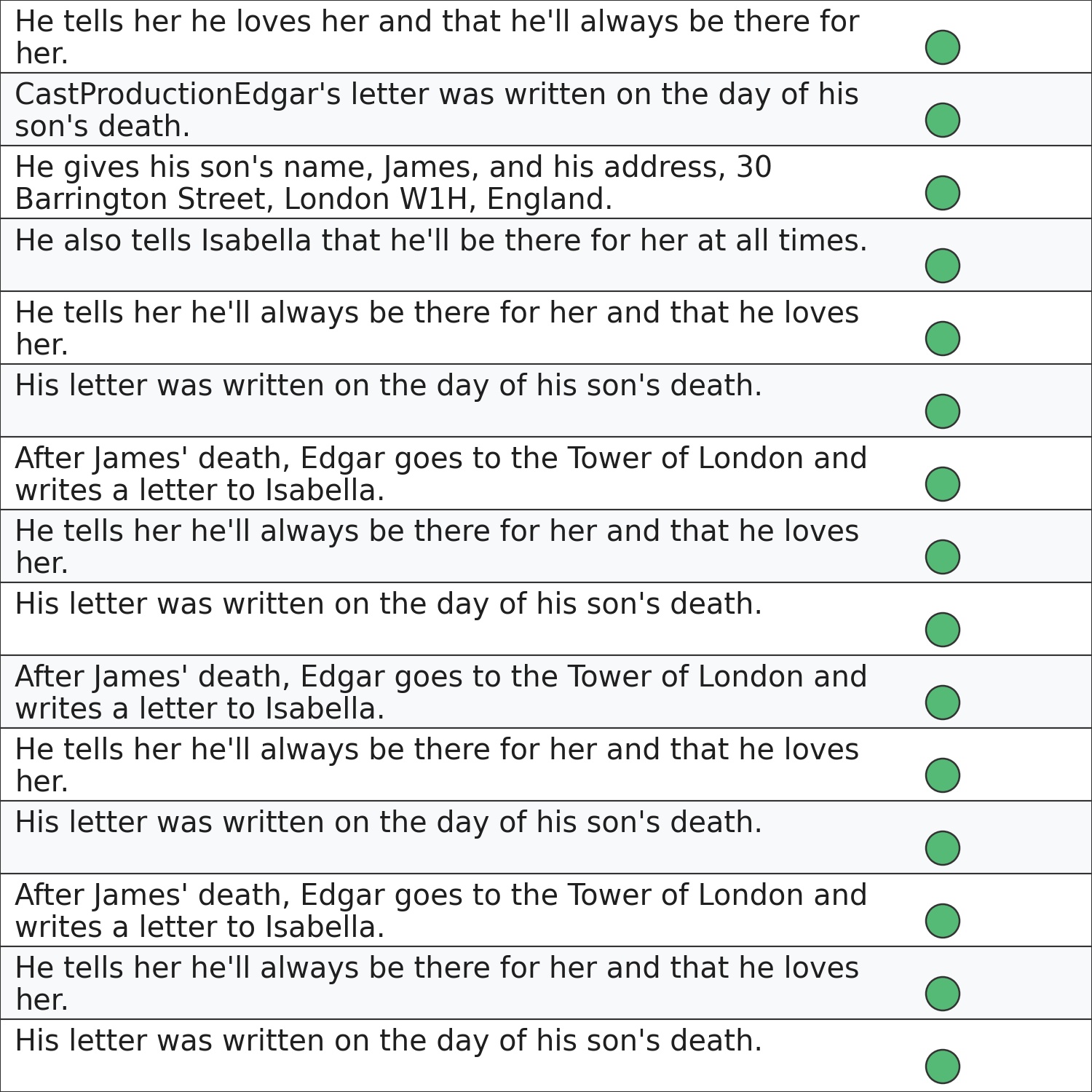}}
\subfloat[Offline \method without attack.\label{fig:opt_bs_b}]{
    \includegraphics[width=0.32\textwidth]{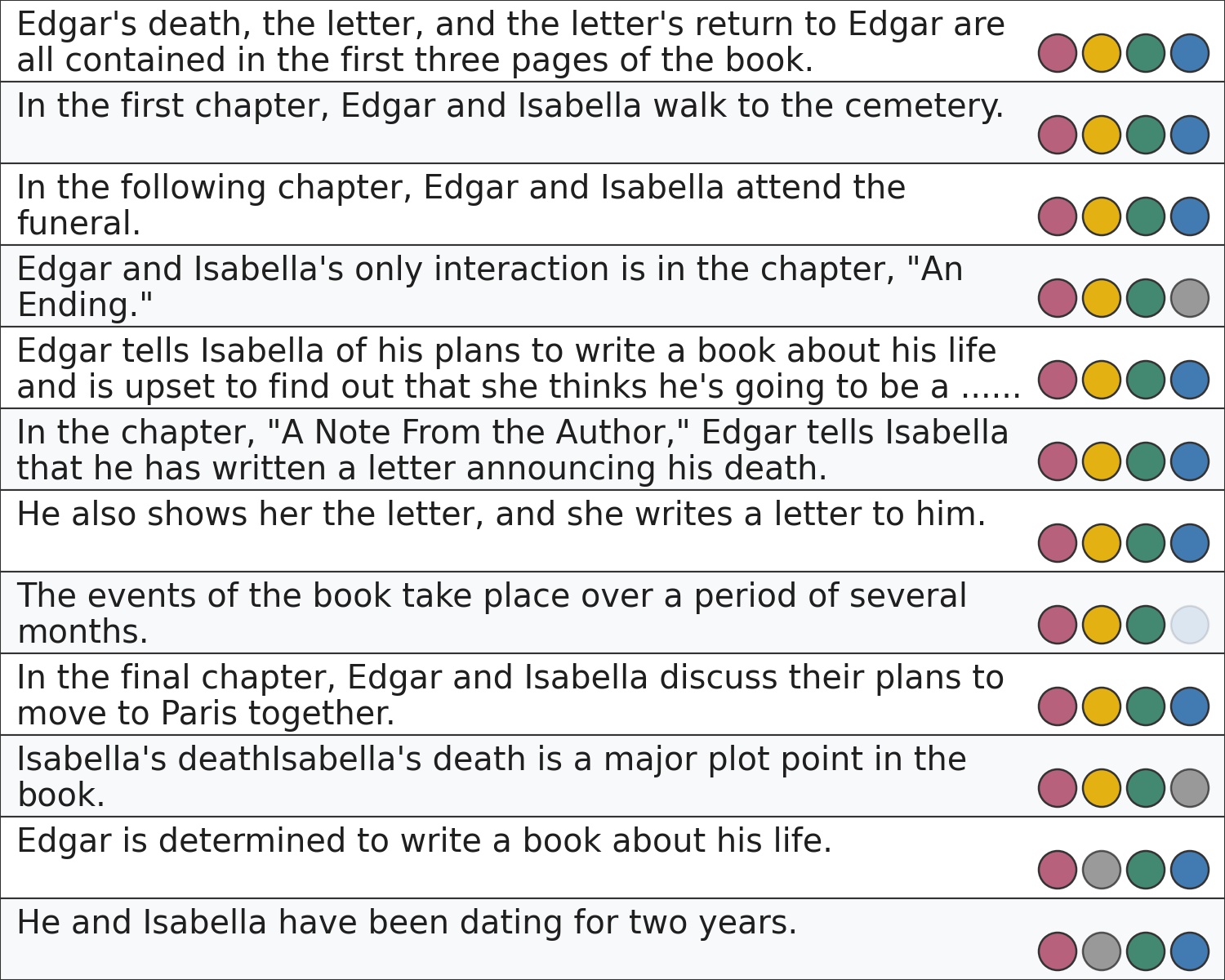}}
\subfloat[Online \method without attack.\label{fig:opt_bs_c}]{
    \includegraphics[width=0.32\textwidth]{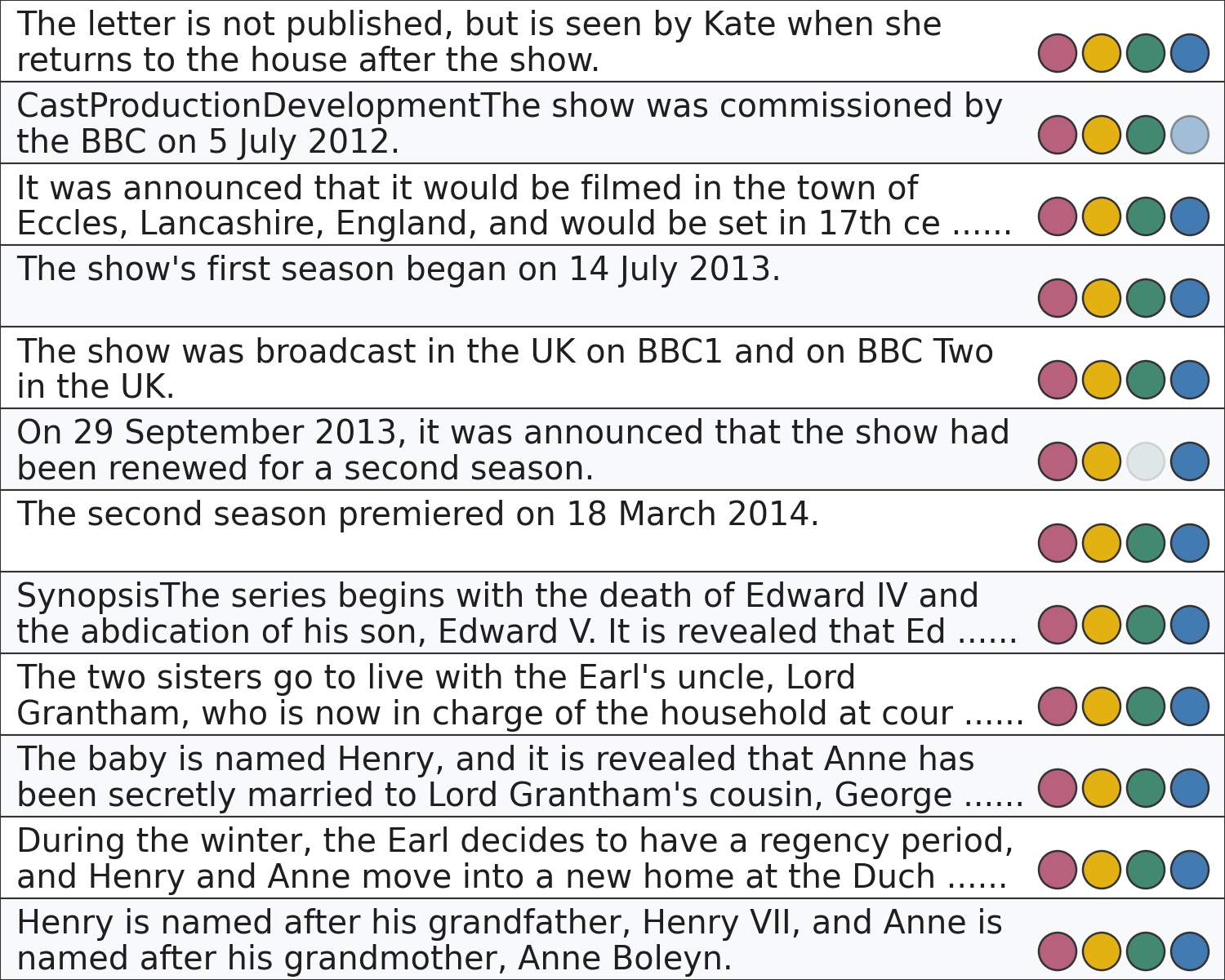}}
\\
\subfloat[SemStamp paraphrased by GPT.\label{fig:opt_bs_d}]{
    \includegraphics[width=0.26\textwidth]{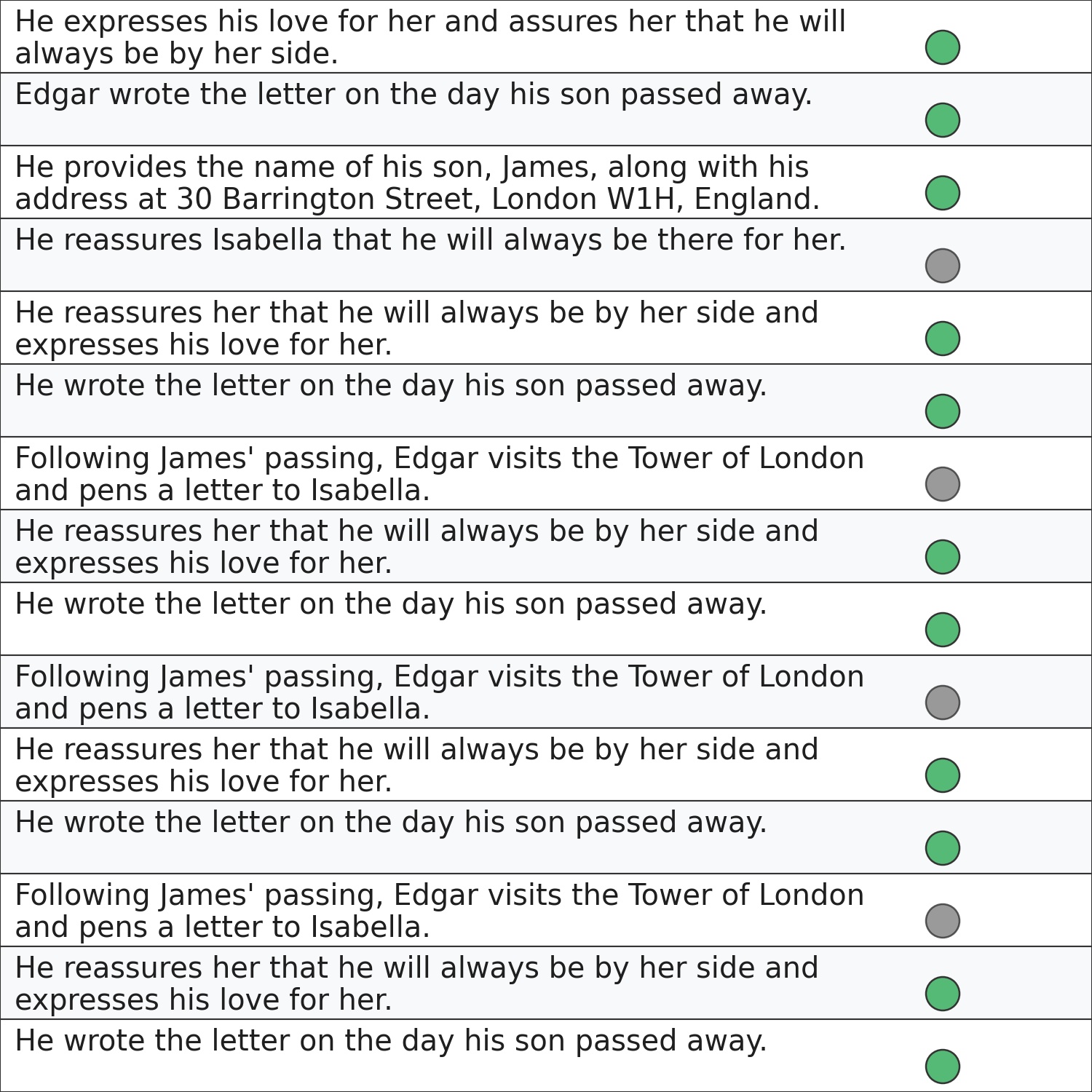}}
\subfloat[Online \method paraphrased by GPT.\label{fig:opt_bs_e}]{
    \includegraphics[width=0.32\textwidth]{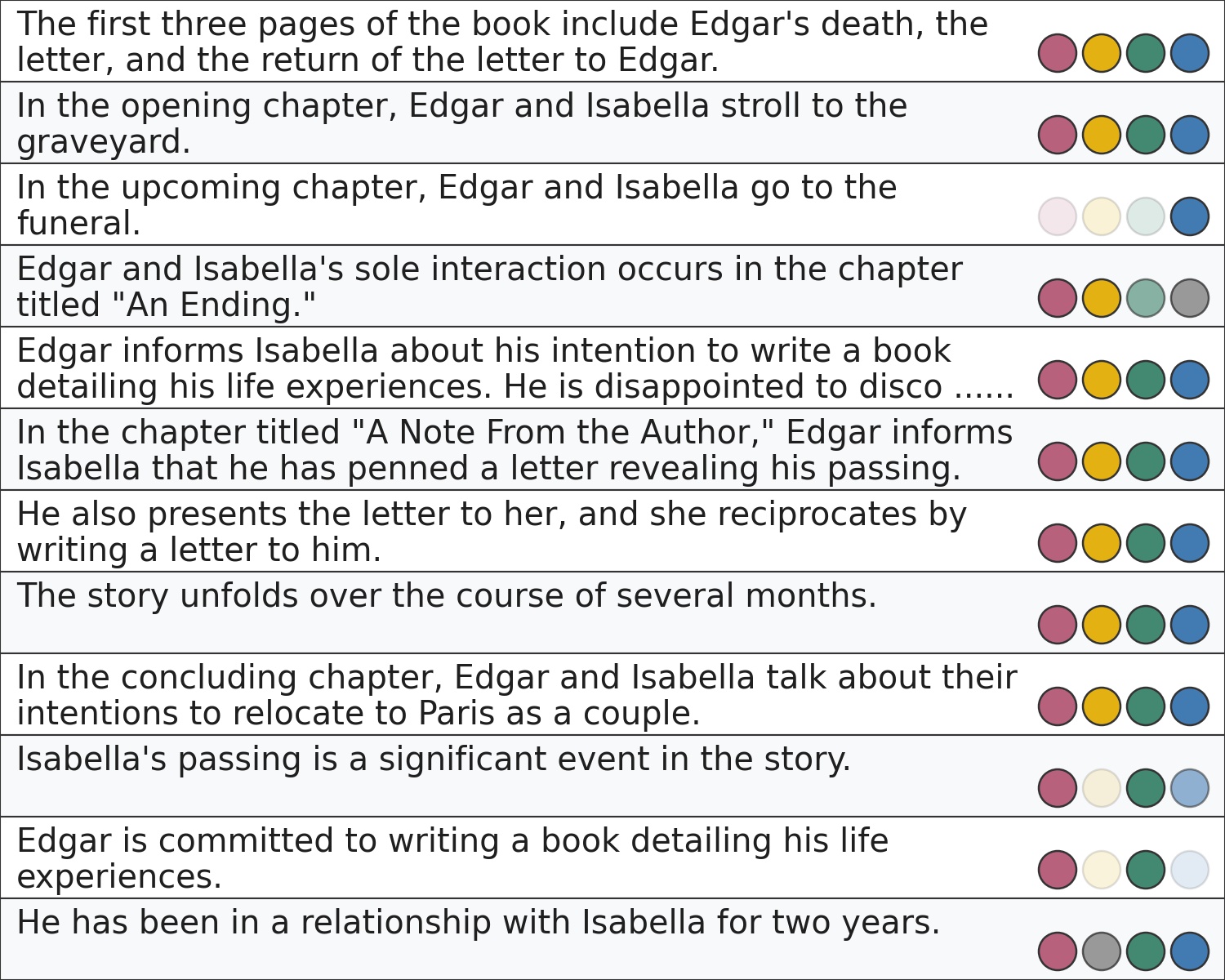}}
\subfloat[Offline \method paraphrased by GPT.\label{fig:opt_bs_f}]{
    \includegraphics[width=0.32\textwidth]{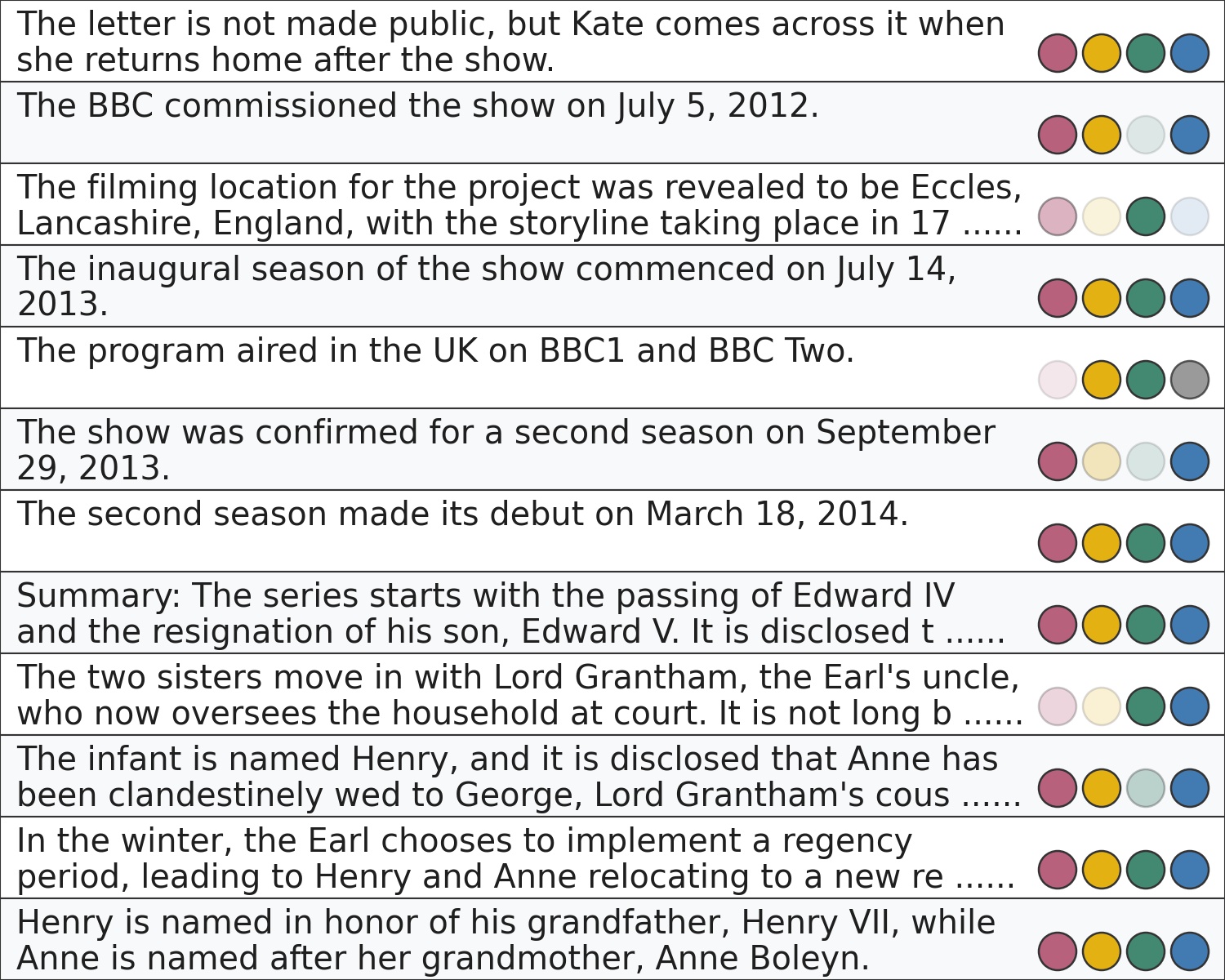}}
\caption{Cases from OPT-1.3B on BOOKSUM dataset.}
\label{fig:opt_bs_cases}
\end{figure*}
\begin{figure*}[t]
\centering
\subfloat[SemStamp without attack.\label{fig:opt_c4_a}]{
    \includegraphics[width=0.32\textwidth]{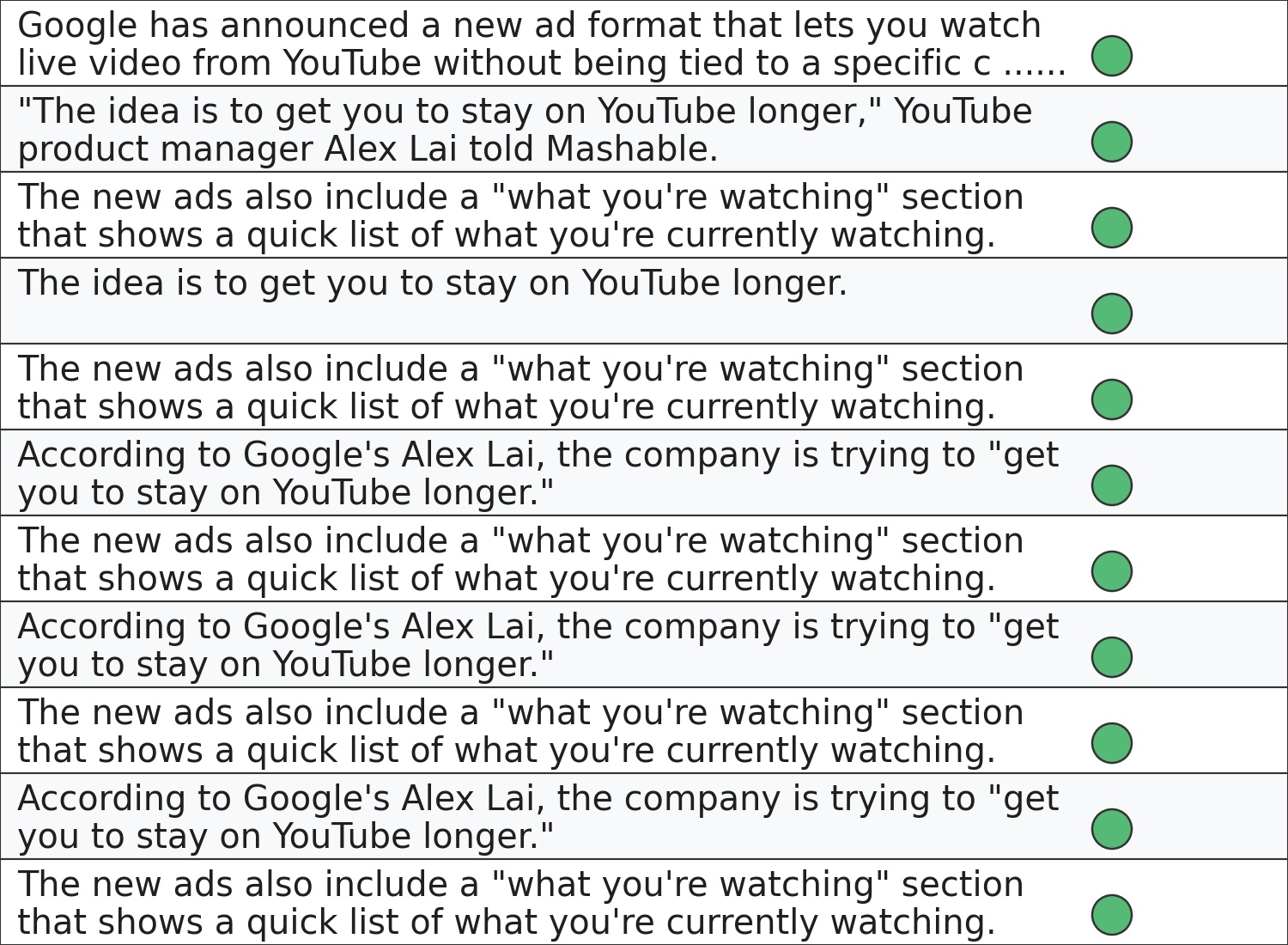}}
\subfloat[Offline \method without attack.\label{fig:opt_c4_b}]{
    \includegraphics[width=0.32\textwidth]{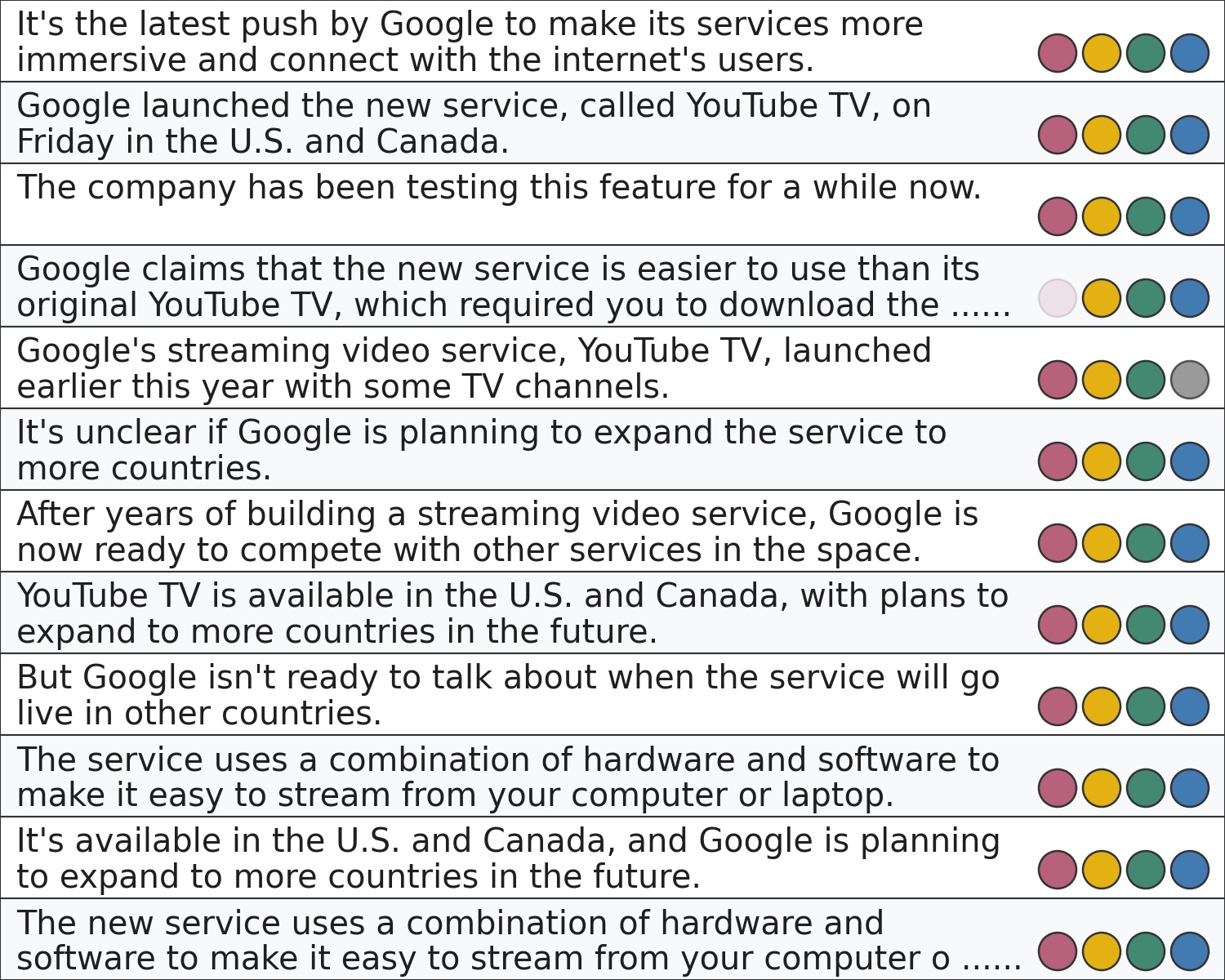}}
\subfloat[Online \method without attack.\label{fig:opt_c4_c}]{
    \includegraphics[width=0.32\textwidth]{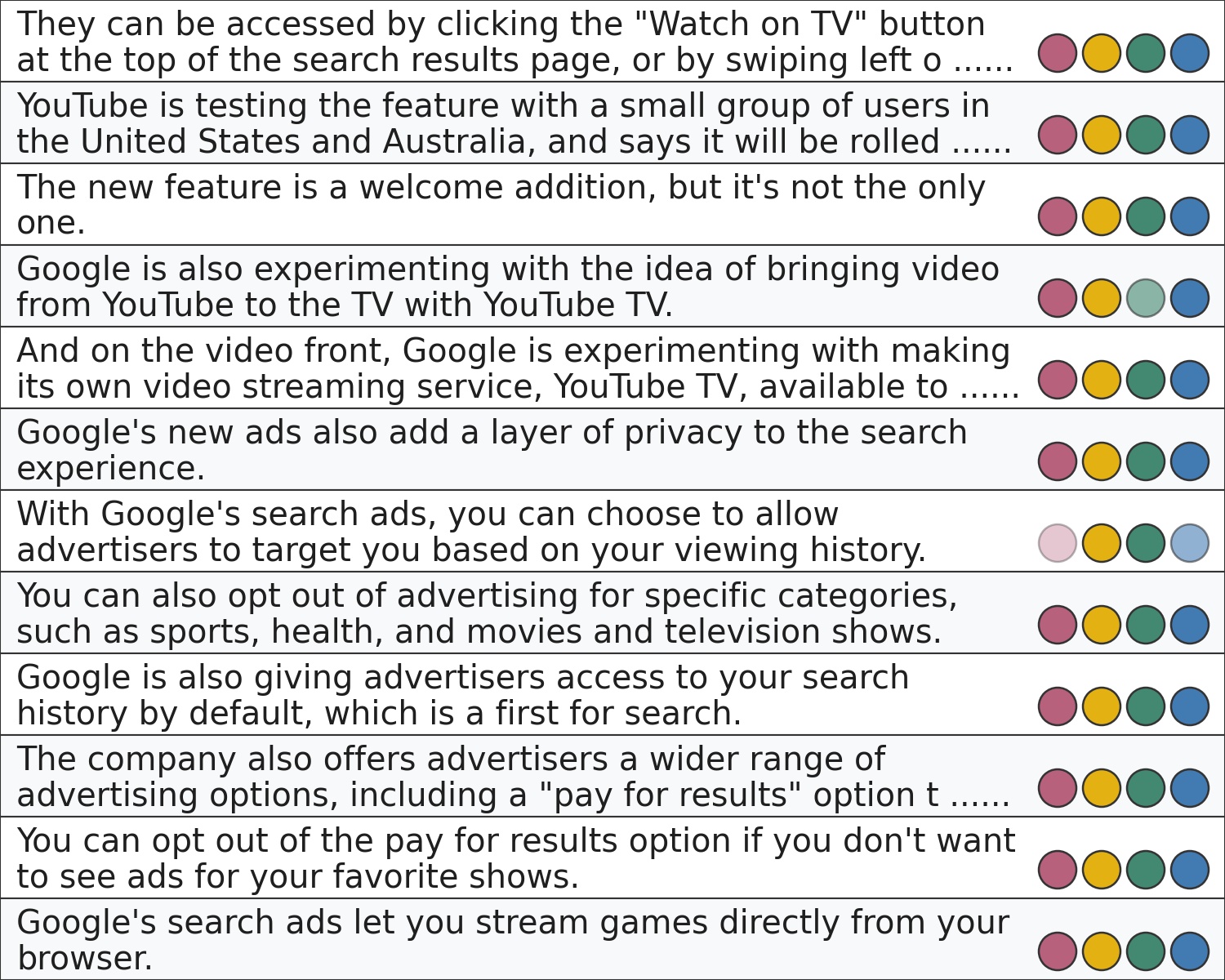}}
\\
\subfloat[SemStamp paraphrased by GPT.\label{fig:opt_c4_d}]{
    \includegraphics[width=0.32\textwidth]{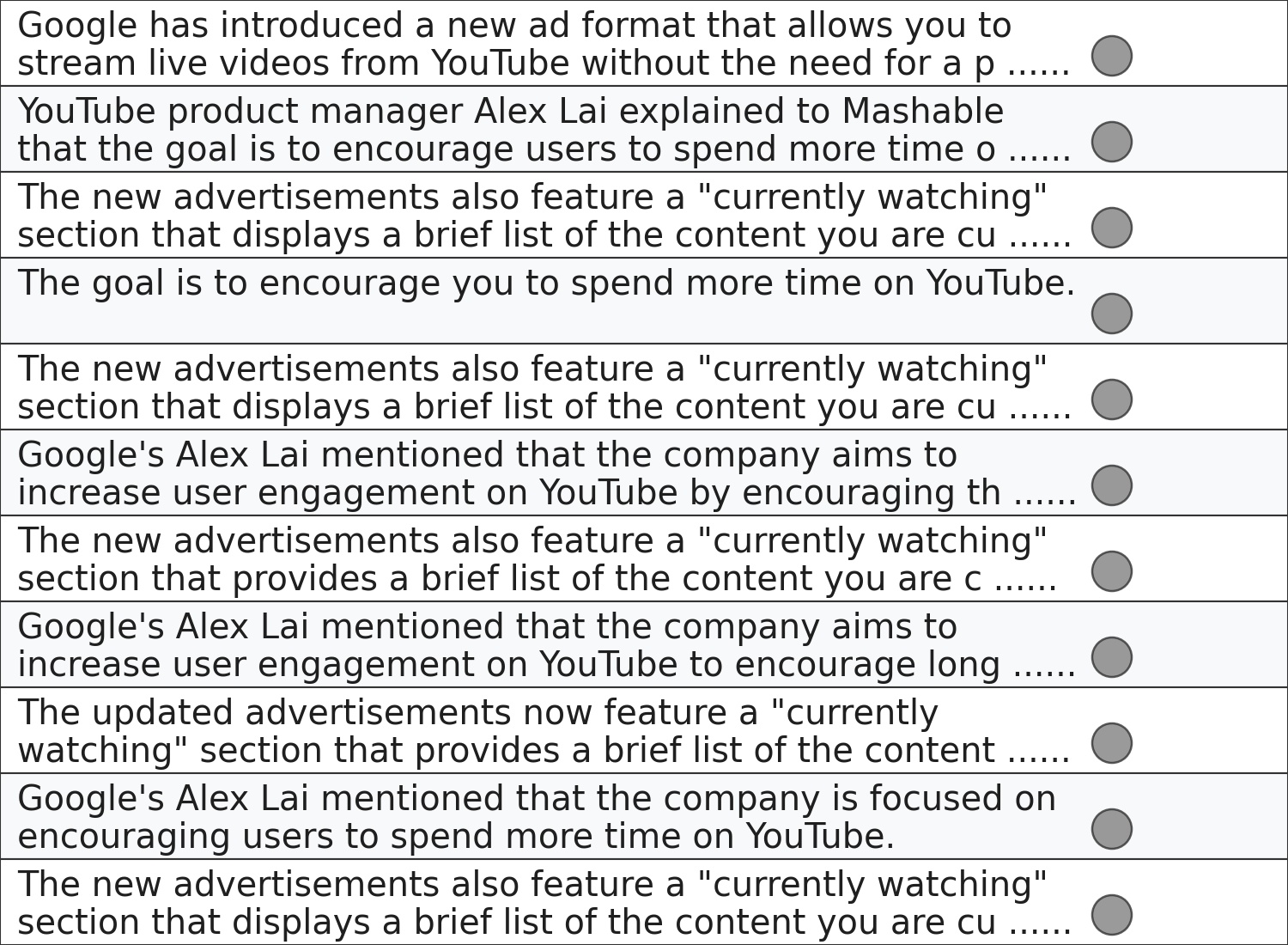}}
\subfloat[Online \method paraphrased by GPT.\label{fig:opt_c4_e}]{
    \includegraphics[width=0.32\textwidth]{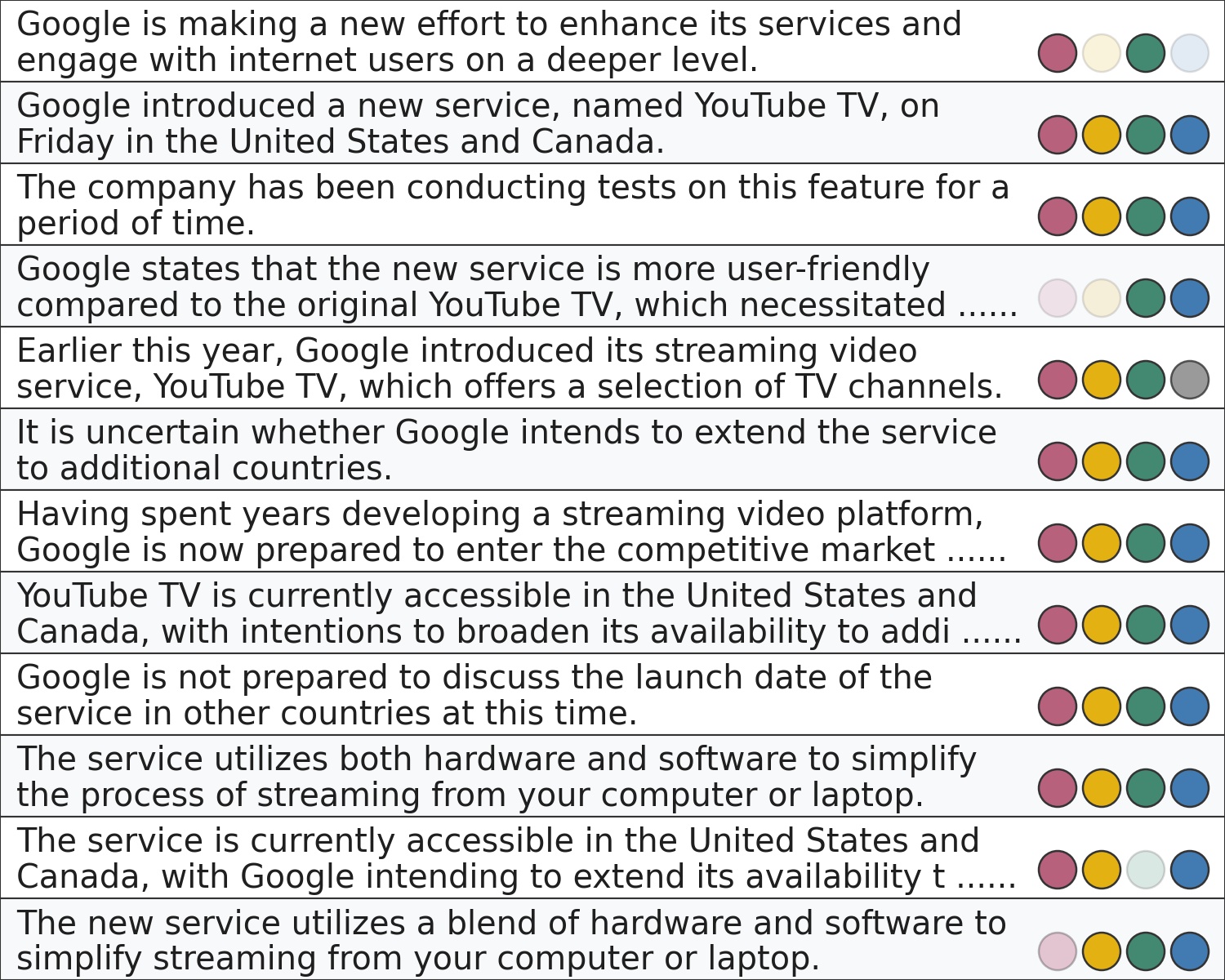}}
\subfloat[Offline \method paraphrased by GPT.\label{fig:opt_c4_f}]{
    \includegraphics[width=0.32\textwidth]{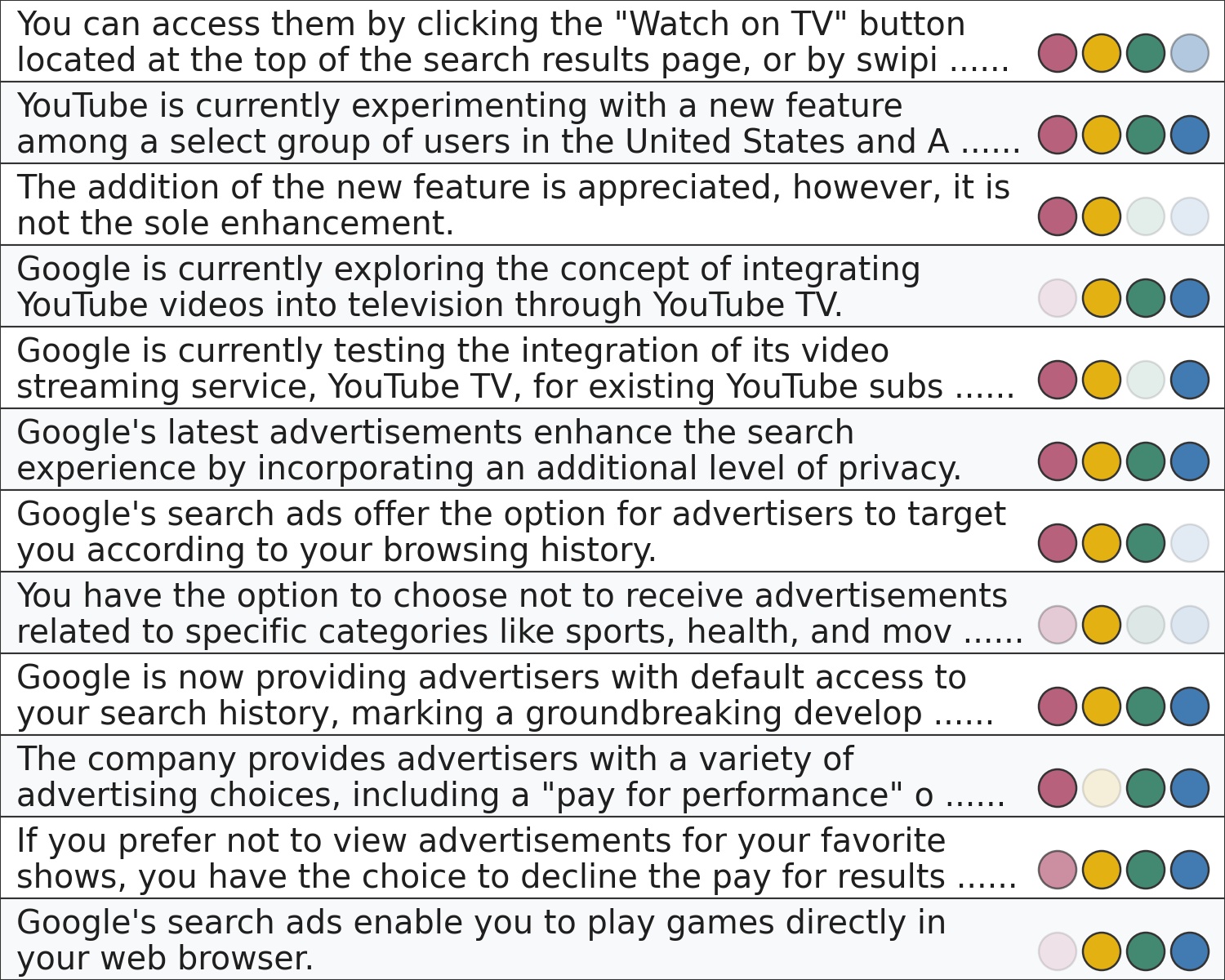}}
\caption{Cases from OPT-1.3B on C4 dataset.}
\label{fig:opt_c4_cases}
\end{figure*}
We illustrate various cases of SemStamp and \method across different backbones and datasets. Compared with the baseline, \method demonstrates superior text quality and robustness, producing more diverse sentences and retaining watermark evidence even after paraphrase attacks.

\section{Broader Impacts and Limitations}
As a pioneering study on distortion-free and robust semantic-level watermarking, \method paves the way for more secure and powerful solutions for copyright protection and content attribution in the AI industry. However, several limitations remain. (1) For simplicity, we consider only fixed random seeds in this work, although more flexible techniques such as sliding windows and pseudorandom error-correcting codes could enhance adaptability. (2) While the sampling-based generation paradigm avoids the sampling failures of prior methods and provides an indirect estimate of the next-sentence distribution—thereby reducing distortion—the required sampling budget still needs to be further reduced for practical deployment. (3) Current SWM methods, including \method, operate at the sentence level, which depends on consistent sentence segmentation between the generation and detection processes. Future research should explore more advanced SWM methods that eliminate this reliance and embed watermark evidence directly into the semantics of text, potentially enabling more sophisticated capabilities such as detecting whether specific ideas originate from LLMs.

\end{document}